\tikzset{
	vert/.style={circle,inner sep=1.5,fill=white,draw,minimum size=.3cm},
	edge/.style={color=black, thick},
	diredge/.style={->,>={Stealth[width=8pt,length=8pt]},color=black, thick},
	timelabel/.style={fill=white,font=\footnotesize, text centered},
	wave/.style={decorate,decoration={coil,aspect=0}},
	dirwave/.style={->, >={Stealth[width=8pt,length=8pt]},decorate,decoration={coil,aspect=0}},
	diredge2/.style={->,>={Stealth[width=8pt,length=8pt]}}
}
\crefname{claim}{Claim}{Claims}
\newcommand{\ie}{i.\,e.,\ }
\renewcommand{\st}{s.\,t.,\ }
\newcommand{\NP}{\textrm{NP}}
\newcommand{\problemdef}[3]{
	\begin{center}
		\begin{minipage}{0.95\textwidth}
			\noindent
			#1
			\vspace{5pt}\\
			\setlength{\tabcolsep}{3pt}
			\begin{tabularx}{\textwidth}{@{}lX@{}}
				\textbf{Input:}& #2 \\
				\textbf{Question:}& #3
			\end{tabularx}
		\end{minipage}
	\end{center}
}
\newcounter{guesscounter}
\newcommand{\deltaExactLong}{\textsc{Simple periodic Temporal Graph Realization}}
\newcommand{\deltaExact}{\textsc{Simple TGR}}
\title{Temporal graph realization from fastest paths} 
\author{Nina Klobas}{Department of Computer Science, Durham University, UK}{nina.klobas@durham.ac.uk}{ https://orcid.org/0000-0002-8024-5782}{}
\author{George B. Mertzios}{Department of Computer Science, Durham University, UK}{george.mertzios@durham.ac.uk}{https://orcid.org/0000-0001-7182-585X}{Supported by the EPSRC grant EP/P020372/1.}
\author{Hendrik~Molter}{Department of Computer Science, Ben-Gurion~University~of~the~Negev, 
	Beer-Sheva, 
	Israel}{molterh@post.bgu.ac.il}{https://orcid.org/0000-0002-4590-798X}{Supported by the ISF, grant No.~1456/18, and the ERC, grant number 949707.}
\author{Paul G. Spirakis}{Department of Computer Science, University of Liverpool, UK}{p.spirakis@liverpool.ac.uk}{https://orcid.org/0000-0001-5396-3749}{Supported by the EPSRC grant EP/P02002X/1.}
\authorrunning{Nina Klobas, George B. Mertzios, Hendrik Molter, and Paul G. Spirakis} 
\keywords{Temporal graph, periodic temporal labeling, fastest temporal path, graph realization, temporal connectivity, parameterized complexity.} 
\begin{document}
	\maketitle
	
	\begin{abstract}
		In this paper we initiate the study of the \emph{temporal graph realization} problem with respect to the fastest path durations among its vertices, 
		while we focus on periodic temporal graphs. 
		Given an $n \times n$ matrix $D$ and a $\Delta \in \mathbb{N}$, the goal is to construct a $\Delta$-periodic temporal graph with $n$ vertices 
		such that the duration of a \emph{fastest path} from $v_i$ to $v_j$ is equal to $D_{i,j}$, or to decide that such a temporal graph does not exist. 
		The variations of the problem on static graphs has been well studied and understood since the 1960's (e.g.\ [Erd\H{o}s and Gallai, 1960], [Hakimi and Yau, 1965]).
		
		As it turns out, the periodic temporal graph realization problem has a very different computational complexity behavior than its static (\ie non-temporal) counterpart. 
		First we show that the problem is NP-hard in general, but polynomial-time solvable if the so-called underlying graph is a tree.
		Building upon those results, we investigate its parameterized computational complexity with respect to structural parameters of the underlying static graph which measure the ``tree-likeness''. We prove a tight classification between such parameters that allow fixed-parameter tractability (FPT) 
		and those which imply W[1]-hardness. 
		We show that our problem is W[1]-hard when parameterized by the \emph{feedback vertex number} (and therefore also any smaller parameter such as \emph{treewidth}, \emph{degeneracy}, and \emph{cliquewidth}) of the underlying graph, while we show that it is in FPT when parameterized by the \emph{feedback edge number} (and therefore also any larger parameter such as \emph{maximum leaf number}) of the underlying graph.

	\end{abstract}
	

	\section{Introduction}\label{intro-sec}
	
	The (static) \emph{graph realization} problem with respect to a graph property $\mathcal{P}$ is to find a graph that satisfies property $\mathcal{P}$, or to decide that no such graph exists. 
	The motivation for graph realization problems stems both from ``verification'' and from network design applications in engineering. 
	In \emph{verification} applications, given the outcomes of some experimental measurements (resp.~some computations) on a network, 
	the aim is to (re)construct an input network which complies with them. 
	If such a reconstruction is not possible, this proves that the measurements are incorrect or implausible (resp.~that the algorithm which made the computations is incorrectly implemented). 
	One example of a graph realization (or reconstruction) problem is the recognition of probe interval graphs, in the context of the physical mapping of DNA, see~\cite{McMorris98,McConnellS02} and~\cite[Chapter 4]{GolumbicTrenk04}.
	In \emph{network design} applications, the goal is to design network topologies having a desired property~\cite{augustine2022distributed,grotschel1995design}.
	Analyzing the computational complexity of the graph realization problems for various natural and fundamental graph properties $\mathcal{P}$ requires a deep understanding of these properties.
	Among the most studied such parameters for graph realization 
	are constraints on the distances between vertices~\cite{barNoy2022GraphRealization,barNoy2021composed,hakimi1965distance,chung2001distance,bixby1988almost,culberson1989fast}, 
	on the vertex degrees~\cite{GolovachM17,gomory1961multi,hakimi1962realizability,Bar-NoyCPR20,erdos1960graphs}, 
	on the eccentricities~\cite{barNoy2020efficiently,hell2009linear,behzad1976eccentric,lesniak1975eccentric}, and on connectivity~\cite{fulkerson1960zero,frank1992augmenting,chen1966realization,frank1994connectivity,frank1970connectivity,gomory1961multi}, among others.

	In the simplest version of a (static) graph realization problem with respect to vertex distances, 
	we are given a symmetric $n \times n$ matrix $D$ and we are looking for an $n$-vertex undirected and unweighted graph $G$ such that $D_{i,j}$ equals the distance between vertices $v_i$ and $v_j$ in~$G$. This problem can be trivially solved in polynomial time in two steps~\cite{hakimi1965distance}: First, we build the graph $G=(V,E)$ such that $v_i v_j \in E$ if and only if $D_{i,j}=1$. Second, from this graph $G$ we compute the matrix $D_G$ which captures the shortest distances for all pairs of vertices. If $D_G = D$ then $G$ is the desired graph, otherwise there is no graph having $D$ as its distance matrix. 
	Non-trivial variations of this problem have been extensively studied, such as for weighted graphs~\cite{hakimi1965distance,Patrinos-Hakimi-72}, as well as for cases where the realizing graph has to belong to a specific graph family~\cite{hakimi1965distance, barNoy2021composed}. Other variations of the problem include the cases where every entry of the input matrix $D$ may contain a range of consecutive permissible values~\cite{barNoy2021composed,Rubei16,Tamura93}, or even an arbitrary set of acceptable values~\cite{barNoy2022GraphRealization} for the distance between the corresponding two vertices. 
	
	In this paper we make the first attempt to understand the complexity of the graph realization problem with respect to vertex distances in the context of \emph{temporal graphs}, \ie of graphs whose \emph{topology changes over time}. 
	
	\begin{definition}[temporal graph~\cite{KKK00}]
		\label{temp-graph-def} A \emph{temporal graph} is a pair $(G,\lambda)$,
		where $G=(V,E)$ is an underlying (static) graph and $\lambda :E\rightarrow 2^\mathbb{N}$ is a \emph{time-labeling} function which assigns to every edge of $G$ a set of discrete time-labels.
	\end{definition}
	
	Here, whenever $t \in \lambda(e)$, we say that the edge $e$ is \emph{active} or \emph{available} at time $t$. In the context of temporal graphs, where the notion of vertex adjacency is time-dependent, the notions of path and distance also need to be redefined. The most natural temporal analogue of a path is that of a \emph{temporal} (or \emph{time-dependent}) path, which is motivated by the fact that, due to
	causality, entities and information in temporal graphs can ``flow'' only along sequences of
	edges whose time-labels are strictly increasing.
	
	\begin{definition}[fastest temporal path] \label{def:temporalPath+Duration}
		Let $(G,\lambda)$ be a temporal graph. A \emph{temporal path} 
		in $(G,\lambda)$ is a sequence $(e_1,t_1),(e_2,t_2),\ldots,(e_k,t_k)$, 
		where $P=(e_1,\ldots,e_k)$ is a path in the underlying static graph $G$, 
		$t_i\in \lambda(e_i)$ for every $i=1,\ldots,k$, and $t_1<t_2<\ldots<t_k$. 
		The \emph{duration} of this temporal path 
		is $t_k - t_1 + 1$.
		A \emph{fastest} temporal path from a vertex $u$ to a vertex $v$ in $(G,\lambda)$ is a temporal path from $u$ to $v$ with the smallest duration.
		The duration of the \emph{fastest} temporal path from $u$ to $v$ is denoted by $d(u,v)$.
	\end{definition}
	

	In this paper we consider \emph{periodic} temporal graphs, \ie temporal graphs in which the temporal availability of each edge of the underlying graph is periodic. 
	Many natural and technological systems exhibit a periodic temporal behavior. For example, in railway networks an edge is present at a time step $t$ if and
	only if a train is scheduled to run on the respective rail segment at time $t$~\cite{Arrighi2023Multi}. 
	Similarly, a satellite, which makes pre-determined periodic movements, can establish a communication link (\ie a temporal edge) with another satellite whenever they are sufficiently close to each other; the existence of these communication links is also periodic. 
	In a railway (resp.~satellite) network, a fastest temporal path from $u$ to $v$ represents the fastest railway connection between two stations 
	(resp.~the quickest communication delay between two moving satellites). 
	Furthermore, periodicity appears also in (the otherwise quite complex) social networks which describe the dynamics of people meeting~\cite{snapnets,sapiezynski2015tracking}, as every person individually follows mostly a weekly routine.

	Expanding the work on periodic temporal graphs have already been studied 
	(see~\cite[Class 8]{casteigts2012time} and~\cite{Arrighi2023Multi,ErlebachS20,morawietz2021timecop,morawietz2020timecop}), 
	our study represents the first attempt to understand the complexity of a graph realization problem in the context of temporal graphs. 
	Therefore, we focus in this paper on the most fundamental case, where all edges have the same period $\Delta$ 
	(while in the more general case, each edge $e$ in the underlying graph has a period $\Delta_e$).
	As it turns out, the periodic temporal graph realization problem with respect to a given $n \times n$ matrix $D$ of the fastest duration times has a very different computational complexity behavior than the classic graph realization problem with respect to shortest path distances in static graphs.




	Formally, let $G=(V,E)$ and $\Delta\in \mathbb{N}$, and let $\lambda: E \rightarrow \{1,2,\ldots,\Delta\}$ be an edge-labeling function that assigns to every edge of $G$ exactly one of the labels from $\{1,\ldots,\Delta\}$. 
	Then we denote by $(G,\lambda,\Delta)$ the \emph{$\Delta$-periodic temporal graph} $(G,L)$, where for every edge $e\in E$ we have $L(e)=\{i\Delta + x : i\geq 0, x\in \lambda(e)\}$. 
	In this case we call $\lambda$ a \emph{$\Delta$-periodic labeling} of~$G$; see \cref{fig:periodic-example} for an illustration. 
	When it is clear from the context, we drop $\Delta$ from the notation and 
	we denote the ($\Delta$-periodic) temporal graph by $(G,\lambda)$.
	Given a duration matrix~$D$, it is easy to observe that, similarly to the static case, if $D_{i,j}=1$ then $v_i$ and $v_j$ must be connected by an edge. We call the graph defined by these edges the \emph{underlying graph} of $D$.

	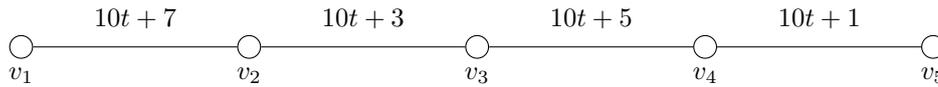
\begin{figure}[t]
		\centering
		\begin{tikzpicture}[xscale=1.5]
			\node[vert,label=below:$v_1$] (1) at (1,0) {};
			\node[vert,label=below:$v_2$] (2) at (3,0) {};
			\node[vert,label=below:$v_3$] (3) at (5,0) {};
			\node[vert,label=below:$v_4$] (4) at (7,0) {};
			\node[vert,label=below:$v_5$] (5) at (9,0) {};
			\draw (1) -- node[label=above:$10t+7$] {} (2) -- node[label=above:$10t+3$] {}  (3) -- node[label=above:$10t+5$] {} (4) -- node[label=above:$10t+1$] {} (5);	
		\end{tikzpicture}
		\caption{An example of a $\Delta$-periodic temporal graph $(G,\lambda,\Delta)$, where $\Delta = 10$ and the 10-periodic labeling $\lambda: E \rightarrow \{1,2,\ldots,10\}$ is as follows: $\lambda(v_1 v_2)=7$, $\lambda(v_2 v_3)=3$, $\lambda(v_3 v_4)=5$, and $\lambda(v_4 v_5)=1$. 
			Here, the fastest temporal path from $v_1$ to $v_2$ traverses the first edge $v_1v_2$ at time $7$, second edge $v_2v_3$ a time $13$, third edge $v_3v_4$ at time $15$ and the last edge $v_4v_5$ at time $21$.
			This results in the total duration of $21 - 7 + 1 = 15$ for the fastest temporal path from $v_1$ to $v_5$.
			\label{fig:periodic-example}}
	\end{figure}
	
	\subparagraph{Our contribution.}
	We initiate the study of naturally motivated graph realization problems in the temporal setting. 
	Our target is not to model unreliable communication, but instead to \emph{verify} that particular measurements regarding fastest temporal paths in a periodic temporal graph are plausible (\ie ``realizable''). 
	To this end, we introduce and investigate the following problem, capturing the setting described above:

	\problemdef{\deltaExactLong\ (\deltaExact)}
	{An integer $n \times n$ matrix $D$, a positive integer $\Delta$.}
	{Does there exist a graph $G=(V,E)$ with vertices $\{v_1,\ldots,v_{n}\}$ 
		and a $\Delta$-periodic labeling $\lambda: E \rightarrow \{1,2,\ldots,\Delta\}$ such that, 
		for every $i,j$, the duration of the fastest temporal path from $v_i$ to $v_j$ in the $\Delta$-periodic temporal graph $(G,\lambda,\Delta)$ is $D_{i,j}$?}

	We focus on exact algorithms. We start by showing NP-hardness of the problem (\cref{thm:NPhardness}), even if $\Delta$ is a small constant. To establish a baseline for tractability, we show that \deltaExact\ is polynomial-time solvable if the underlying graph is a tree (\cref{thm:deltaExact-PolyTimeTrees}).
	
	Building upon these initial results, we explore the possibilities to generalize our polynomial-time algorithm using the \emph{distance-from-triviality} parameterization paradigm~\cite{FJR13,GHN04}. That is, we investigate the parameterized computational complexity of \deltaExact\ with respect to structural parameters of the underlying graph that measure its ``tree-likeness''.
	
	We obtain the following results. We show that \deltaExact\ is W[1]-hard when parameterized by the feedback vertex number of the underlying graph (\cref{thm:W1wrtFVS}). 
	To this end, we first give a reduction from \textsc{Multicolored Clique} parameterized by the number of colors~\cite{fellows2009multipleinterval} to a variant of \deltaExact\ where the period $\Delta$ is infinite, that is, when the labeling is non-periodic. We use a special gadget (the ``infinity'' gadget) which allows us to transfer the result to a finite period $\Delta$. The latter construction is independent from the particular reduction we use, and can hence be treated as a reduction from the non-periodic to the periodic setting.
	Note that our parameterized hardness result rule out fixed-parameter tractability for several popular graph parameters such as \emph{treewidth}, \emph{degeneracy}, \emph{cliquewidth}, 
	\emph{distance to chordal graphs}, and \emph{distance to outerplanar graphs}.
	
	We complement this hardness result by showing that \deltaExact\ is fixed-parameter tractable (FPT) with respect to the \emph{feedback edge number} $k$ of the underlying graph (\cref{thm:FPTwrtFES}). 
	This result also implies an FPT algorithm for any larger parameter, such as the \emph{maximum leaf number}. 
	A similar phenomenon of getting W[1]-hardness with respect to the feedback vertex number, while getting an FPT algorithm with respect to the feedback edge number, has been observed only in a few other temporal graph problems related to the connectivity between two vertices~\cite{casteigts2021finding,FMNR22a,EMM22}.
	
	Our FPT algorithm works as follows on a high level. 
	First we distinguish $O(k^2)$ vertices which we call ``important vertices''. 
	Then, we guess the fastest temporal paths for each pair of these important vertices; as we prove, the number of choices we have for all these guesses is upper bounded by a function of~$k$. 
	Then we also need to make several further guesses (again using a bounded number of choices), which altogether leads us to specify a small (\ie bounded by a function of $k$) number of different configurations for the fastest paths between \emph{all pairs} of vertices. For each of these configurations, we must then make sure that the labels of our solution will not allow any other temporal path from a vertex $v_i$ to a vertex $v_j$ have a \emph{strictly smaller} duration than~$D_{i,j}$.
	This naturally leads us to build one Integer Linear Program (ILP) for each of these configurations. We manage to formulate all these ILPs by having a number of variables that is upper-bounded by a function of $k$. Finally we use Lenstra's Theorem~\cite{Lenstra1983Integer} to solve each of these ILPs in FPT time. At the end, our initial instance is a \textsc{Yes}-instance if and only if at least one of these ILPs is feasible.

	The above results provide a fairly complete picture of the parameterized computational complexity of \deltaExact\ with respect to structural parameters of the underlying graph which measure ``tree-likeness''. To obtain our results, we prove several properties of fastest temporal paths, which may be of independent interest.

	\subparagraph{Related work.} Graph realization problems on static graphs have been studied since the 1960s. We provide an overview of the literature in the introduction. 
	To the best of our knowledge, we are the first to consider graph realization problems in the temporal setting. 
	Very recently, Erlebach et al.~\cite{ErlebachMW-SAND24} have built upon our results and, among others, studied the case where edges might appear more than once in each period. 
	Many other connectivity-related problems have been studied in the temporal setting~\cite{Mertzios-transitivity21,Akrida-explorer-21,enright2021deleting,MolterRZ21,klobas2023interference,deligkas2022optimizing,erlebach2021temporal,Flu+19a,Zsc+19,CasteigtsCS22,FuchsleMNR22}, most of which are much more complex and computationally harder than their non-temporal counterparts, and some of which do not even have a non-temporal counterpart.
	
	There are some problem settings that share similarities with ours, which we discuss now in more detail.
	
	Several problems have been studied where the goal is to assign labels to (sets of) edges of a given static graph in order to achieve certain connectivity-related properties~\cite{KlobasMMS22,MertziosMS19,akrida2017complexity,enright2021assigning}. The main difference to our problem setting is that in the mentioned works, the input is a graph and the sought labeling is not periodic. Furthermore, the investigated properties are temporal connectivity between all vertices~\cite{KlobasMMS22,MertziosMS19,akrida2017complexity}, temporal connectivity among a subset of vertices~\cite{KlobasMMS22}, or reducing reachability among the vertices~\cite{enright2021assigning}. In all these cases, the duration of the temporal paths has not been considered.


	Finally, there are many models for dynamic networks in the context of distributed computing~\cite{Kuhn2011Dynamic}. 
	These models have some similarity to temporal graphs, in the sense that in both cases the edges appear and disappear over time. 
	However, there are notable differences. For example, one important assumption in the distributed setting 
	can be that the edge changes are adversarial or random (while obeying some constraints such as connectivity), 
	and therefore they are not necessarily known in advance~\cite{Kuhn2011Dynamic}.

	\subparagraph{Preliminaries and notation.}
	We already introduced the most central notion and concepts. There are some additional definitions we need, to present our proofs and results which we give in the following. 
	
	An interval in $\mathbb N$ from $a$ to $b$ is denoted by $[a,b] = \{ i\in \mathbb N  :  a \leq i \leq b\}$; similarly, $[a] = [1,a]$.
	An undirected graph~$G=(V,E)$ consists of a set~$V$ of vertices 
	and a set~$E \subseteq V \times V$ of edges.
	For a graph~$G$, we also denote by~$V(G)$ and~$E(G)$ the vertex and edge set of~$G$, respectively.
	We denote an edge $e \in E$ between vertices $u,v \in V$ as a set $e=\{u,v\}$.
	For the sake of simplicity of the representation, an edge $e$ is sometimes also denoted by $uv$. 
	A path~$P$ in $G$ is a subgraph of $G$ with vertex set~$V(P)=\{v_1,\dots,v_k\}$ and edge set~$E(P)=\{\{v_i,v_{i+1}\} :  1\leq i<k\}$
	(we often represent path~$P$ by the tuple~$(v_1,v_2,\dots,v_k)$).
	
	Let $v_1,v_2,\ldots,v_n$ be the $n$ vertices of the graph $G$. 
	For simplicity of the presentation (and with a slight abuse of notation) we refer during the paper to the entry $D_{i,j}$ of the matrix $D$ as $D_{a,b}$, where $a=v_i$ and $b=v_j$. 
	That is, we put as indices of the matrix $D$ the corresponding vertices of $G$ whenever it is clear from the context. 
	
	Let $P=(u=v_1, v_2, \dots, v_p=v)$ be a path from $u$ to $v$ in $G$. 
	Recall that, in our paper, every edge has exactly one time label in every period of $\Delta$ consecutive time steps.
	Therefore, as we are only interested in the fastest duration of temporal paths, 
	many times we refer to $(P,\lambda,\Delta)$ as any of the temporal paths from $u=v_1$ to $v=v_p$ along the edges of $P$, which starts at the edge $v_1 v_2$ at time $\lambda(v_1 v_2) + c \Delta$, for some $c\in \mathbb{N}$, and then sequentially visits the rest of the edges of $P$ as early as possible. 
	We denote by $d(P,\lambda,\Delta)$, or simply by $d(P,\lambda)$ when $\Delta$ is clear from the context, the duration of any of the temporal paths $(P,\lambda,\Delta)$; note that they all have the same duration. 
	Whenever we use the term \emph{label of an edge} $e$, we actually mean $\lambda(e) \in [\Delta]$. Note that for a given path $(P, \lambda, \Delta)$ that passes through the edge $e$, the label used by $P$ at that edge is $\lambda(e) + c \Delta$, for some $c \geq 0$.
	Many times we also refer to a path $P=(u=v_1, v_2, \dots, v_p=v)$ from $u$ to $v$ in $G$,
	as a temporal path in $(G,\lambda,\Delta)$,
	where we actually mean that $(P,\lambda,\Delta)$ is a temporal path with $P$ as its underlying (static) path.

	We remark that a fastest path between two vertices in a temporal graph can be computed in polynomial time~\cite{xuan_computing_2003,Wu2016Efficient}.
	Hence, given a $\Delta$-periodic temporal graph $(G,\lambda,\Delta)$, we can compute in polynomial-time  the matrix $D$
	which consists of durations of fastest temporal paths among all pairs of vertices in $(G,\lambda,\Delta)$.
	
	We use standard terminology from parameterized complexity theory~\cite{DF13,FG06,Cyg+15}.
	Let~$\Sigma$ denote a
	finite alphabet.
	A parameterized problem~$L\subseteq \{(x,k)\in \Sigma^*\times \mathbb N_0\}$ is a subset of all instances~$(x,k)$ from~$\Sigma^*\times \mathbb N_0$,
	where~$k$ denotes the \emph{parameter}.
	A parameterized problem~$L$ is 
	FPT (\emph{fixed-parameter tractable}) if there is an algorithm that decides every instance~$(x,k)$ for~$L$ in~$f(k)\cdot |x|^{O(1)}$ time,
	where~$f$ is any computable function only depending on the parameter.
	If a parameterized problem $L$ is W[1]-hard, then it is presumably not
	fixed-parameter tractable.

	\subparagraph{Organization of the paper.}
	In \cref{sec:hardness} we present our hardness results, first the NP-hardness in \cref{sec:nphardness} and then the parameterized hardness in \cref{sec:w1hardness}. In \cref{sec:algos} we present our algorithmic results. First we give in \cref{sec:treealgo} a polynomial-time algorithm for the case where the underlying graph is a tree. In \cref{sec:FPT} we generalize this and present our FPT result, which is the main result in the paper. Finally, we conclude in \cref{sec:conclusion} and discuss some future work directions.
	
	\section{Hardness results for \deltaExact}\label{sec:hardness}
	In this section we present our main computational hardness results. In \cref{sec:nphardness} we show that \deltaExact\ is NP-hard even for constant $\Delta$. In \cref{sec:w1hardness} we investigate the parameterized computational hardness of \deltaExact\ with respect to structural parameters of the underlying graph. We show that \deltaExact\ is W[1]-hard when parameterized by the feedback vertex number of the underlying graph.
	
	\subsection{NP-hardness of \deltaExact}\label{sec:nphardness}
	In this section we prove that in general it is NP-hard to determine a $\Delta$-periodic temporal graph $(G,\lambda)$ respecting a duration matrix $D$,
	even if $\Delta$ is a small constant.
	
	\begin{theorem}\label{thm:NPhardness}
		\deltaExact\ is \NP-hard for all $\Delta \geq 3$.
	\end{theorem}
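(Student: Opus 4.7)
My plan is to reduce from an NP-hard constraint satisfaction problem such as 3-SAT or 3-Coloring. I would first establish the structural constraints that the matrix $D$ imposes on any realizing graph: (i)~$D_{i,j}=1$ if and only if the underlying graph contains the edge $v_iv_j$, and then $\lambda(v_iv_j)$ may be any label in $[1,\Delta]$; (ii)~for a 2-edge temporal path $u\mbox{-}w\mbox{-}v$ with $\lambda(uw)=a$ and $\lambda(wv)=b$, the $u$-to-$v$ duration via $w$ equals $b-a+1$ if $a<b$ and $\Delta+b-a+1$ if $a\ge b$, with a symmetric statement in the reverse direction. Together these pin down concrete arithmetic relations between edge labels and fastest-path durations, which serve as the atomic building blocks for gadgets.

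Then I would build two types of gadgets. A \emph{variable gadget} is a small subgraph whose $D$-constraints restricted to it admit exactly two valid labelings, corresponding to the two truth values of a Boolean variable; concretely, this can be realized by a handful of vertices where a designated edge is forced by 2-edge duration constraints to take label $1$ or label $2$. A \emph{clause gadget} is a subgraph whose $D$-constraints are realizable if and only if at least one of its three literal edges is in the ``true'' configuration. The two gadget types are glued together through designated interface vertices, and the label of each variable edge is propagated to its literal copies by imposing equalities via the 2-edge-path rule from item~(ii). For $\Delta\ge 3$ one draws the decisive labels from $\{1,2,3\}$; when $\Delta>3$, the extra labels $\{4,\dots,\Delta\}$ are used only on padding edges whose existence is needed to meet additional $D$-entries, and they can be arranged so as not to create short shortcuts.

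The main obstacle, and the real technical content of the proof, will be to rule out \emph{unintended} temporal paths that achieve a duration strictly smaller than the prescribed $D_{i,j}$ between some pair $v_i,v_j$. Since the labeling is $\Delta$-periodic, a path that looks slow inside a single period can be accelerated by wrapping across a period boundary, and conversely a long detour through remote gadgets may turn out to be surprisingly fast. To control this I would make the construction highly modular: every pair of gadgets communicates only through a small interface, and the distance in the underlying graph between non-interface parts of distinct gadgets is padded, if necessary by chains of degree-2 vertices, so that any cross-gadget temporal path is strictly longer than the intended one. Verifying this non-interference property--both internally within each gadget and globally across the entire construction--is where most of the proof effort concentrates. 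Correctness then follows in the usual way: $D$ is realizable if and only if the source 3-SAT (or 3-Coloring) instance is a \textsc{Yes}-instance, yielding NP-hardness for every $\Delta\ge 3$.
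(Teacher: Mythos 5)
Your outline captures the standard shape of such a reduction---variable gadgets, clause gadgets, control of unintended paths---but it stops precisely at the point where the proof actually lives, and some of the concrete design choices you tentatively commit to are at odds with what is needed. Two specific gaps:

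First, the choice of source problem. You propose plain 3-SAT or 3-Coloring. The paper reduces from NAE 3-SAT, and this is not cosmetic. The cleanest way to encode a clause $c$ in this setting is through a pair of directed duration constraints involving a super vertex $v$: requiring $d(c,v)=2$ forces at least one literal edge pair $(x^T c, x^T v)$ to have consecutive labels (making that literal \textsc{true}), while requiring $d(v,c)=\Delta-1$ forces the path back to route through a \emph{different} literal whose labels are set the opposite way (making that literal \textsc{false}). The two directions of the duration constraint encode exactly the two-sided NAE condition ``at least one true and at least one false.'' With plain 3-SAT the one-sided ``at least one true'' has no natural counterpart in the reverse direction, and you would have to invent extra machinery to avoid imposing a spurious constraint there. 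You should either switch to a symmetric CSP or explain what replaces the $d(v,c)$ constraint.

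Second, the ``modular gadgets separated by padding chains'' strategy does not match how the difficulty is actually resolved, and it is unclear it can be made to work. Because $D$ must be fully specified and \emph{exactly} realized for \emph{every} vertex pair---including pairs that live in distinct gadgets---you cannot simply pad gadgets apart and declare cross-gadget durations ``long enough''; each such entry must be a specific number that some labeling achieves as a fastest-path duration, and you must also prove no faster temporal path exists. Long chains of degree-2 vertices multiply the number of such entries to verify and, in the periodic setting, chains do not automatically make durations large (a cleverly phased chain can be traversed in roughly its length). The paper sidesteps this by doing the opposite of modular isolation: it places a dense pattern of edges among all variable vertices $x_i^T, x_i^F$ and a single super vertex $v$ adjacent to all of them, which makes the durations between unrelated gadget pieces short, uniform, and easy to compute in closed form. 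If you retain the sparse/padded design you will need to exhibit concrete gadgets and then actually carry out the all-pairs verification you flag as ``the real technical content''---that step is the proof, and right now it is absent.
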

	
	\begin{proof}
		We present a polynomial-time reduction from the NP-hard problem NAE 3-SAT~\cite{Schaefer1978complexity}. Here we are given a formula $\phi$ that is a conjunction of so-called NAE (not-all-equal) clauses, where each clause contains exactly 3 literals (with three distinct variables).
		A NAE clause evaluates to \textsc{true} if and only if not all of its literals are equal, that is, at least one literal evaluates to \textsc{true} and at least one literal evaluates to \textsc{false}.
		We are asked whether $\phi$ admits a satisfying assignment.

		Given an instance $\phi$ of NAE 3-SAT, we construct an instance $(D,\Delta)$ of \deltaExact\ as follows.
		
		We start by describing the vertex set of the underlying graph $G$ of $D$.
		\begin{itemize}
			\item For each variable $x_i$ in $\phi$, we create three variable vertices $x_i, x_i^T, x_i^F$.
			\item For each clause $c$ in $\phi$, we create one clause vertex $c$.
			\item We add one additional super vertex $v$.
		\end{itemize}
		Next, we describe the edge set of $G$.
		\begin{itemize}
			\item For each variable $x_i$ in $\phi$ we add the following five edges: 
			$\{x_i, x_i^T\}$, $\{x_i, x_i^F\}$, $\{x_i^T, x_i^F\}$, $\{x_i^T, v\}$, and $\{x_i^F,v\}$.
			\item For each pair of variables $x_i,x_j$ in $\phi$ with $i \neq j$ we add the following four edges: 
			$\{x_i^T, x_j^T\}$, $\{x_i^T,x_j^F\}$, $\{x_i^F,x_j^T\}$, and $\{x_i^F,x_j^F\}$.
			\item For each clause $c$ in $\phi$ we add one edge for each literal. Let $x_i$ appear in $c$. If $x_i$ appears non-negated in $c$ we add edge $\{c, x_i^T\}$. 
			If $x_i$ appears negated in $c$ we add edge $\{c, x_i^F\}$.
		\end{itemize}
		This finishes the construction of $G$.
		For an illustration see~\cref{fig:NP-example}.

		\begin{figure}[t]
			\noindent
			\makebox[\textwidth]{
				\centering
				\includegraphics{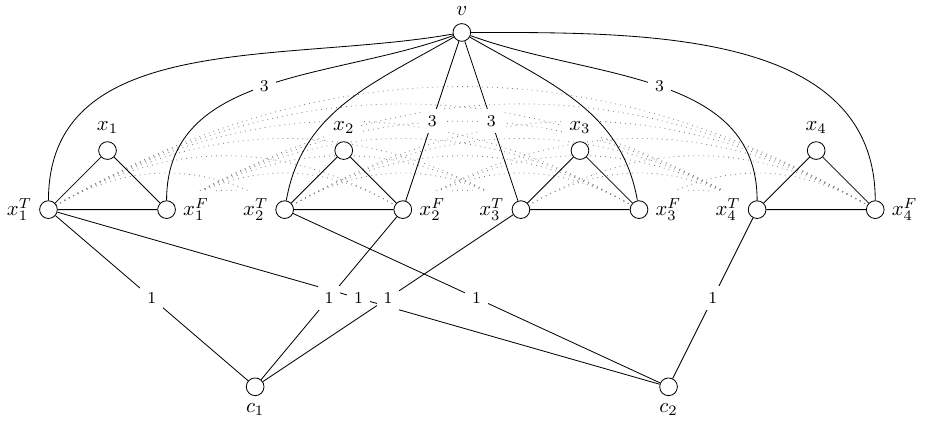}
			}
			\caption{Illustration of the temporal graph $(G,\lambda)$ from the NP-hardness reduction, 
				where the NAE 3-SAT formula $\phi$ is of the form $\phi = \text{NAE}(x_1, \overline{x}_2, x_3) \wedge \text{NAE}(x_1, x_2, x_4)$.
				To improve the readability, we draw edges between vertices $x_i^T$ and $x_j^F$ (where $i \neq j$) with gray dotted lines.
				Presented is the labeling of $G$ corresponding to the assignment $x_1=x_2=\textsc{true}$ and $x_3,x_4=\textsc{false}$,
				where all unlabeled edges get the label $2$.
			}\label{fig:NP-example}
		\end{figure}
		We set $\Delta$ to some constant larger than two, that is, $\Delta\ge 3$. Next, we specify the durations in the matrix $D$ between all vertex pairs.
		For the sake of simplicity we write $D_{u,v}$ as $d(u,v)$,
		where $u,v$ are two vertices of $G$. 
		We start by setting the value of $d(u,v) = 1$ where $u$ and $v$ are two adjacent vertices in $G$.
		\begin{itemize}
			\item For each variable $x_i$ in $\phi$ and the super vertex $v$
			we specify the following durations:
			$d(x_i,v)=2$ and $d(v,x_i)= \Delta $. 
			\item For each clause $c$ in $\phi$ and the super vertex $v$
			we specify the following durations:
			$d(c,v)=2$ and $d(v,c)= \Delta - 1$.
			\item Let $x_i$ be a variable that appears in clause $c$, then  we specify the following durations:
			$d(c,x_i)=2$ and $d(x_i,c)=\Delta$.
			If $x_i$ appears non-negated in $c$ we specify the following durations:
			$d(c,x_i^F)=2$ and $d(x_i^F,c)=\Delta$.
			If $x_i$ appears negated in $c$ we specify the following duratios:
			$d(c,x_i^T)=2$ and $d(x_i^T,c)=\Delta$.
			\item Let $x_i$ be a variable that does \emph{not} appear in clause $c$, then we specify the following duratios:
			$d(x_i,c)=2 \Delta$, $d(c,x_i)=\Delta + 2$
			and
			$d(c,x_i^T)=d(c,x_i^F)=2$, $d(x_i^T,c)=d(x_i^F,c)=\Delta$. 
			\item For each pair of variables $x_i \neq x_j$ in $\phi$ we specify the following duratios:
			$d(x_i,x_j)=2\Delta +1$ and
			$d(x_i,x_j^T)=d(x_i,x_j^F)=\Delta + 1$.
			\item For each pair of clauses $c_i \neq c_j$ in $\phi$ we specify the following duratios:
			$d(c_i,c_j)= \Delta + 1$.
		\end{itemize}
		This finishes the construction of the instance $(D,\Delta)$ of \deltaExact\, which can clearly be done in polynomial time. In the remainder we show that $(D,\Delta)$ is a \textsc{Yes}-instance of \deltaExact\ if and only if NAE 3-SAT formula $\phi$ is satisfiable.
		
		$(\Rightarrow)$: Assume the constructed instance $(D,\Delta)$ of \deltaExact\ is a \textsc{Yes}-instance. 
		Then there exist a label $\lambda(e)$ for each edge $e\in E(G)$ such that for each vertex pair $u,w$ in the temporal graph $(G,\lambda,\Delta)$ we have that a fastest temporal path from $u$ to $w$ is of duration $d(u,w)$. 
		
		We construct a satisfying assignment for $\phi$ as follows. For each variable $x_i$, 
		if $\lambda(\{x_i, x_i^T\})=\lambda(\{x_i^T, v\})$, then we set $x_i$ to \textsc{true}, otherwise we set $x_i$ to \textsc{false}.
		
		To show that this yields a satisfying assignment, we need to prove some properties of the labeling $\lambda$.
		First, observe that adding an integer $t$ to all time labels does not change the duration of any temporal paths. 
		Second, observe that if for two vertices $u,w$ we have that $d(u,w)$ equals the distance between $u$ and $w$ in $G$
		(\ie the duration of the fastest temporal path from $u$ to $w$ equals the distance of the shortest path between $u$ and $w$), 
		then there is a shortest path $P$ from $u$ to $w$ in $G$ such that 
		the labeling $\lambda$ assigns consecutive time labels to the edges of $P$.

		Let $\lambda(\{x_i, x_i^T\})=t$ and $\lambda(\{x_i, x_i^F\})=t'$, for an arbitrary variable $x_i$. 
		If both $\lambda(\{x_i^T, v\})\neq t+1$ and $\lambda(\{x_i^F, v\})\neq t'+1$, then $d(x_i,v)>2$, which is a contradiction. 
		Thus, for every variable $x_i$, we have that $\lambda(\{x_i^T, v\})= t+1$ or $\lambda(\{x_i^F, v\})= t'+1$ (or both). 
		In particular, this means that if $\lambda(\{x_i, x_i^F\})=\lambda(\{x_i^F, v\})$, then we set $x_i$ to \textsc{false}, since in this case $\lambda(\{x_i, x_i^T\})\neq\lambda(\{x_i^T, v\})$.
		
		
		Now assume for a contradiction that the described assignment is not satisfying. Then there exists a clause $c$ that is not satisfied. 
		Suppose that $x_1, x_2, x_3$ are three variables that appear in $c$.
		Recall that we require $d(c,v)=2$ and $d(v,c)=\Delta -1$. 
		The fact that $d(c,v)=2$ implies that we must have a temporal path consisting of two edges from $c$ to $v$, 
		such that the two edges have consecutive labels. 
		By construction of $G$ there are three candidates for such a path, one for each literal of $c$. 
		Assume w.l.o.g.\ that $x_1$ appears in $c$ non-negated (the case of a negated appearance of $x_1$ is symmetrical) and that the temporal path realizing $d(c,v)=2$ goes through vertex $x_1^T$. 
		Let us denote with $t = \lambda(\{x_1^T, v\})$.
		It follows that $\lambda(\{x_1^T, c\})=\lambda(\{x_1^T, v\})-1 = t - 1$.
		Furthermore, since $d(c,x_1)=2$ we also have that $\lambda(\{x_1^T, c\})=\lambda(\{x_1, x_1^T\})-1$. 
		Therefore $\lambda(\{x_1, x_1^T\})=\lambda(\{x_1^T, v\}) = t$. 
		Which implies that $x_1$ is set to \textsc{true}.
		Let us observe paths from $v$ to $c$.
		We know that $d(v,c)=\Delta -1$.
		The underlying path of the fastest temporal path from $v$ to $c$, that goes through $x_1^T$ is the path $P = (v,x_1^T,c)$.
		Since $\lambda(\{x_1^T,c\}) > \lambda(\{x_1^T, v\})$ we get that the duration of the temporal path $(P,\lambda)$ is equal to 
		$d(P,\lambda)= (\Delta + t-1) - t + 1 = \Delta$.
		This implies that
		the fastest temporal path from $v$ to $c$ is not $(P,\lambda)$ and therefore does not pass through $x_1^T$.
		Since there are only two other vertices connected to $c$, 
		we have only two other edges incident to $c$, that can be used on a fastest temporal path
		$v$ to $c$.
		Suppose now w.l.o.g.\ that also $x_2$ appears in $c$ non-negated (the case of a negated appearance of $x_2$ is symmetrical) and that the temporal path realizing $d(v,c)=\Delta-1$ goes through vertex $x_2^T$.
		Let us denote with $t' = \lambda(\{x_2^T, v\})$.
		Since the fastest temporal path from $v$ to $c$ is of the duration $\Delta - 1$,
		and the edge $x_2^T c$ is the only edge incident to vertex $c$ and edge $\{x_2^T, v\}$,
		it follows that $\lambda(\{x_2^T, c\}) \geq \lambda(\{x_2^T, v\}) - 2 = t' - 2$.
		Since $d(x_2,v) = 2$ it follows that $\lambda(\{x_2, x_2^T\}) = \lambda (\{x_2^T, v\}) - 1 = t' - 1$.
		Knowing this and the fact that $d(x_2,c)=2$, we get that $\lambda(\{x_2^T, c\})$ must be equal to $t'-2$.
		Therefore the fastest temporal path from $v$ to $c$ passes through edges $\{x_2^T,v\}$ and $\{x_2^T,c\}$.
		In the above we have also determined that $\lambda(\{x_2,x_2^T\}) \neq \lambda ( \{x_2^T,v\})$,
		which implies that $x_2$ is set to \textsc{false}.
		But now we have that $x_1,x_2$ both appear in $c$ non-negated, where one of them is \textsc{true}, while the other is \textsc{false},
		which implies that the clause $c$ is satisfied, a contradiction.

		
		$(\Leftarrow)$: Assume that $\phi$ is satisfiable. Then there exists a satisfying assignment for the variables in $\phi$.
		
		We construct a labeling $\lambda$ as follows.
		\begin{itemize}
			\item All edges incident with a clause vertex $c$ obtain label one.
			\item If variable $x_i$ is set to \textsc{true}, we set $\lambda(\{x_i^F, v\})=3$.
			\item If variable $x_i$ is set to \textsc{false}, we set $\lambda(\{x_i^T, v\})=3$.
			\item We set the labels of all other edges to two.
		\end{itemize}
		For an example of the constructed temporal graph see~\cref{fig:NP-example}.
		We now verify that all duratios are realized.
		\begin{itemize}
			\item For each variable $x_i$ in $\phi$ we have to check that $d(x_i,v)=2$
			and $d(v,x_i)=\Delta$. 
			
			If $x_i$ is set to \textsc{true}, then there is a temporal path from $x_i$ to $v$ via 
			$x_i^F$ of duration $2$, since
			$\lambda(\{x_i, x_i^F\})=2$ and $\lambda(\{x_i^F, v\})=3$.
			For a temporal path from $v$ to $x_i$ we observe the following.
			The only possible labels to leave the vertex $v$ are $2$ and $3$, which take us from $v$ to $x_j^T$ or $x_j^F$ of some variable $x_j$.
			The only two edges incident to $x_i$ have labels $2$, therefore the fastest path from $v$ to $x_i$
			cannot finish before the time $\Delta + 2$.
			The fastest way to leave $v$ and enter to $x_i$ would then be to leave $v$ at edge $\{x_i^F,v\}$ with label $3$,
			and continue to $x_i$ at time $\Delta + 2$,
			which gives us the desired duration $\Delta$.
			
			If $x_i$ is set to \textsc{false}, then, by similar arguing, 
			there is a temporal path from $x_i$ to $v$ via $x_i^T$ of duration $2$,
			and a temporal path from $v$ to $x_i$, through $x_i^F$ of duration $\Delta$.
			
			\item For each clause $c$ in $\phi$ we have to check that $d(c,v)=2$
			and $d(v,c)=\Delta - 1$:
			
			Suppose $x_i,x_j,x_k$ appear in $c$.
			Since we have a satisfying assignment at least one of the literals in $c$ is set to \textsc{true} and at least one to \textsc{false}. 
			Suppose $x_i$ is the variable of the literal that is \textsc{true} in $c$,
			and $x_j$ is the variable of the literal that is \textsc{false} in $c$.
			Let $x_i$ appear non-negated in $c$ and is therefore set to \textsc{true} (the case when $x_i$ appears negated in $c$ and is set to \textsc{false} is symmetric).
			Then there is a temporal path from $c$ to $v$ through $x_i^T$ such that $\lambda(\{x_i^T, c\})=1$ and $\lambda(\{x_i^T, v\})=2$. 
			Let $x_j$ appear non-negated in $c$ and is therefore set to \textsc{false} 
			(the case when $x_j$ appears negated in $c$ and is set to \textsc{true} is symmetric).
			Then there is a temporal path from $v$ to $c$ through $x_j^T$ such that $\lambda(\{x_j^T, v\})=3$ and $\lambda(\{x_j^T, c\})=1$,
			which results in a temporal path from $v$ to $c$ of duration $\Delta -1$.
			
			\item Let $x_i$ be a variable that appears in clause $c$.
			If $x_i$ appears non-negated in $c$ we have to check that $d(c,x_i)=d(c,x_i^F)=2$
			and $d(x_i,c)=d(x_i^F,c)= \Delta$.
			
			There is a temporal path from $c$ to $x_i$ via $x_i^T$ and also a temporal path from $c$ to $x_i^F$ via $x_i^T$ such that $\lambda(\{x_i^T, c\})=1$ and $\lambda(\{x_i, x_i^T\})=\lambda(\{x_i^T, x_i^F\})=2$,
			which proves the first equality.
			There are also the following two temporal paths,
			first, from $x_i$ to $c$ through $x_i^T$ and
			second, from $x_i^F$ to $c$ through $x_i^T$.
			Both of the temporal paths start on the edge with label $2$, as 
			$\lambda(\{x_i, x_i^T\}) = \lambda (\{x_i^T, x_i^F\})=2$ and 
			finish on the edge with label $1$, as $\lambda(\{x_i^T, c\}) = 1$.
			
			If $x$ appears negated in $c$ we have to check that 
			$d(c,x_i)=d(c,x_i^T)=2$
			and $d(x_i,c)=d(x_i^T,c)= \Delta$.
			
			There is a temporal path from $c$ to $x$ via $x^F$ and also a temporal path from $c$ to $x^T$ via $x^F$ such that $\lambda(\{c, x^F\})=1$ and $\lambda(\{x, x^F\})=\lambda(\{x^T, x^F\})=2$,
			which proves the first inequality.
			There are also the following two temporal paths,
			first, from $x_i$ to $c$ through $x_i^F$ and
			second, from $x_i^T$ to $c$ through $x_i^F$.
			Both of the temporal paths start on the edge with label $2$, as 
			$\lambda(\{x_i, x_i^F\}) = \lambda (\{x_i^T, x_i^F\})=2$ and 
			finish on the edge with label $1$, as $\lambda(\{x_i^F, c\}) = 1$.
			Which proves the second equality.
			
			\item Let $x_i$ be a variable that does \emph{not} appear in clause $c$, then we have to check that
			first,
			$d(c,x_i^T)=d(c,x_i^F)=2$,
			second, $d(x_i^T, c) = d(x_i^F,c)=\Delta$,
			third, $d(c,x_i)=\Delta + 2$,
			and fourth $d(x_i,c)=2 \Delta$.
			
			Let $x_j$ be a variable that appears non-negated in $c$ (the case where $x_j$ appears negated is symmetric). 
			Then there is a temporal path 
			from $c$ to $x_i^T$ via $x_j^T$ and also a temporal path 
			from $c$ to $x_i^F$ via $x_j^T$ such that $\lambda(\{x_j^T, c\})=1$ and $\lambda(\{x_j^T, x_i^T\})=\lambda(\{x_j^T, x_i^F\})=2$,
			which proves the first equality.
			Using the same temporal path in the opposite direction,
			\ie first the edge $x_j^T c$ and then one of the edges $\{x_j^T, x_i^F\}$ or $\{x_j^T, x_i^T\}$ at times $2$ and $\Delta + 1$, respectively,
			yields the second equality.
			For a temporal path from $c$ to $x_i$ we traverse the following three edges 
			$\{x_j^T, c\}$, $\{x_j^T, x_i^F\}$, and $\{x_i^F,x_i\}$,
			with labels $1$, $2$, and $2$ respectively (\ie the path traverses them at time $1,2$ and $\Delta + 2$, respectively), which proves the third equality.
			Now for the case of a temporal path from $x_i$ to $c$,
			we use the same three edges, but in the opposite direction,
			namely $\{x_i^F,x_i\}$, $\{x_j^T, x_i^F\}$, and $\{x_j^T, c\}$,
			again at times $2$, $\Delta + 2$, and $2\Delta + 1$, respectively,
			which proves the last equality.
			Note that all of the above temporal paths are also the shortest possible, 
			and since the labels of first and last edges (of these paths) are unique,
			it follows that we cannot find faster temporal paths.
			
			\item For each pair of variables $x_i \neq x_j$ in $\phi$ we have to check that
			$d(x_i,x_j)=2\Delta +1$ and
			$d(x_i,x_j^T)=d(x_i,x_j^F)=\Delta + 1$.
			
			There is a path from $x_i$ to $x_j$ that passes first through one of the vertices 
			$x_i^T$ or $x_i^F$, 
			and then through one of the vertices $x_j^T$ or $x_j^F$.
			This temporal path is of length $3$, where all of the edges have label $2$,
			which proves the first equality.
			Now, a temporal path from $x_i$ to $x_j^T$ (resp.~$x_j^F$),
			passes through one of the vertices $x_i^T$ or $x_i^F$.
			This path is of length two, where all of the edges have label $2$,
			which proves the second equality.
			Note that all of the above temporal paths are also the shortest possible, 
			and since the labels of first and last edges (of these paths) are unique,
			it follows that we cannot find faster temporal paths.
			
			\item For each pair of clauses $c_i \neq c_j$ in $\phi$ we have to check that
			$d(c_i,c_j)= \Delta + 1$.
			
			Let $x_k$ be a variable that appears non-negated in $c_i$
			and $x_\ell$ the variable that appears non-negated in $c_j$
			(all other cases are symmetric).
			There is a path of length three from $c_i$ to $c_j$
			that passes first through vertex $x_k^T$ and then through vertex $x_\ell^T$.
			Therefore the temporal path from $c_i$ to $c_j$
			uses the edges $\{x_k^T, c_i\}$, $\{x_\ell^T, c_j\}$, and $\{x_k^T, x_\ell ^T\}$, with labels $1$,~$2$, and $1$ (at times $1$,~$2$, and $\delta + 1$), respectively,
			which proves the desired equality.
			Note also that this is the shortest path between $c_i$ and $c_j$, and that the first and the last edge must have the label $1$,
			therefore it follows that this is the fastest temporal path.
		\end{itemize}
		Lastly, observe that the above constructed labeling $\lambda$ uses values $\{1,2,3\}\subseteq[\Delta]$, therefore $\Delta\ge 3$.
	\end{proof}

	\subsection{Parameterized hardness of \deltaExact}\label{sec:w1hardness}
	
	In this section, we investigate the parameterized hardness of \deltaExact\ with respect to structural parameters of the underlying graph. We show that the problem is W[1]-hard when parameterized by the feedback vertex number of the underlying graph. The \emph{feedback vertex number} of a graph $G$ is the cardinality of a minimum vertex set $X\subseteq V(G)$ such that $G-X$ is a forest. The set $X$ is called a \emph{feedback vertex set}. 
	Note that, in contrast to the result of the previous section (\cref{thm:NPhardness}), the reduction we use to obtain the following result does not produce instances with a constant $\Delta$.
	
	\begin{theorem}\label{thm:W1wrtFVS}
		\deltaExact\ is W[1]-hard when parameterized by the feedback vertex number of the underlying graph.
	\end{theorem}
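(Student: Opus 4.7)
The plan is to reduce from \textsc{Multicolored Clique} parameterized by the number of colors $k$, which is well-known to be W[1]-hard. Given a graph $H=(U,F)$ with $U$ partitioned into $k$ color classes $V_1,\ldots,V_k$, I would construct an instance $(D,\Delta)$ of \deltaExact\ whose underlying graph $G$ has a feedback vertex set of size $f(k)$ for some function $f$. As the introduction suggests, the natural route is first to handle the ``non-periodic'' case (formally, a setting where $\Delta$ is so large relative to all prescribed durations that labels cannot wrap around), and then to apply a generic \emph{infinity gadget} that transfers the hardness construction into the periodic setting while only increasing the feedback vertex number by a constant.

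The core of the reduction is a vertex-selection mechanism for each color class. For each class $V_i$, I would introduce a constant number of \emph{selector} or \emph{hub} vertices; these hub vertices together would form (essentially all of) the feedback vertex set, so that removing them leaves the original vertex copies of $U$ hanging off their hubs as tree-like attachments. The prescribed duration matrix $D$ would enforce that a fastest temporal path between two specific hubs of class $V_i$ is realized through exactly one vertex $v_i^\star\in V_i$; the label choices thus encode a choice of one representative per color class. To verify a clique, for each pair of classes $V_i,V_j$ I would prescribe entries of $D$ between suitable pairs of hubs of the two classes whose values are achievable if and only if $\{v_i^\star,v_j^\star\}\in F$, i.e.\ the two chosen representatives are adjacent in $H$. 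Unintended ``shortcut'' paths that could cheat the verification must be ruled out by placing vertices of different classes sufficiently ``far apart'' in the temporal sense via carefully chosen large duration entries, which effectively forces any realization to route inter-class connections through the intended adjacency gadgets.

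Correctness then splits as usual. For the forward direction, a $k$-clique in $H$ gives a labeling by setting up the selector gadget for each $V_i$ so that $v_i^\star$ is the clique vertex of color $i$, and then distributing labels on the adjacency gadgets consistently, with a routine check that every duration in $D$ is realized. For the backward direction, any realization must, via the hub duration constraints, single out a vertex $v_i^\star$ per color; the inter-class duration entries then force these to form a clique. The main obstacle I expect is twofold: (i) designing the selector and adjacency gadgets so that each contributes only a bounded number of vertices to the feedback vertex set while leaving no unintended fast temporal paths between unrelated gadgets, and (ii) making the composition with the infinity gadget clean, so that in the periodic $\Delta$-setting no path can exploit wrap-around to reach vertices via ``shortcuts'' through early labels of a later period. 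Establishing the no-shortcut invariant across all pairs of gadgets is typically the most delicate part of such reductions, and I would expect most of the technical effort to be spent there, arguing that the only temporal paths short enough to match the prescribed durations are exactly the ones the construction intends.
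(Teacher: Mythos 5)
Your high-level framework matches the paper: reduce from \textsc{Multicolored Clique} parameterized by $k$, first argue in a ``non-periodic'' regime where labels cannot wrap around (allowing $\infty$ entries in $D$), and finally apply an infinity gadget to obtain a finite period $\Delta$ while only adding a constant number of feedback vertices. That much is correct and it is exactly what the paper does.

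Where your proposal departs from the paper --- and where the genuine gap lies --- is in the core selection mechanism. You propose a \emph{vertex-selection} scheme (one selector gadget per color class $V_i$ that picks a representative $v_i^\star$) followed by a pairwise \emph{adjacency check} between $v_i^\star$ and $v_j^\star$. The difficulty is that you must make the prescribed $D$-entries between the two class gadgets ``achievable iff $\{v_i^\star,v_j^\star\}\in F$'', and there is no obvious way to encode this iff with a fixed matrix $D$ and only $O(1)$ feedback vertices per class: the adjacency relation is an arbitrary $O(n^2)$-size set, and an adjacency gadget that distinguishes it typically costs either many feedback vertices or many separate ``private'' hubs, which breaks the parameter bound. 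You flag this as ``the most delicate part'' but do not resolve it. The paper avoids the issue entirely by using \emph{edge selection}: for each color pair $(i,j)$ it builds an edge-selection gadget whose feasible labelings encode a choice of one edge $e^{i,j}_\ell\in F_{i,j}$ (concretely, a permutation $\sigma_{i,j}$ determined by which labels on the $x_\ell$--$v_{i,j}^\star$ edges line up with which labels on the $u_{4n}^\ell$--$v_{i,j}^{\star\star}$ edges). Adjacency is then automatic, and what remains to be enforced is only \emph{consistency}: for each color $i$, all chosen edges incident to color $i$ must share the same endpoint in $W_i$. This consistency is checked by a long path in the verification gadget whose prescribed fastest duration $(20n+6)k+6n-1$ forces the path to pass through specific $u^\ell_{\ell'}$ pairs in each edge-selection gadget, and an arithmetic argument (using the quadratic spacing $2n(i+j)\ell^2$ of the labels) pins down that $\sigma_{i,j}(\ell)=1$ and that the offset $\ell'$ is the same across all $j$. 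Two further ingredients you do not mention but are load-bearing in the paper: the \emph{connector gadgets}, which force all relevant label differences along long path segments to be at least $2$ (so durations are lower-bounded linearly in segment length) without blowing up the feedback vertex number, and the \emph{alignment gadget}, which anchors the absolute label value of one edge per gadget so that labels across gadgets can be compared. Without these, the ``no-shortcut'' invariant you rightly identify as crucial cannot be established. So the proposal has the right shell (Multicolored Clique, infinity gadget) but is missing the edge-selection idea and the connector/alignment machinery that actually make the feedback-vertex-number bound and the correctness argument go through.
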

	\begin{proof}
		We present a parameterized reduction from the W[1]-hard problem \textsc{Multicolored Clique} parameterized by the number of colors~\cite{fellows2009multipleinterval}.  Here, given a $k$-partite graph $H=(W_1\uplus W_2 \uplus\ldots\uplus W_k, F)$, we are asked whether $H$ contains a clique of size $k$. If $w\in W_i$, then we say that $w$ has \emph{color} $i$. W.l.o.g.\ we assume that $|W_1|=|W_2|=\ldots=|W_k|=n$. 
		Furthermore, for all $i\in[k]$, we assume the vertices in $W_i$ are ordered in some arbitrary but fixed way, that is, $W_i=\{w^i_1,w^i_2,\ldots,w^i_n\}$.
		Let $F_{i,j}$ with $i<j$ denote the set of all edges between vertices from $W_i$ and $W_j$. We assume w.l.o.g.\ that $|F_{i,j}|=m$ for all $i< j$  (if not we can add $k \max_{i,j}|F_{i,j}|$ vertices to each $W_i$ and use those to add up to $\max_{i,j}|F_{i,j}|$ additional isolated edges to each $F_{i,j}$).
		Furthermore, for all $i<j$ we assume that the edges in $F_{i,j}$ are ordered in some arbitrary but fixed way, that is, $F_{i,j}=\{e^{i,j}_1,e^{i,j}_2,\ldots,e^{i,j}_m\}$.
		
		We give a reduction to a variant of \deltaExact\ where the period $\Delta$ is infinite (that is, the sought temporal graph is not periodic and the labeling function $\lambda : E\rightarrow \mathbb{N}$ maps to the natural numbers) and we allow $D$ to have infinity entries, meaning that the two respective vertices are not temporally connected. 
		Note that, given the matrix $D$, we can easily compute the underlying graph $G$, as follows. Two vertices $v,v'$ are adjacent in $G$ if and only if $D_{v,v'}=1$, as having an edge between $v$ and $v'$ is the only way that there exists a temporal path from $v$ to $v'$ with duration 1. 
		For simplicity of the presentation of the reduction, we describe the underlying graph $G$ (which directly implies the entries of $D$ where $D_{v,v'}=1$) and then we provide the remaining entries of $D$. 
		At the end of the proof we show how to obtain the result for a finite $\Delta$ and a matrix $D$ of durations of fastest paths, that only has finite entries.

		In the following, we give an informal description of the main ideas of the reduction. The construction uses several gadgets, where the main ones are an ``edge selection gadget'' and a ``verification gadget''.

		Every \emph{edge selection gadget} is associated with a color combination $i,j$ in the \textsc{Multicolored Clique} instance, and its main purpose is to ``select'' an edge connecting a vertex from color $i$ with a vertex from color $j$.
		Roughly speaking, the edge selection gadget consists of $m$ paths, one for every edge in $F_{i,j}$ (see \cref{fig:hardness1} for reference). The distance matrix $D$ will enforce that the labels on those paths effectively order them temporally, that is, in particular, the labels on one of the paths will be smaller than the labels on all other paths. The edge corresponding to this path is selected.

		We have a \emph{verification gadget} for every color $i$. They interact with the edge selection gadgets as follows.
		The verification gadget for color $i$ is connected to all edge selection gadgets that involve color $i$. More specifically, this is connected to every path corresponding to an edge at a position in the path that encodes the endpoint of color $i$ of that edge (again, see \cref{fig:hardness1} for reference). Intuitively, the distances in the verification gadget are only realizable if the selected edges all have the same endpoint of color $i$.
		Hence, the distances of all verification gadgets can be realized if and only if the selected edges form a clique. 
		
		Furthermore, we use an \emph{alignment gadget} which, intuitively, ensures that the labelings of all gadgets use the same range of time labels. Finally, we use \emph{connector gadgets} which create shortcuts between all vertex pairs that are irrelevant for the functionality of the other gadgets. This allows us to easily fill in the distance matrix with the corresponding values.
		We ensure that all our gadgets have a constant feedback vertex number, hence the overall feedback vertex number is quadratic in the number of colors of the \textsc{Multicolored Clique} instance and we get the parameterized hardness result.
		
		In the following, for every gadget, we first give a formal description of the underlying graph of this gadget (\ie not the complete distance sub-matrix of the gadget). Afterwards, we define the corresponding entries in the distance matrix $D$.
		
		Given an instance $H$ of \textsc{Multicolored Clique}, we construct an instance $D$ of \deltaExact\ (with infinity entries and no periods) as follows. 
		
		\subparagraph{Edge selection gadget.} We first introduce an \emph{edge selection gadget $G_{i,j}$ for color combination $i,j$} with $i<j$. We start with describing the vertex set of the gadget.
		\begin{itemize}
			\item A set $X_{i,j}$ of vertices $x_1, x_2, \ldots, x_m$.
			\item Vertex sets $U_1, U_2, \ldots, U_m$ with $4n+1$ vertices each, that is, $U_\ell=\{u^\ell_0, u^\ell_1, u^\ell_2,\ldots, 
			u^\ell_{4n}\}$ for all $\ell\in[m]$. 
			\item Two special vertices $v_{i,j}^\star,v_{i,j}^{\star\star}$.
		\end{itemize}
		The gadget has the following edges.
		\begin{itemize}
			\item For all $\ell\in [m]$ we have edge $\{x_\ell,v_{i,j}^\star\}$, $\{v_{i,j}^\star,u^\ell_0\}$, and $\{u^\ell_{4n},v_{i,j}^{\star\star}\}$. 
			
			\item For all $\ell\in [m]$ and $\ell'\in [4n]$, we have edge $\{u^\ell_{\ell'-1},u^\ell_{\ell'}\}$.
		\end{itemize}

		\subparagraph{Verification gadget.} For each color $i$, we introduce the following vertices. What we describe in the following will be used as a \emph{verification gadget for color $i$}.
		\begin{itemize}
			\item We have one vertex $y^i$ and $k+1$ vertices $v^i_\ell$ for $0\le \ell\le k$.
			\item For every $\ell\in[m]$ and every $j\in[k]\setminus\{i\}$ we have $5n$ vertices $a^{i,j,\ell}_1,a^{i,j,\ell}_2,\ldots,a^{i,j,\ell}_{5n}$ and $5n$ vertices $b^{i,j,\ell}_1,b^{i,j,\ell}_2,\ldots,b^{i,j,\ell}_{5n}$.
			\item We have a set $\hat{U}_i$ of $13n+1$ vertices $\hat{u}^i_1,\hat{u}^i_2,\ldots,\hat{u}^i_{13n+1}$.
		\end{itemize}
		We add the following edges. We add edge $\{y^i,v^i_0\}$.
		For every $\ell\in[m]$, every $j\in[k]\setminus\{i\}$, and every $\ell'\in[5n-1]$ we add edge $\{a^{i,j,\ell}_{\ell'},a^{i,j,\ell}_{\ell'+1}\}$ and we add edge $\{b^{i,j,\ell}_{\ell'},b^{i,j,\ell}_{\ell'+1}\}$.
		
		Let $1\le j<i$ (skip if $i=1$), let $e_\ell^{j,i}\in F_{j,i}$, and let $w^i_{\ell'}\in W_i$ be incident with $e_\ell^{j,i}$. Then we add edge $\{v_{j-1}^i,a^{i,j,\ell}_{1}\}$ and we add edge $\{a^{i,j,\ell}_{5n},u^\ell_{\ell'-1}\}$ between $a^{i,j,\ell}_{5n}$ and the vertex $u^\ell_{\ell'-1}$ of the edge selection gadget of color combination $j,i$.
		Furthermore, we add edge $\{v_{j}^i,b^{i,j,\ell}_{1}\}$ and edge $\{b^{i,j,\ell}_{5n},u^\ell_{\ell'}\}$ between $b^{i,j,\ell}_{5n}$ and the vertex $u^\ell_{\ell'}$ of the edge selection gadget of color combination $j,i$.
		
		We add edge $\{v^i_{i-1},\hat{u}^i_1\}$ and for all $\ell''\in[13n]$ we add edge $\{\hat{u}^i_{\ell''},\hat{u}^i_{\ell''+1}\}$. Furthermore, we add edge $\{\hat{u}^i_{13n+1},v^i_i\}$. 
		
		Let $i<j\le k$ (skip if $i=k$), let $e_\ell^{i,j}\in F_{i,j}$, and let $w^i_{\ell'}\in W_i$ be incident with $e_\ell^{i,j}$. Then we add edge $\{v_{j-1}^i,a^{i,j,\ell}_{1}\}$ and edge $\{a^{i,j,\ell}_{5n},u^\ell_{3n+\ell'-1}\}$ between $a^{i,j,\ell}_{5n}$ and the vertex $u^\ell_{3n+\ell'-1}$ of the edge selection gadget of color combination $i,j$.
		Furthermore, we add edge $\{v_{j}^i,b^{i,j,\ell}_{1}\}$ and edge $\{b^{i,j,\ell}_{5n},u^\ell_{3n+\ell'}\}$ between $b^{i,j,\ell}_{5n}$ and the vertex $u^\ell_{3n+\ell'}$ of the edge selection gadget of color combination $i,j$.

		\subparagraph{Connector gadget.} Next, we describe \emph{connector gadgets}. Intuitively, these gadgets will be used to connect many vertex pairs by fast paths, which will make arguing about possible labelings in \textsc{Yes}-instances much easier. Connector gadgets consist of six vertices $\hat{v}_0,\hat{v}_0',\hat{v}_1,\hat{v}_2,\hat{v}_3,\hat{v}_3'$. 
		Each connector gadget is associated with two sets $A,B$ with $B\subseteq A$ containing vertices of other gadgets. 
		Let $V^\star$ denote the set of all vertices from all edge selection gadgets and all verification gadgets.
		The sets $A$ and $B$ will only play a role when defining the matrix $D$ later. Informally speaking, vertices in $A$ should reach all vertices in $V^\star$ quickly through the gadget, except the ones in $B$.
		We have the following edges. 
		\begin{itemize}
			\item Edges $\{\hat{v}_0,\hat{v}_1\},\{\hat{v}_0',\hat{v}_1\},\{\hat{v}_1,\hat{v}_2\},\{\hat{v}_2,\hat{v}_3\},\{\hat{v}_2,\hat{v}_3'\}$.
			\item An edge between $\hat{v}_1$ and each vertex in $V^\star$.
			\item An edge between $\hat{v}_2$ and each vertex in $V^\star$.
		\end{itemize}
		We add two connector gadgets for each edge selection gadget and two connector gadgets for each verification gadget.
		
		The \emph{first connector gadget for the edge selection gadget of color combination $i,j$} with $i<j$ has the following sets.
		\begin{itemize}
			\item Sets $A$ and $B$ contain all vertices in $X_{i,j}$ and vertex $v_{i,j}^{\star\star}$.
		\end{itemize}
		The \emph{second connector gadget for the edge selection gadget of color combination $i,j$} with $i<j$ has the following sets.
		\begin{itemize}
			\item Set $A$ contains all vertices from the edge selection gadget $G_{i,j}$ except vertices in $X_{i,j}$.
			\item Set $B$ is empty.
		\end{itemize}
		The \emph{first connector gadget for the verification gadget of color $i$} has the following sets.
		\begin{itemize}
			\item Sets $A$ and $B$ contain all vertices $v^i_\ell$ with $0\le \ell\le k$.
		\end{itemize}
		The \emph{second connector gadget for the verification gadget of color $i$} has the following sets.
		\begin{itemize}
			\item Set $A$ contains all vertices of the verification gadget except vertices $v^i_\ell$ with $0\le \ell \le k$.
			\item Set $B$ is empty.
		\end{itemize}
		
		\subparagraph{Alignment gadget.} Lastly, we introduce an \emph{alignment gadget}. It consists of one vertex $w^\star$ and a set of vertices $\hat{W}$ containing one vertex for each edge selection gadget, one vertex for each verification gadget, and one vertex for each connector gadget. Vertex $w^\star$ is connected to each vertex in $\hat{W}$.
		The vertex $x_1$ of each edge selection gadget, the vertex $y^i$ of each verification gadget, and the vertex $\hat{v}_1$ of each connector gadget are each connected to  one vertex in $\hat{W}$ such that all vertices in $\hat{W}$ have degree two. Intuitively, this gadget is used to relate labels of different gadgets to each other. 
		
		\subparagraph{Feedback vertex number.} This finished the description of the underlying graph $G$. For an illustration see \cref{fig:hardness1}. We can observe that the vertex set containing
		\begin{itemize}
			\item vertices $v_{i,j}^\star$ and $v_{i,j}^{\star\star}$ of each edge selection gadget,
			\item vertices $v^i_\ell$ with $0\le \ell\le k$ of each verification gadget,
			\item vertices $\hat{v}_1$ and $\hat{v}_2$ of each connector gadget, and
			\item vertex $w^\star$ of the alignment gadget
		\end{itemize}
		forms a feedback vertex set in $G$ with size $O(k^2)$.
		
		\subparagraph{\boldmath Duration matrix $D$.} We proceed with describing the matrix $D$ of durations of fastest paths. For a more convenient presentation, we use the notation $d(v,v'):= D_{v,v'}$. For all vertices $v,v'$ that are neighbors in $G$ we have that $d(v,v')=1$ and $d(v',v)=1$.
		
		Next, consider a connector gadget consisting of vertices $\hat{v}_0,\hat{v}_0',\hat{v}_1,\hat{v}_2,\hat{v}_3,\hat{v}_3'$ and with sets $A$ and $B$. Informally, the connector gadget makes sure that all vertices in $A$ can reach all other vertices (of edge selection gadgets and verification gadgets) except the ones in $B$. We set the following durations. Recall that $V^\star$ denotes the set of all vertices from all edge selection gadgets and all verification gadgets.
		\begin{itemize}
			\item We set $d(\hat{v}_0,\hat{v}_2)=d(\hat{v}_3,\hat{v}_1)=d(\hat{v}_2,\hat{v}_0')=d(\hat{v}_1,\hat{v}_3')=2$, and $d(\hat{v}_0,\hat{v}_0')=d(\hat{v}_3,\hat{v}_3')=d(\hat{v}_0,\hat{v}_3')=d(\hat{v}_3,\hat{v}_0')=3$.
			\item Let $v\in A$, then we set $d(v,\hat{v}_0')=3$ and $d(v,\hat{v}_3')=3$.
			\item Let $v\in V^\star\setminus B$, then we set $d(\hat{v}_0,v)=3$ and $d(\hat{v}_3,v)=3$.
			\item Let $v\in A$ and $v'\in V^\star\setminus B$ such that $v$ and $v'$ are not neighbors, then we set $d(v,v')=3$.
		\end{itemize}
		Now consider two connector gadgets, one with vertices $\hat{v}_0,\hat{v}_0',\hat{v}_1,\hat{v}_2,\hat{v}_3,\hat{v}_3'$ and with sets $A$ and $B$, and one with vertices $\hat{v}_0',\hat{v}_0'',\hat{v}_1',\hat{v}_2',\hat{v}_3',\hat{v}_3''$ and with sets $A'$ and $B'$.
		\begin{itemize}
			\item If there is a vertex $v\in A$ with $v\notin A'$, then we set $d(\hat{v}_1,\hat{v}_1')=3$.
			\item If there is a vertex $v\in A$ with $v\in A'\setminus B'$, then we set $d(\hat{v}_1,\hat{v}_2')=3$.
			\item If there is a vertex $v\in V^\star\setminus (A\setminus B)$ with $v\notin A'$, then we set $d(\hat{v}_2,\hat{v}_1')=3$.    
			\item If there is a vertex $v\in V^\star\setminus (A\setminus B)$ with $v\in A'\setminus B'$, then we set $d(\hat{v}_2,\hat{v}_2')=3$.
		\end{itemize}
		
		Next, consider the edge selection gadget for color combination $i,j$ with $i<j$.
		\begin{itemize}
			\item Let $1\le \ell<\ell'\le m$. We set $d(x_\ell,x_{\ell'})=2n\cdot (i+j)\cdot((\ell')^2-\ell^2)+1$.
			\item For all $\ell\in[m]$ we set $d(x_\ell,v_{i,j}^{\star\star})=8n+5$.
		\end{itemize}
		
		Next, consider the verification gadget for color $i$. 
		%
		For all $0\le j<j'<i$ and all $i\le j<j'\le k$ we set the following.
		\begin{itemize}
			\item We set $d(v^i_j,v^i_{j'})=(20n+6)(j'-j)-1$.
		\end{itemize}
		For all $0\le j<i$ and all $i\le j'\le k$ we set the following.
		\begin{itemize}
			\item We set $d(v^i_j,v^i_{j'})=(20n+6)(j'-j)+6n-1$.
		\end{itemize}
		
		Finally, we consider the alignment gadget. Let $x_1$ belong to the edge selection gadget of color combination $i,j$ and let $w\in \hat{W}$ denote the neighbor of $x_1$ in the alignment gadget. Let $\hat{v}_1$ and $\hat{v}_2$ belong to the first connector gadget of the edge selection gadget for color combination $i,j$. Let $\hat{V}$ contain all vertices $\hat{v}_1$ and $\hat{v}_2$ belonging to the other connector gadgets (different from the first one of the edge selection gadget for color combination $i,j$). 
		\begin{itemize}
			\item We set $d(w^\star,x_1)=(20n+6)(i+j)$.
			\item We set $d(w^\star,\hat{v}_1)=n^9$, $d(w,\hat{v}_2)=n^9$, $d(w,\hat{v}_1)=n^9-(20n+6)(i+j)+1$, and $d(w,\hat{v}_2)=n^9-(20n+6)(i+j)+1$.
			\item For each vertex $v\in (V^\star\cup \hat{V})\setminus (X_{i,j}\cup \{v_{i,j}^{\star\star}\})$ we set $d(w^\star,v)=n^9+2$ and $d(w,v)=n^9-(20n+6)(i+j)+3$.
		\end{itemize}
		
		Let $y^i$ belong to the verification gadget of color $i$ and let $w'\in \hat{W}$ denote the neighbor of $y^i$ in the alignment gadget. Let $\hat{v}_1$ and $\hat{v}_2$ belong to the connector gadget of the verification gadget for color $i$. Let $\hat{V}$ contain all vertices $\hat{v}_1$ and $\hat{v}_2$ belonging to the other connector gadgets (different from the one of the verification gadget for color $i$). Let $V_i$ denote the set of all vertices of the verification gadget of color $i$.
		\begin{itemize}
			\item We set $d(w^\star,y^i)=n^8-1$, $d(w',v^i_0)=2$, and $d(w^\star,v^i_0)=n^8$.
			\item We set $d(w^\star,\hat{v}_1)=n^9$, $d(w^\star,\hat{v}_2)=n^9$, $d(w',\hat{v}_1)=n^9-n^8$, and $d(w',\hat{v}_2)=n^9-n^8$.
			\item For each vertex $v\in (V^\star\cup \hat{V})\setminus V_i$ we set $d(w^\star,v)=n^9+1$, $d(w,v)=n^9-n^8+2$, and $d(y^i,v)=n^9-n^8+2$.
		\end{itemize}
		Let $\hat{v}_1$ belong to some connector gadget. Then we set $d(w^\star,\hat{v}_1)=n^9$.
		
		All fastest path durations between non-adjacent vertex pairs that are not specified above are set to infinity.

		\begin{figure}
			\noindent\makebox[\textwidth]{
				\centering
				\includegraphics{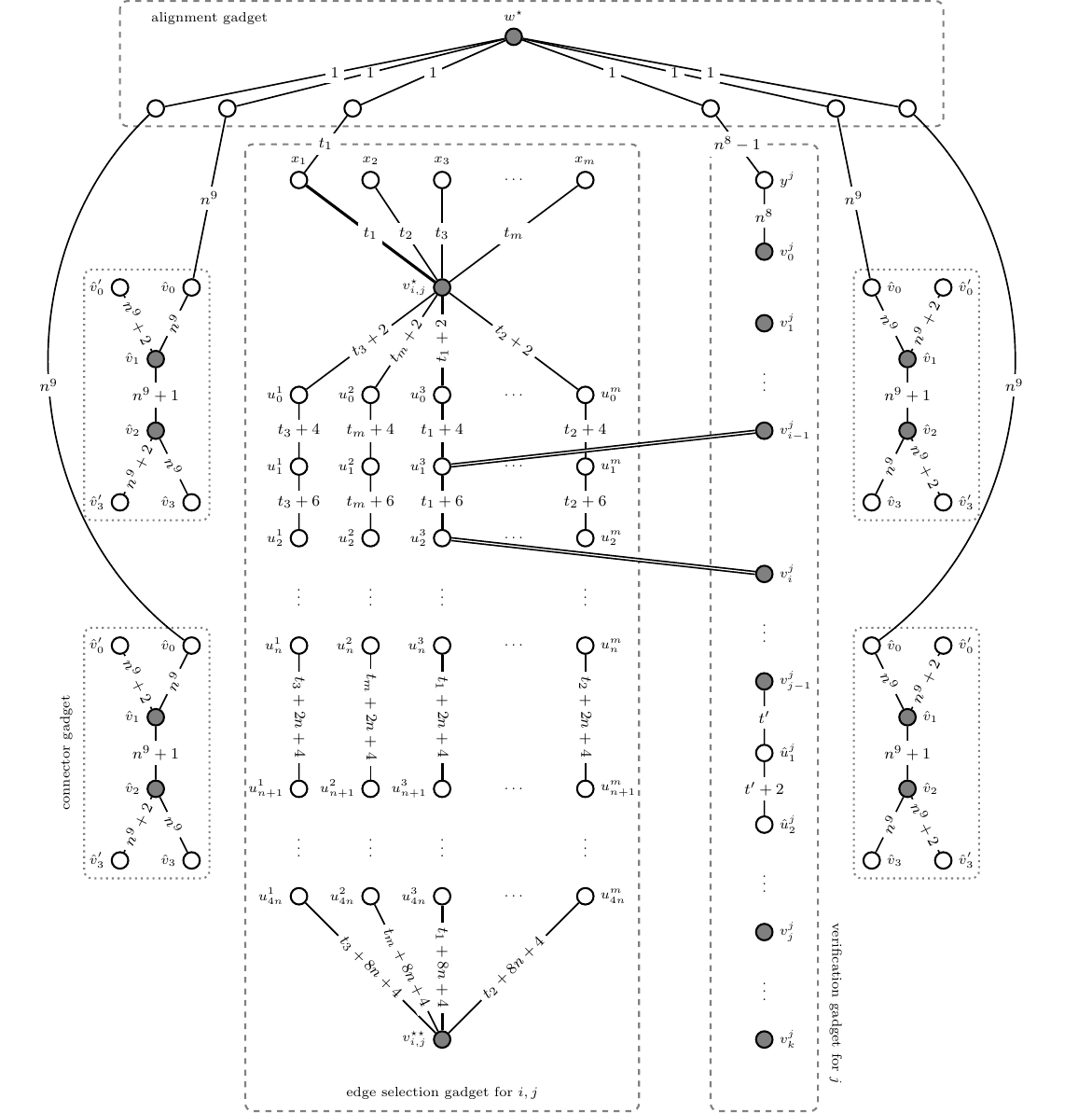}
			}
			\caption{Illustration of part of the underlying graph $G$ and a possible labeling. Edges incident with vertices $\hat{v}_1,\hat{v}_2$ of connector gadgets are omitted. Gray vertices form a feedback vertex set.
				The double line connections, between a vertex $v_{i-1}^j$ in the verification gadget, and $u_1^3$ in the edge selection gadget, 
				and, between a vertex $u_2^3$ in the edge selection gadget, and $v_{i}^j$ in the verification gadget,
				consist of $5n$ vertices $a_1^{j,i,3},a_2^{j,i,3},\dots,a_{5n}^{j,i,3}$ 
				and $b_1^{j,i,3},b_2^{j,i,3},\dots,b_{5n}^{j,i,3}$, respectively.
			}\label{fig:hardness1}
		\end{figure}

		\subparagraph{Correctness.} This finishes the construction of \deltaExactLong\ instance $D$, which can clearly be computed in polynomial time. For an illustration see \cref{fig:hardness1}. As discussed earlier, we have that the vertex cover number of the underlying graph of the instance is in $O(k^2)$.
		
		In the remainder we prove that $D$ is a \textsc{Yes}-instance of \deltaExactLong\ if and only if the $H$ is a \textsc{Yes}-instance of \textsc{Multicolored Clique}.
		
		\subparagraph{$(\Rightarrow)$:} Assume $D$ is a \textsc{Yes}-instance of \deltaExactLong\ and let $(G,\lambda)$ be a solution. We have that the underlying graph $G$ is uniquely defined by $D$. We first prove a number of properties of $\lambda$ that we need to define a set of vertices in $H$ which we claim to be a multicolored clique.
		
		To start, consider the alignment gadget. We can observe that all edges incident with $w^\star$ have the same label.
		\begin{claim}\label{claim:1}
			For all $w\in \hat{W}$ we have that $\lambda(\{w^\star,w\})=t$ for some $t\in\mathbb{N}$.
		\end{claim}
		\begin{claimproof}
			Assume for contradiction that there are $w,w'\in \hat{W}$ such that $\lambda(\{w^\star,w\})=t$ and $\lambda(\{w^\star,w'\})=t'$ with $t\neq t'$. Let w.l.o.g.\ $t<t'$. Then $w$ can reach $w'$, however we have that $d(w,w')=\infty$, a contradiction.
		\end{claimproof}
		\cref{claim:1} allows us to assume w.l.o.g.\ that all edges incident with vertex $w^\star$ of the alignment gadget have label $1$. From now we will assume that this is the case.
		
		Next, we analyse the labelings of connector gadgets. We show that all edges incident with vertices of connector gadgets have labels of at least $n^9$ and at most $n^9+2$. More precisely, we show the following.
		\begin{claim}\label{claim:2}
			Let $\hat{v}_0,\hat{v}_0',\hat{v}_1,\hat{v}_2,\hat{v}_3,\hat{v}_3'$ be the vertices of a connector gadget with sets $A$ and $B$. Then we have that 
			$\lambda(\{\hat{v}_0,\hat{v}_1\})=n^9$,  $\lambda(\{\hat{v}_0',\hat{v}_1\})=n^9+2$,
			$\lambda(\{\hat{v}_1,\hat{v}_2\})=n^9+1$, $\lambda(\{\hat{v}_2,\hat{v}_3\})=n^9$, and $\lambda(\{\hat{v}_2,\hat{v}_3'\})=n^9+2$. 
			Furthermore, for all $v\in V^\star$ we have $n^9\le \lambda(\{\hat{v}_1,v\})\le n^9+2$ and $n^9\le \lambda(\{\hat{v}_2,v\}) \le n^9+2$.
		\end{claim}
		\begin{claimproof}
			Let $w\in \hat{W}$ denote the vertex of the alignment gadget that is neighbor of $w^\star$ and $\hat{v}_0$. We have $d(w^\star,\hat{v}_0)=n^9$. It follows that $\lambda(\{w,\hat{v}_0\})=n^9$. Since $d(\hat{v}_1,w)=\infty$ and $d(w,\hat{v}_1)=\infty$, we have that $\lambda(\{\hat{v}_0,\hat{v}_1\})=n^9$.
			Note that $\hat{v}_1$ is the only common neighbor of $\hat{v}_0$ and $\hat{v}_2$ and the only common neighbor of $\hat{v}_0$ and $\hat{v}_0'$. Since $d(\hat{v}_0,\hat{v}_2)=2$ and $d(\hat{v}_0,\hat{v}_0')=3$ we have that $\lambda(\{\hat{v}_1,\hat{v}_2\})=n^9+1$ and $\lambda(\{\hat{v}_0',\hat{v}_1\})=n^9+2$. Similarly, we have that $\hat{v}_2$ is the only common neighbor of $\hat{v}_3$ and $\hat{v}_1$ and the only common neighbor of $\hat{v}_3$ and $\hat{v}_3'$. Since $d(\hat{v}_3,\hat{v}_1)=2$ and $d(\hat{v}_3,\hat{v}_3')=3$ we have that $\lambda(\{\hat{v}_2,\hat{v}_3\})=n^9$ and $\lambda(\{\hat{v}_2,\hat{v}_3'\})=n^9+2$.

			Let $v\in V^\star$. Note that $d(v,\hat{v}_0)=\infty$ and $d(v,\hat{v}_3)=\infty$. It follows that $\lambda(\{\hat{v}_1,v\})\ge n^9$ and $\lambda(\{\hat{v}_2,v\})\ge n^9$. Otherwise, there would be a temporal path from $v$ to $\hat{v}_0$ via $\hat{v}_1$ or a temporal path from $v$ to $\hat{v}_3$ via $\hat{v}_2$, a contradiction.
			Furthermore, note that $d(\hat{v}_0',v)=\infty$ and $d(\hat{v}_3',v)=\infty$. It follows that $\lambda(\{\hat{v}_1,v\})\le n^9+2$ and $\lambda(\{\hat{v}_2,v\})\le n^9+2$. Otherwise, there would be a temporal path from $\hat{v}_0'$ to $v$ via $\hat{v}_1$ or a temporal path from $\hat{v}_3$ to $v$ via $\hat{v}_2$, a contradiction.
		\end{claimproof}
		
		Now we take a closer look at the edge selection gadgets. We make a number of observations that will allow us to define a set of vertices in $H$ that we claim to be a multicolored clique.
		
		\begin{claim}\label{claim:3}
			For all $1\le i<j\le k$ and $\ell\in[m]$ we have that $\lambda(\{u^\ell_{4n},v_{i,j}^{\star\star}\})\le n^9+2$, where $u^\ell_{4n}$ belongs to the edge selection gadget for $i,j$.
		\end{claim}
		\begin{claimproof}
			Consider the first connector gadget of the edge selection gadget for $i,j$ with vertices $\hat{v}_0,\hat{v}_0',\hat{v}_1,\hat{v}_2,\hat{v}_3,\hat{v}_3'$ and sets $A,B$. Recall that $v_{i,j}^{\star\star}\in B$ and hence we have that $d(\hat{v}_0,v_{i,j}^{\star\star})=\infty$. Furthermore, we have that $u^\ell_{4n}\notin B$ and hence $d(\hat{v}_0,u^\ell_{4n})=3$. By \cref{claim:2} and the fact that $d(w^\star,\hat{v}_0)=n^9$ we have that both edges incident with $\hat{v}_0$ have label $n^9$. It follows that a fastest temporal path from $\hat{v}_0$ to $u^\ell_{4n}$ arrives at $u^\ell_{4n}$ at time $n^9+2$. Now assume for contradiction that $\lambda(\{u^\ell_{4n},v_{i,j}^{\star\star}\})> n^9+2$. Then there exists a temporal walk from $\hat{v}_0$ to $v_{i,j}^{\star\star}$ via $u^\ell_{4n}$, a contradiction to $d(\hat{v}_0,v_{i,j}^{\star\star})=\infty$.
		\end{claimproof}
		
		\begin{claim}\label{claim:4}
			For all $1\le i<j\le k$ and $\ell\in[m]$ we have that $\lambda(\{x_\ell,v_{i,j}^{\star}\})=(i+j)\cdot (2n\ell^2 +18n+6)$, where $x_\ell$ belongs to the edge selection gadget for $i,j$.
		\end{claim}
		\begin{claimproof}
			We first determine the label of $\{x_1,v_{i,j}^{\star}\}$, where $x_1$ belongs to the edge selection gadget for $i,j$. Note that $x_1$ is connected to the alignment gadget. Let $w\in \hat{W}$ be the vertex of the alignment gadget that is a neighbor of $x_1$. Since $d(w^\star,x_1)=(20n+6)(i+j)$ we have that $\lambda(\{w,x_1\})=(20n+6)(i+j)$. 
			
			First, assume that $\lambda(\{x_1,v_{i,j}^{\star}\})<(20n+6)(i+j)$. Then there is a temporal path from $v_{i,j}^\star$ to $w$ via $x_1$. However, we have that $d(x_{i,j^\star},w)=\infty$, a contradiction.
			Next, assume that $(20n+6)(i+j)<\lambda(\{x_1,v_{i,j}^{\star}\})<n^9+2$. Then there is a temporal path from $w$ to $v_{i,j}$ via $x_1$ with duration strictly less than $n^9-(20n+6)(i+j)+3$. However, we have that $d(w,v_{i,j}^\star)=n^9-(20n+6)(i+j)+3$, a contradiction.
			Finally, assume that $\lambda(\{x_1,v_{i,j}^{\star}\})\ge n^9+2$. Consider a fastest temporal path from $x_1$ to $v_{i,j}^{\star\star}$. This temporal path cannot visit $w$ as its first vertex, since from there it cannot continue. 
			From this assumption and \cref{claim:2} it follows, that the first edge of the temporal path has a label with value at least $n^9$. However, by \cref{claim:2,claim:3} we have that all edges incident with $v_{i,j}^{\star\star}$ have a label with value at most $n^9+2$. It follows that $d(x_1,v_{i,j}^{\star\star})\le 3$, a contradiction.
			
			We can conclude that $\lambda(\{x_1,v_{i,j}^{\star}\})=(20n+6)(i+j)$. Now let $1<\ell\le m$. We have that $d(x_1,x_\ell)=2n\cdot (i+j)\cdot(\ell^2-1)+1$ which implies that $\lambda(\{x_\ell,v_{i,j}^{\star}\})\ge (i+j)\cdot (2n\ell^2 +18n+6)$. Assume that $(i+j)\cdot (2n\ell^2 +18n+6) <\lambda(\{x_\ell,v_{i,j}^{\star}\})\le n^9+2$. Then the temporal path from $x_1$ to $x_\ell$ via $v_{i,j}^\star$ is not a fastest temporal path from $x_1$ to $x_\ell$. Again, we have that a fastest temporal path from $x_1$ to $x_\ell$ cannot visit $w$ as its first vertex, since from there it cannot continue. By \cref{claim:2}, all other edges incident with $x_1$ (that is, all different from the one to $v_{i,j}^\star$ and the one to $w$) have a label of at least $n^9$ and at most $n^9+2$. Similarly, by \cref{claim:2} we have that all other edges incident with $x_\ell$ (that is, all different from the one to $v_{i,j}^\star$) have a label of at least $n^9$ and at most $n^9+2$. It follows that any temporal path from $x_1$ to $x_\ell$ that visits $v_{i,j}^\star$ as its first vertex has a duration strictly larger than $2n\cdot (i+j)\cdot(\ell^2-1)+1$. Any temporal path from $x_1$ to $x_\ell$ that visits a vertex different from $v_{i,j}^\star$ as its first vertex has duration of at most $3$. In both cases we have a contradiction. 
			Lastly, assume that $\lambda(\{x_\ell,v_{i,j}^{\star}\})> n^9+2$. Consider a fastest temporal path from $x_\ell$ to $v_{i,j}^{\star\star}$. Now this temporal path has duration at most 3 since by \cref{claim:2} and the just made assumption all edges incident with $x_\ell$ have label at least $n^9$ whereas by \cref{claim:2,claim:3} all edges incident with $v_{i,j}^{\star\star}$ have label at most $n^9+2$, a contradiction. 
		\end{claimproof}
		
		\begin{claim}\label{claim:5}
			For all $1\le i<j\le k$ there exist a permutation $\sigma_{i,j}:[m]\rightarrow [m]$ such that for all $\ell\in[m]$ we have that $\lambda(\{u^\ell_{4n},v_{i,j}^{\star\star}\})=(i+j)\cdot (2n\cdot (\sigma_{i,j}(\ell))^2 +18n+6)+8n+4$, where $u^\ell_{4n}$ belongs to the edge selection gadget for $i,j$.
			
			Furthermore, a fastest temporal path from $x_\ell$ (of the edge selection gadget for $i,j$) to $v_{i,j}^{\star\star}$ visits $v_{i,j}^\star$ as its second vertex, and $u^{\ell'}_{4n}$ with $\sigma_{i,j}(\ell')=\ell$ (of the edge selection gadget for $i,j$) as its second last vertex.
		\end{claim}
		\begin{claimproof}
			For every $\ell\in[m]$ we have that $d(x_\ell,v_{i,j}^{\star\star})=8n+5$, where $x_\ell$ belongs to the edge selection gadget for $i,j$. From \cref{claim:2,claim:4} follows that all edges incident with $x_\ell$ have a label of at least $n^9$ except the one to $v_{i,j}^\star$ and, if $\ell=1$, the edge connecting $x_1$ to the alignment gadget. In the latter case, no temporal path from $x_1$ from $v_{i,j}^{\star\star}$ can continue to the neighbor of $x_1$ in the alignment gadget, since it cannot continue from there.
			
			Now consider $v_{i,j}^{\star\star}$. By \cref{claim:2,claim:3} we have that all edges incident with $v_{i,j}^{\star\star}$ have a label of at most $n^9+2$. It follows that a fastest temporal path $P$ from $x_\ell$ to $v_{i,j}^{\star\star}$ has to visit $v_{i,j}^\star$ after $x_\ell$, since otherwise we have $d(x_\ell,v_{i,j}^{\star\star})\le 2$, a contradiction.
			
			Furthermore, we have by \cref{claim:2} that all edges incident with $v_{i,j}^{\star\star}$ have a label of at least $n^9$ except the ones incident to $u_{\ell'}^{2n}$ for $\ell'\in[m]$. By \cref{claim:4} we have that $\lambda(\{x_\ell,v_{i,j}^{\star}\})\le 4n^6$.
			It follows that a fastest temporal path from $x_\ell$ to $v_{i,j}^{\star\star}$ has to visit $u^{\ell'}_{4n}$ for some $\ell'\in[m]$ as its second last vertex. Otherwise, we have $d(x_\ell,v_{i,j}^{\star\star})> 8n+5$ (for sufficiently large $n$), a contradiction.
			
			We can conclude that a fastest temporal path from $x_\ell$ to $v_{i,j}^{\star\star}$ has to visit $v_{i,j}^\star$ as its second vertex and $u^{\ell'}_{4n}$ for some $\ell'\in[m]$ as its second last vertex. Recall that in a  temporal path, the difference between the labels of the first and last edge determine its duration (minus one). Hence, we have that $\lambda(\{u^{\ell'}_{4n},v_{i,j}^{\star\star}\})-\lambda(\{x_\ell,v_{i,j}^{\star}\})+1=8n+5$.
			By \cref{claim:4} we have that $\lambda(\{x_\ell,v_{i,j}^{\star}\})=(i+j)\cdot (2n\ell^2+18n +2)$. It follows that $\lambda(\{u^{\ell'}_{4n},v_{i,j}^{\star\star}\})=(i+j)\cdot (2n\ell^2 +18n+6)+8n+4$. We set $\sigma_{i,j}(\ell')=\ell$.
			
			Finally, we show that $\sigma_{i,j}$ is a permutation on $[m]$. Assume for contradiction that there are $\ell,\ell'\in[m]$ with $\ell\neq \ell'$ such that $\sigma_{i,j}(\ell)=\sigma_{i,j}(\ell')$. Then we have that $\lambda(\{u^{\ell}_{4n},v_{i,j}^{\star\star}\})=\lambda(\{u^{\ell'}_{4n},v_{i,j}^{\star\star}\})$. However, by \cref{claim:4} we have that all edges from $v_{i,j}^\star$ to a vertex in $X_{i,j}$ have distinct labels. Furthermore, we argued above that every fastest path from a vertex in $X_{i,j}$ to $v_{i,j}^{\star\star}$ visits $v_{i,j}^\star$ as its second vertex and a vertex from the set $\{u^{\ell''}_{4n} :  \ell''\in[m]\}$ as its second last vertex. Since for all $x_{\ell''}$ with $\ell''\in[m]$ we have that $d(x_{\ell''},v_{i,j}^{\star\star})=8n+5$, we must have that all edges from vertices in $\{u^{\ell''}_{4n} :  \ell''\in[m]\}$ to $v_{i,j}^{\star\star}$ must have distinct labels. Hence, we have a contradiction and can conclude that $\sigma_{i,j}$ is indeed a permutation.
		\end{claimproof}
		
		For all $1\le i<j\le k$, let $\sigma_{i,j}$ be the permutation on $[m]$ as defined in \cref{claim:5}. We call $\sigma_{i,j}$ the \emph{permutation of color combination $i,j$}. 
		Now we have enough information to define a set of vertices of $H$ that form a multicolored clique.
		To this end, consider the following set $X$ of edges from $H$.
		\[
		X=\{e_\ell^{i,j}\in F_{i,j} :  \sigma_{i,j}(\ell)=1\}
		\]
		We claim that $\bigcup_{e\in X}e$ forms a multicolored clique in $H$.
		From now on, denote $\{e_{i,j}\}=X\cap F_{i,j}$. We show that for all $i\in [k]$ we have that $|(\bigcap_{1\le j<i} e_{j,i}) \cap (\bigcap_{i<j\le k} e_{i,j})|=1$, that is, for every color~$i$, all edges of a color combination involving $i$ have the same vertex of color $i$ as endpoint. This implies that $\bigcup_{e\in X}e$ is a multicolored clique in $H$.
		
		Before we proceed, we show some further properties of $\lambda$. First, let us focus on the labels on edges of the edge selection gadgets.
		
		\begin{claim}\label{claim:6}
			For all $1\le i<j\le k$,  $\ell\in[m]$, and $\ell'\in[4n]$ we have that $\lambda(\{u^\ell_{\ell'-1},u^\ell_{\ell'}\})=(i+j)\cdot (2n\cdot (\sigma_{i,j}(\ell))^2 +18n+6)+2\ell'+2$, where $u^\ell_{\ell'-1}$ and $u^\ell_{\ell'}$ belong to the edge selection gadget for $i,j$ and $\sigma_{i,j}$ is the permutation of color combination $i,j$.
		\end{claim}
		\begin{claimproof}
			Let $1\le i<j\le k$ and $\ell\in[m]$.
			By \cref{claim:5} we know that a fastest temporal path from $x_{\sigma_{i,j}(\ell)}$ (of the edge selection gadget for $i,j$) to $v_{i,j}^{\star\star}$ visits $v_{i,j}^\star$ as its second vertex, and $u^{\ell}_{4n}$ (of the edge selection gadget for $i,j$) as its second last vertex. 
			Furthermore, by \cref{claim:4} we have that $\lambda(\{x_{\sigma_{i,j}(\ell)},v_{i,j}^{\star}\})=(i+j)\cdot (2n\cdot {(\sigma_{i,j}(\ell))^2}+18n +2)$ and by \cref{claim:5} we have that $\lambda(\{u^\ell_{4n},v_{i,j}^{\star\star}\})=(i+j)\cdot (2n\cdot (\sigma_{i,j}(\ell))^2+18n +2)+8n+4$. It follows that there exist a temporal path $P$ from $v_{i,j}^\star$ to $u^\ell_{4n}$ that starts at $v_{i,j}^\star$ later than $(i+j)\cdot (2n\cdot {(\sigma_{i,j}(\ell))^2} +18n+6)$ and arrives at $u^\ell_{4n}$ earlier than $(i+j)\cdot (2n\cdot (\sigma_{i,j}(\ell))^2 +18n+6)+8n+4$. Hence, the temporal path $P$ has duration at most $8n+3$.
			
			We investigate the temporal path $P$ from its destination $u^\ell_{4n}$ back to its start vertex $v_{i,j}^\star$. Consider the neighbors of $u^\ell_{4n}$ that are different from $v_{i,j}^{\star\star}$. By \cref{claim:2} we have that all edges from $u^\ell_{4n}$ to neighbors of $u^\ell_{4n}$ that are vertices of connector gadgets have a label of at least $n^9$. Hence, $P$ does not visit any of those neighbors. Next, consider neighbors of $u^\ell_{4n}$ in verification gadgets. Assume $u^\ell_{4n}$ has a neighbor in the verification gadget of color $i'$ for some $i'\in[k]$. Then this neighbor is vertex $b^{i',j,\ell}_{5n}$. Note that if $P$ visits $b^{i',j,\ell}_{5n}$, then it also visits all of $\{b^{i',j,\ell}_{\ell'} :  \ell'\in[5n]\}$, since all these vertices have degree two. 
			Now consider the second connector gadget of a verification gadget $i'$ with sets $A,B$, we have that all vertices $\{b^{i',j,\ell}_{\ell'} :  \ell'\in[5n]\}$ are contained in $A$ and are not contained in $B$. Hence, we have that all non-adjacent pairs of vertices in $\{b^{i',j,\ell}_{\ell'} :  \ell'\in[5n]\}$ are on duration $3$ apart, according to $D$, and that $|\lambda(\{b^{i',j,\ell}_{\ell'},b^{i',j,\ell}_{\ell'+1}\})-\lambda(\{b^{i',j,\ell}_{\ell'+1},b^{i',j,\ell}_{\ell'+2}\})|\ge 2$ for all $\ell'\in[5n-2]$.  
			It follows that $P$ would have a duration larger than $8n+3$. We can conclude that $P$ does not visit $b^{i',j,\ell}_{5n}$. It follows that $P$ visits $u^\ell_{4n-1}$.
			Here, we can make an analogous investigation. Additionally, we have to consider the case that $P$ visits a neighbor of $u^\ell_{4n-1}$ in verification gadget of color $i'$ for some $i'\in[k]$ that is vertex $a^{i',j,\ell}_{5n}$. However, we can exclude this by a similar argument as above. 
			
			By repeating the above arguments, we can conclude that $P$ visits (exactly) all vertices in $\{u^\ell_{\ell'} :  0\le \ell'\le 4n\}$ and $v_{i,j}^\star$.
			Consider the second connector gadget of the edge selection gadget of $i,j$ with set $A$ and $B$. Note that all vertices visited by $P$ are contained in $A\setminus B$. It follows that all pairs of non-adjacent vertices visited by $P$ are on duration $3$ apart, according to $D$. In particular, we have $d(u^\ell_{\ell'-1},u^\ell_{\ell'+1})=3$ for all $\ell'\in[4n-1]$ and $d(v_{i,j}^\star,u^\ell_{1})=3$. If follows that for every $\ell'\in[4n-1]$ we have that $\lambda(\{u^\ell_{\ell'},u^\ell_{\ell'+1}\})-\lambda(\{u^\ell_{\ell'-1},u^\ell_{\ell'}\})\ge 2$ and $\lambda(\{u^\ell_{1},u^\ell_{2}\})-\lambda(\{v_{i,j}^\star,u^\ell_{1}\})\ge 2$. 
			
			By investigating the sets $A,B$ of the first connector gadget of the edge selection gadget of $i,j$, we get that $d(x_{\sigma_{i,j}(\ell)},u^\ell_{1})=3$ and hence $\lambda(\{v_{i,j}^\star,u^\ell_{1}\})-\lambda(\{x_{\sigma_{i,j}(\ell)},v_{i,j}^\star\})\ge 2$. Furthermore, we get that $d(u^\ell_{4n-1},v_{i,j}^{\star\star})=3$ and hence $\lambda(\{v_{i,j}^{\star\star},u^\ell_{4n}\})-\lambda(\{u^\ell_{4n-1},u^\ell_{4n}\})\ge 2$. Considering that $P$ visits $4n+2$ vertices, we have that all mentioned inequalities of differences of labels have to be equalities, otherwise $P$ has a duration larger than $8n+3$ or we have that $\lambda(\{v_{i,j}^\star,u^\ell_{1}\})-\lambda(\{x_{\sigma_{i,j}(\ell)},v_{i,j}^\star\})< 2$ or $\lambda(\{v_{i,j}^{\star\star},u^\ell_{4n}\})-\lambda(\{u^\ell_{4n-1},u^\ell_{4n}\})< 2$. Since by \cref{claim:4,claim:5} the labels $\lambda(\{x_{\sigma_{i,j}(\ell)},v_{i,j}^\star\})$ and $\lambda(\{v_{i,j}^{\star\star},u^\ell_{4n}\})$ are determined, then also all labels of edges traversed by $P$ are determined and the claim follows.
		\end{claimproof}
		
		Next, we investigate the labels of the verification gadgets.
		
		\begin{claim}\label{claim:7}
			For all $i\in[k]$ we have that $\lambda(\{y^i,v_0^i\})=n^8$.
		\end{claim}
		\begin{claimproof}
			Let $w\in \hat{W}$ denote the neighbor of $y^i$ in the alignment gadget. Note that we have $d(w^\star,y^i)=n^8-1$. It follows that $\lambda(\{w,y^i\})=n^8-1$. Furthermore, we have that $d(w,v_0^i)=2$ and note that $y^i$ has degree 2. It follows that $\lambda(\{y^i,v_0^i\})=n^8$.
		\end{claimproof}
		
		\begin{claim}\label{claim:8}
			For all $1<i\le k$ and all $\ell\in[m]$ we have that $\lambda(\{v_0^i,a^{i,1,\ell}_{1}\})\le n^8$ or $\lambda(\{v_0^i,a^{i,1,\ell}_{1}\})\ge n^9+2$.
			For $i=1$ we have that $\lambda(\{v_0^i,\hat{u}^i_1\})\le n^8$ or $\lambda(\{v_0^i,\hat{u}^i_1\})\ge n^9+2$.
		\end{claim}
		\begin{claimproof}
			Let $1<i\le k$ and $\ell\in[m]$. Assume that $n^8<\lambda(\{v_0^i,a^{i,1,\ell}_{1}\})< n^9+2$. Then, since by \cref{claim:7} we have $\lambda(\{y^i,v_0^i\})=n^8$, there is a temporal path from $w^\star$ to $a^{i,1,\ell}_{1}$ via $v_0^i$ that arrives at $a^{i,1,\ell}_{1}$ strictly earlier than $n^9+2$. However, we have $d(w^\star,a^{i,1,\ell}_{1})=n^9+2$, a contradiction. The argument for case where $i=1$ is analogous.
		\end{claimproof}
		
		\begin{claim}\label{claim:9}
			For all $1\le i< k$ and all $\ell\in[m]$ we have that $\lambda(\{v_k^i,b^{i,k,\ell}_{1}\})\le n^9+2$.
			For $i=k$ we have that $\lambda(\{v_{k}^i,\hat{u}^i_{13n+1}\})\le n^9+2$.
			%
		\end{claim}
		\begin{claimproof}
			Let $1\le i< k$ and $\ell\in[m]$.
			Consider the first connector gadget of verification gadget for color $i$ with vertices $\hat{v}_0,\hat{v}_0',\hat{v}_1,\hat{v}_2,\hat{v}_3,\hat{v}_3'$ and sets $A,B$. Recall that $v_k^i\in B$ and hence we have that $d(\hat{v}_0,v_k^i)=\infty$. Furthermore, we have that $b^{i,k,\ell}_{1}\notin B$ and hence $d(\hat{v}_0,b^{i,k,\ell}_{1})=3$. By \cref{claim:2} and the fact that $d(w^\star,\hat{v}_0)=n^9$ we have that both edges incident with $\hat{v}_0$ have label $n^9$. It follows that a fastest temporal path from $\hat{v}_0$ to $b^{i,k,\ell}_{1}$ arrives at $b^{i,k,\ell}_{1}$ at time $n^9+2$. Now assume for contradiction that $\lambda(\{v_k^i,b^{i,k,\ell}_{1}\})> n^9+2$. Then there exists a temporal walk from $\hat{v}_0$ to $v_k^i$ via $b^{i,k,\ell}_{1}$, a contradiction to $d(\hat{v}_0,v_k^i)=\infty$.
			The argument for case where $i=k$ is analogous.
		\end{claimproof}
		
		Now we are ready to prove for all $i\in [k]$ that $|(\bigcap_{1\le j<i} e_{j,i}) \cap (\bigcap_{i<j\le k} e_{i,j})|=1$. Assume for contradiction that for some color $i\in[k]$ we have that $|(\bigcap_{1\le j<i} e_{j,i}) \cap (\bigcap_{i<j\le k} e_{i,j})|\neq 1$. 
		Consider the verification gadget for color $i$. Recall that $d(v_0^i,v_k^i)=k(20n+6)+6n-1$. Let $P$ be a fastest temporal path from $v_0^i$ to $v_k^i$.
		We first argue that $P$ cannot visit any vertex of a connector gadget or the alignment gadget. 
		
		\begin{claim}\label{claim:10}
			Let $i\in[k]$. Let $P$ be a fastest temporal path from $v_0^i$ to $v_k^i$. Then $P$ does not visit any vertex of a connector gadget.
		\end{claim}
		\begin{claimproof}
			Assume for contradiction that $P$ visits a vertex of a connector gadget. Then by \cref{claim:2} we have that the arrival time of $P$ is at least $n^9$. By \cref{claim:2} and \cref{claim:9} we have that the arrival time of $P$ is at most $n^9+2$. This means that the second vertex visited by $P$ cannot be a vertex from a connector gadget, because by \cref{claim:2} this would imply $d(v_0^i,v_k^i)\le 2$. Now we can deduce with \cref{claim:8} that $P$ must have a starting time of at most $n^8$. It follows that the arrival time of $P$ must be smaller than $n^9$, a contradiction to the assumption that $P$ visits a vertex of a connector gadget. 
		\end{claimproof}
		
		\begin{claim}\label{claim:11}
			Let $i\in[k]$. Let $P$ be a fastest temporal path from $v_0^i$ to $v_k^i$. Then $P$ does not visit any vertex of the alignment gadget.
		\end{claim}
		\begin{claimproof}
			Note that $P$ starts outside the alignment gadget. This means that if $P$ visits a vertex of the alignment gadget, then the first vertex of the alignment gadget visited by $P$ is a neighbor of $w^\star$. However, these vertices have degree two and the edge to $w^\star$ has label one. It follows that $P$ cannot continue from the vertex of the alignment gadget, a contradiction.
		\end{claimproof}
		
		It follows that the second vertex visited by $P$ is a vertex $a^{i,1,\ell}_{1}$ for some $\ell\in[m]$ or vertex $\hat{u}_1^i$ if $i=1$. In the former case, $P$ has to follow the path segment consisting of vertices in $\{a^{i,1,\ell}_{\ell'} :  \ell'\in[5n]\}$ until it reaches the edge selection gadget of color combination $1,i$. From there it can reach vertex $v_1^i$ by traversing some path segment consisting of vertices $\{b^{i,1,\ell'}_{\ell''} :  \ell''\in[5n]\}$ for some $\ell'\in[m]$. Alternatively, it can reach vertex $v^1_{i-1}$ or $v^1_{i}$ by traversing some path segment consisting of vertices $\{a^{1,i,\ell'}_{\ell''} :  \ell''\in[5n]\}$ for some $\ell'\in[m]$ or $\{b^{1,i,\ell'}_{\ell''} :  \ell''\in[5n]\}$ for some $\ell'\in[m]$, respectively. In the latter case ($i=1$), the temporal path $P$ has to follow the path segment consisting of vertices in $\{\hat{u}^i_\ell :  \ell\in[13n+1]\}$ until it reaches $v^i_1$. More generally, we can make the following observation.
		\begin{claim}\label{claim:12}
			Let $i,j\in\{0,1,\ldots,k\}$. Let $P$ be a temporal path starting at $v^i_j$ and visiting at most $13n+1$ vertices and no vertex of a connector gadget or the alignment gadget. Then $P$ cannot visit vertices in $\{v^{i'}_{j'} :  i',j'\in \{0,1,\ldots,k\}\}\setminus\{v^i_{j-1},v^i_j,v^i_{j+1},v^j_{i-1},v^j_i,v^{j+1}_{i-1},v^{j+1}_i\}$.
		\end{claim}
		\begin{claimproof}
			Consider the edge selection gadget of color combination $i',j'$ for some $i',j'\in[k]$ and let $u^\ell_{\ell'}$ be a vertex of that gadget. Disregarding connections via connector gadgets and the alignment gadget, we have that $u^\ell_{\ell'}$ is (potentially) connected to the verification gadget for color $i'$ and the verification gadget of color $j'$.
			More specifically, by construction of $G$, we have that $u^\ell_{\ell'}$ is potentially connected to 
			\begin{itemize}
				\item vertex $v^{i'}_{j'-1}$ by a path along vertices $\{a^{i',j',\ell}_{\ell''} :  \ell''\in[5n]\}$,
				\item vertex $v^{i'}_{j'}$ by a path along vertices $\{b^{i',j',\ell}_{\ell''} :  \ell''\in[5n]\}$, 
				\item vertex $v^{j'}_{i'-1}$ by a path along vertices $\{a^{j',i',\ell}_{\ell''} :  \ell''\in[5n]\}$, and
				\item vertex $v^{j'}_{i'}$ by a path along vertices $\{b^{j',i',\ell}_{\ell''} :  \ell''\in[5n]\}$.
			\end{itemize}
			Furthermore, by construction of $G$, we have that the duration of a fastest path from $u^\ell_{\ell'}$ to any $v^{i''}_{j''}$ with $i'',j''\in\{0,1,\ldots,k\}$ not mentioned above is at least $10n$ (disregarding edges incident with connector gadgets or the alignment gadget).
			
			Now consider $v^i_j$ and assume $i<j$ ($i>j$). This vertex is (if $j\neq i-1$ and $j\neq k$) connected to some vertex $u^\ell_{\ell'}$ in the edge selection gadget for color combination $i,j+1$ ($j+1,i$) via a path along vertices $\{a^{i,j,\ell}_{\ell''} :  \ell''\in[5n]\}$. Furthermore, $v^i_j$ is (if $j\neq 0$ and $j\neq i$) connected to some vertex $u^{\ell'}_{\ell'''}$ in the edge selection gadget for color combination $i,j$ ($j,i$) via a path along vertices $\{b^{i,j,\ell'}_{\ell''} :  \ell''\in[5n]\}$.
			
			We can conclude that $v^i_j$ can reach a vertex $u^\ell_{\ell'}$ of the edge selection gadget for $i,j+1$ (or $j+1,i$) and a vertex $u^{\ell'}_{\ell'''}$ of the edge selection gadget for color combination $i,j$ (or $j,i$), each along paths of length at least $5n$. From $u^\ell_{\ell'}$ and $u^{\ell'}_{\ell'''}$ we have that any other vertex of the edge selection gadget for $i,j+1$ (or $j+1,i$) and the edge selection gadget for color combination $i,j$ (or $j,i$), respectively, can be reached by a path of length at most $3n$. Together with the observation made in the beginning of the proof, we can conclude that $v^i_j$ can potentially reach any vertex in $\{v^i_{j-1},v^i_j,v^i_{j+1},v^j_{i-1},v^j_i,v^{j+1}_{i-1},v^{j+1}_i\}$ by a path that visits at most $13n+1$ vertices. 
			
			Lastly, consider the case that $j=i-1$ or $j=i$. Then we have that $v^i_{i-1}$ and $v^i_i$ are connected via a path inside the verification gadget for color $i$, visiting the $13n+1$ vertices in $\{\hat{u}^i_\ell :  \ell\in[13n+1]\}$. The claim follows.
		\end{claimproof}
		Furthermore, we can make the following observation on the duration of the temporal paths characterized in \cref{claim:12}.
		\begin{claim}\label{claim:13}
			Let $i,j\in\{0,1,\ldots,k\}$. Let $P$ be a temporal path from $v^i_j$ to a vertex in $\{v^i_{j-1},v^i_j,v^i_{j+1},v^j_{i-1},v^j_i,v^{j+1}_{i-1},v^{j+1}_i\}$ and visiting no vertex of a connector gadget or the alignment gadget. Then $P$ has duration at least $20n$.
		\end{claim}
		\begin{claimproof}
			As argued in the proof of \cref{claim:12}, a temporal path $P$ from $v^i_j$ to a vertex in $\{v^i_{j-1},v^i_j,v^i_{j+1},v^j_{i-1},v^j_i,v^{j+1}_{i-1},v^{j+1}_i\}$ has to either traverse two segments of $5n$ vertices in $\{a^{i',j',\ell}_{\ell'} :  \ell'\in[5n]\}$ or $\{b^{i',j',\ell}_{\ell'} :  \ell'\in[5n]\}$ for some $\ell\in[m]$ and $i',j'\in\{i-1,i,j,j+1\}$ or a segment of the $13n+1$ vertices in $\{\hat{u}^i_\ell :  \ell\in[13n+1]\}$. We analyse the former case first.
			
			Consider the second connector gadget of a verification gadget $i'$ with sets $A,B$, we have that all vertices $\{a^{i',j',\ell}_{\ell'} :  \ell'\in[5n],j'\in[k]\setminus\{i'\}\}\cup\{b^{i',j',\ell}_{\ell'} :  \ell'\in[5n],j'\in[k]\setminus\{i'\}\}$ are contained in $A$ and are not contained in $B$. It follows that all non-adjacent pairs of vertices in $\{a^{i',j',\ell}_{\ell'} :  \ell'\in[5n],j'\in[k]\setminus\{i'\}\}\cup\{b^{i',j',\ell}_{\ell'} :  \ell'\in[5n],j'\in[k]\setminus\{i'\}\}$ are on duration $3$ apart, according to $D$. It follows that $|\lambda(\{a^{i',j',\ell}_{\ell'},a^{i',j',\ell}_{\ell'+1}\})-\lambda(\{a^{i',j',\ell}_{\ell'+1},a^{i',j',\ell}_{\ell'+2}\})|\ge 2$ for all $\ell'\in[5n-2]$ and $j'\in[k]\setminus\{i'\}$. Analogously, we have that $|\lambda(\{b^{i',j',\ell}_{\ell'},b^{i',j',\ell}_{\ell'+1}\})-\lambda(\{b^{i',j',\ell}_{\ell'+1},b^{i',j',\ell}_{\ell'+2}\})|\ge 2$ for all $\ell'\in[5n-2]$ and $j'\in[k]\setminus\{i'\}$. It follows that two segments of $5n$ vertices in $\{a^{i',j',\ell}_{\ell'} :  \ell'\in[5n]\}$ or $\{b^{i',j',\ell}_{\ell'} :  \ell'\in[5n]\}$ for some $\ell\in[m]$ and $i',j'\in\{i-1,i,j,j+1\}$ traversed by $P$ both have duration $10n$ and hence $P$ has duration at least $20n$.
			
			In the latter case, where $P$ traverses a segment of the $13n+1$ vertices in $\{\hat{u}^i_\ell :  \ell\in[13n+1]\}$, we can make an analogous argument, since all vertices in $\{\hat{u}^i_\ell :  \ell\in[13n+1]\}$ are contained in the set $A$ of the second connector gadget of the verification gadget of color $i$ but not in the set $B$ of that connector gadget.
		\end{claimproof}
		
		Recall that $P$ denotes a fastest temporal path from $v_0^i$ to $v_k^i$ and that $d(v_0^i,v_k^i)=k(20n+6)+6n-1$. By \cref{claim:10,claim:11,claim:12} he have that $P$ needs to visit at least one vertex in $\{v^{i'}_{j'} :  i',j'\in \{0,1,\ldots,k\}\}\setminus\{v^i_0,v^i_k\}$. Next, we analyse which vertices in this set are visited by $P$.
		
		\begin{claim}\label{claim:14}
			Let $i\in[k]$. Let $P$ be a fastest temporal path from $v_0^i$ to $v_k^i$. Then $P$ visits all vertices in $\{v^j_i :  0\le j\le k\}$ and no vertex in $\{v^{i'}_{j'} :  i',j'\in \{0,1,\ldots,k\}\}\setminus\{v^j_i :  0\le j\le k\}$.
			Furthermore, $P$ visits the vertices in order $v^i_0, v^i_1, v^i_2, \ldots, v^i_{k-1},v^i_k$.
		\end{claim}
		\begin{claimproof}
			Let $X\subseteq \{v^{i'}_{j'} :  i',j'\in \{0,1,\ldots,k\}\}$ denote the set of vertices in $\{v^{i'}_{j'} :  i',j'\in \{0,1,\ldots,k\}\}$ that are visited by $P$.
			By \cref{claim:12,claim:13} we have that $|X|\le k+1$, since otherwise the duration of $P$ would be at least $20n(k+1)>k(20n+6)+6n-1$, a contradiction.
			
			To prove the claim, we use the notion of a \emph{potential $p^i$ with respect to $i$} of a vertex $v^{i'}_j$.
			We say that the first potential of vertex $v^{i'}_j$ with respect to $i$ is $p^i(v^{i'}_j)=i'+j-i$.
			The temporal path $P$ starts at vertex $v^i_0$ with $p^i(v^i_0)=0$, and ends at vertex $v^i_k$ with $p^i(v^i_k)=k$. 
			
			Assume the path $P$ is at some vertex $v^{i'}_j$ with $p^i_1(v^{i'}_j)=i'+j-i$.
			By \cref{claim:12} we have that the next vertex in  $\{v^{i'}_{j'} :  i',j'\in \{0,1,\ldots,k\}\}$ visited by $P$ is some $v^{i''}_{j'}\in \{v^{i'}_{j-1},v^{i'}_j,v^{i'}_{j+1},v^j_{i'-1},v^j_i,v^{j+1}_{i'-1},v^{j+1}_{i'}\}$. We can observe that $|p^i(v^{i'}_j)-p^i(v^{i''}_{j'})|\le 1$, that is, the first potential changes at most by one when $P$ goes from one vertex in $\{v^{i'}_{j'} :  i',j'\in \{0,1,\ldots,k\}\}$ to the next one.
			Since $|X|\le k+1$ we and $p^i(v^i_k)-p^i(v^i_0)=k$ have that the  potential has to increase by exactly one every time $P$ goes from one vertex in $\{v^{i'}_{j'} :  i',j'\in \{0,1,\ldots,k\}\}$ to the next one. 
			We can conclude that $|X|=k+1$.
			Furthermore, we have that if the path $P$ is at some vertex $v^{i'}_j$, the next vertex in $\{v^{i'}_{j'} :  i',j'\in \{0,1,\ldots,k\}\}$ visited by $P$ is either $v^{i'}_{j+1}$ or $v^{j+1}_{i'}$.
			
			By \cref{claim:13} we have that the temporal path segments from $v^{i'}_j$ to $v^{i'}_{j+1}$ and $v^{j+1}_{i'}$, respectively, have duration at least $20n$. However, for the temporal path from $v^{i'}_j$ to $v^{j+1}_{i'}$ (with $j\neq i'-1$) we can obtain a larger lower bound.
			As argued in the proof of \cref{claim:11}, a temporal path segment from $v^{i'}_j$ to $v^{j+1}_{i'}$ has to either traverse two segments of $5n$ vertices in $\{a^{i',j',\ell}_{\ell'} :  \ell'\in[5n]\}$ or $\{b^{i',j',\ell}_{\ell'} :  \ell'\in[5n]\}$ for some $\ell\in[m]$ and $i',j'\in\{i-1,i,j,j+1\}$. 
			More precisely, the temporal path segment has to traverse part of the edge selection gadget of color combination $i',j+1$. To this end, it traverses the $5n$ vertices in $\{a^{i',j+1,\ell}_{\ell''} :  \ell''\in[5n]\}$ for some $\ell\in[m]$. Then it traverses some vertices in the edge selection gadget, and then it traverses the $5n$ vertices in $\{b^{j+1,i',\ell'}_{\ell''} :  \ell''\in[5n]\}$ for some $\ell'\in[m]$.
			
			By construction of $G$, the first vertex of the edge selection gadget visited by the path segment (after traversing vertices in $\{a^{i',j+1,\ell}_{\ell'} :  \ell''\in[5n]\}$) is some vertex $u^\ell_{\ell''}$ with $\ell''\in\{0,1,\ldots,4n\}$.
			The last vertex of the edge selection gadget visited by the path segment is (before traversing the vertices in $\{b^{j+1,i',\ell'}_{\ell''''} :  \ell''''\in[5n]\}$) some vertex $u^{\ell'}_{\ell'''}$ with $\ell'''\in\{0,1,\ldots,4n\}$. By construction of $G$, the duration of a fastest path between $u^\ell_{\ell''}$ and $u^{\ell'}_{\ell'''}$ (in $G$) is at least $3n$.
			Investigating the second connector gadget of the edge selection gadget for $i',j+1$ we can see that a temporal path from $u^\ell_{\ell''}$ and $u^{\ell'}_{\ell'''}$ has duration at least $6n$.
			
			It follows that the temporal path segment from $v^{i'}_j$ to $v^{j+1}_{i'}$ (with $j\neq i'-1$) has duration at least $26n$.
			Furthermore, recall that $P$ starts at $v^i_0$ and ends at $v^i_k$. We have that if $P$ contains a path segment from some $v^{i'}_j$ to $v^{j+1}_{i'}$ some (with $j\neq i'-1$), then $P$ visits a vertex $v^{i''}_{j'}$ with $i''\neq i$. Hence, it needs to contain at least one additional path segment from some $v^{i'}_j$ to some $v^{j+1}_{i'}$ (with $j\neq i-1$). However, then we have that the duration of $P$ is at least $20kn+12n>k(20n+6)+6n-1$, a contradiction.
			
			We can conclude that $P$ only contains temporal path segments from $v^{i}_{j-1}$ to $v^i_{j}$ for $j\in[k]$ and the claim follows.
			%
			%
		\end{claimproof}
		
		Now we have by \cref{claim:12,claim:14} that we can divide $P$ into $k$ segments, the subpaths from $v^i_{j-1}$ to $v^i_j$ for $j\in [k]$. We show that all subpaths except the one from $v_{i-1}^i$ to $v_i^i$ have duration $20n+5$. The subpath from $v_{i-1}^i$ to $v_i^i$ has duration $26n+5$.
		
		\begin{claim}\label{claim:15}
			Let $i\in[k]$ and $j\in[k]\setminus\{i\}$. Let $P$ be a temporal path from $v_{j-1}^i$ to $v_j^i$ that does not visit vertices from connector gadgets and the alignment gadget. If $P$ has duration at most $20n+5$, then it visits exactly two vertices $u^\ell_{\ell'-1},u^\ell_{\ell'}$ with $\ell\in[m]$, and $\ell'\in[4n]$ of the edge selection gadget for color combination $i,j$ (or $j,i$).
		\end{claim}
		\begin{claimproof}
			By the construction of $G$ (and as also argued in the proofs of \cref{claim:12,claim:13}), a temporal path $P$ with duration at most $20n+5$ that does not visit vertices from connector gadgets and the alignment gadget from $v_{j-1}^i$ to $v_j^i$ has to first traverse a segment of $5n$ vertices in $\{a^{i,j-1,\ell}_{\ell'} :  \ell'\in[5n]\}$ and then a segment of $5n$ vertices $\{b^{i,j,\ell}_{\ell'} :  \ell'\in[5n]\}$ for some $\ell\in[m]$. By construction of $G$, the two vertices visited in the edge selection gadget for color combination $i,j$ (or $j,i$) are $u^\ell_{\ell'-1}$ and $u^\ell_{\ell'}$ for some $\ell'\in[4n]$. By inspecting the connector gadgets in an analogous way as in the proof of \cref{claim:13} we can deduce that all consecutive edges traversed by $P$ must have labels that differ by at least 2. If follows that if all consecutive edges have labels that differ by exactly two, then $P$ has duration $20n+5$.
		\end{claimproof}
		
		\begin{claim}\label{claim:16}
			Let $i\in[k]$. Let $P$ be a temporal path from $v_{i-1}^i$ to $v_i^i$ that does not visit vertices from connector gadgets and the alignment gadget. Then $P$ has duration at least $26n+5$.
		\end{claim}
		\begin{claimproof}
			By construction of $G$ we have that $v^i_{i-1}$ and $v^i_i$ are connected via a path inside the verification gadget for color $i$, visiting the $13n+1$ vertices in $\{\hat{u}^i_\ell :  \ell\in[13n+1]\}$. Assume $P$ follows this path. By inspecting the connector gadgets of the verification gadget of color $i$, we can see that all consecutive edges traversed by $P$ must have labels that differ by at least two. It follows that $P$ has duration at least $26n+5$. By construction of $G$ we have that if $P$ does not follow the vertices in $\{\hat{u}^i_\ell :  \ell\in[13n+1]\}$ it has to visit at least three different edge selection gadgets: The one of color combination $i-1,i$, then one of $i-1,i+1$, and then the one of $i,i+1$. If follows that $P$ needs to visit at least four segments of length $5n$ composed of vertices $\{a^{i',j',\ell}_{\ell'} :  \ell'\in[5n]\}$ or $\{b^{i',j',\ell}_{\ell'} :  \ell'\in[5n]\}$ for some $\ell\in[m]$ and $i',j'\in[k]$. By inspecting the connector gadgets of the verification gadgets we know that it takes at least $10n$ time steps to traverse such a segment. Hence, the duration of $P$ is at least $40n$.
		\end{claimproof}
		
		Furthermore, we need the following observation which is relevant when we try to connect the above mentioned segments to a temporal path.
		
		\begin{claim}\label{claim:17}
			Let $i\in[k]$ and $0\le j\le k$. The absolute difference of labels of any two different edges incident with $v^i_j$ is at least two.
		\end{claim}
		\begin{claimproof}
			This follows by inspecting the connector gadgets of the verification gadget of color $i$.
		\end{claimproof}
		
		From \cref{claim:10,claim:11,claim:14,claim:15,claim:16,claim:17} we get that a fastest temporal path $P$ from $v^i_0$ to $v^i_k$ has the following properties.
		\begin{enumerate}
			\item The path $P$ can be segmented into temporal path segments $P_j$ from $v^i_{j-1}$ to $v^i_j$ for $j\in[k]\setminus\{i\}$ such that $P_j$ is a temporal path from $v^i_{j-1}$ to $v^i_j$ that does not visit vertices from connector gadgets and the alignment gadget and has duration $20n+5$.
			\item The segment of $P$ from $v^i_{i-1}$ to $v^i_i$ has duration $26n+5$.
			\item The path $P$ dwells at each vertex $v^i_j$ with $j\in[k-1]$ for exactly two time steps, that is, the absolute difference of the labels on the edges incident with $v^i_j$ that are traversed by $P$ is exactly two.
		\end{enumerate}
		If any of the properties does not hold, then we can observe that $d(v^i_0,v^i_k)>8n+5$ would follow.
		
		Now assume $i\in[k]$ and $j\in[k]\setminus\{i\}$ and consider a fastest temporal path $P_j$ from $v_{j-1}^i$ to $v_j^i$ that does not visit vertices from connector gadgets and the alignment gadget and a fastest temporal path $P_{j+1}$ from $v_{j}^i$ to $v_{j+1}^i$ that does not visit vertices from connector gadgets and the alignment gadget. 
		By \cref{claim:15} we know that $P_j$ visits vertices $u^\ell_{\ell'-1},u^\ell_{\ell'}$ with $\ell\in[m]$, and $\ell'\in[4n]$ of the edge selection gadget for color combination $i,j$. 
		By \cref{claim:6} we have that $\lambda(\{u^\ell_{\ell'-1},u^\ell_{\ell'}\})=(i+j)\cdot (2n\cdot (\sigma_{i,j}(\ell))^2 +18n+6)+2\ell'+2$, where $\sigma_{i,j}$ is the permutation of color combination $i,j$ (or $j,i$).
		Analogously, we have by \cref{claim:15} that $P_{j+1}$ visits vertices $u^{\ell''}_{\ell'''-1},u^{\ell''}_{\ell'''}$ with $\ell''\in[m]$, and $\ell'''\in[4n]$ of the edge selection gadget for color combination $i,j+1$. 
		By \cref{claim:6} we have that $\lambda(\{u^{\ell''}_{\ell'''-1},u^{\ell''}_{\ell'''}\})=(i+j+1)\cdot (2n\cdot (\sigma_{i,j+1}(\ell''))^2 +18n+6)+2\ell'''+2$, where $\sigma_{i,j+1}$ is the permutation of color combination $i,j+1$ (or $j+1,i$). We have that 
		\begin{align*}
			& \lambda(\{u^{\ell''}_{\ell'''-1},u^{\ell''}_{\ell'''}\})-\lambda(\{u^\ell_{\ell'-1},u^\ell_{\ell'}\}) = \\
			& (i+j+1)\cdot (2n\cdot (\sigma_{i,j+1}(\ell''))^2 +18n+6)+2\ell'''+2\\
			& -((i+j)\cdot (2n\cdot (\sigma_{i,j}(\ell))^2 +18n+6)+2\ell'+2) = \\
			& (i+j+1)\cdot 2n\cdot (\sigma_{i,j+1}(\ell''))^2 - (i+j)\cdot 2n\cdot (\sigma_{i,j}(\ell))^2 + 2(\ell'''-\ell') + 18n + 6
		\end{align*}
		By the arguments made before we also have that if $P_j$ and $P_{j+1}$ are both path segments of $P$, then 
		\[
		\lambda(\{u^{\ell''}_{\ell'''-1},u^{\ell''}_{\ell'''}\})-\lambda(\{u^\ell_{\ell'-1},u^\ell_{\ell'}\}) = 20n+6.
		\]
		It follows that 
		\[
		(i+j+1)\cdot 2n\cdot (\sigma_{i,j+1}(\ell''))^2 - (i+j)\cdot 2n\cdot (\sigma_{i,j}(\ell))^2 + 2(\ell'''-\ell') = 2n.
		\]
		Assume that $\sigma_{i,j}(\ell)\neq \sigma_{i,j+1}(\ell'')$, then we have that $(i+j+1)\cdot 2n\cdot (\sigma_{i,j+1}(\ell''))^2 - (i+j)\cdot 2n\cdot (\sigma_{i,j}(\ell))^2<6n$ or $(i+j+1)\cdot 2n\cdot (\sigma_{i,j+1}(\ell''))^2 - (i+j)\cdot 2n\cdot (\sigma_{i,j}(\ell))^2>10n$, since $|(\sigma_{i,j}(\ell''))^2-(\sigma_{i,j}(\ell))^2|\ge 3$ and $(i+j)\ge 3$. However, we have that $\ell',\ell'''\in[4n]$ and hence $|2(\ell'''-\ell')|< 8n$. We can conclude that $\sigma_{i,j}(\ell)= \sigma_{i,j+1}(\ell'')$.
		In this case we have that $(i+j+1)\cdot 2n\cdot (\sigma_{i,j+1}(\ell''))^2 - (i+j)\cdot 2n\cdot (\sigma_{i,j}(\ell))^2= 2n\cdot (\sigma_{i,j}(\ell'))^2$.
		It follows that $2n(\sigma_{i,j}(\ell))^2-2(\ell'''-\ell')=2n$. Again, since $|2(\ell'''-\ell')|< 8n$, we have that $\sigma_{i,j}(\ell)=1$ and in turn this implies that $\ell'=\ell'''$.
		
		Note that if $i=1$ or $i=k$ we can already conclude that $|(\bigcap_{1\le j<i} e_{j,i}) \cap (\bigcap_{i<j\le k} e_{i,j})|=1$. 
		By construction of $G$ we have that for all $j\in[k]\setminus\{i\}$ that $v^i_{j-1}$ and $v^i_j$ are connected to $u^\ell_{\ell'-1}$ and $u^\ell_{\ell'}$ of the edge selection gadget of color combination $i,j$ (or $j,i$), respectively, via paths using vertices $\{a^{i,j,\ell}_{\ell''} :  \ell''\in[5n]\}$ and $\{b^{i,j,\ell}_{\ell''} :  \ell''\in[5n]\}$, respectively, if the vertex $w^i_{\ell'}\in W_i$ (for $i=k$, or vertex $w^i_{\ell'-3n}\in W_i$ for $i=1$) is incident with edge $e^{i,j}_\ell\in F_{i,j}$. Note that since $\sigma_{i,j}(\ell)=1$ we have that $e^{i,j}_\ell\in X$.
		Since $\ell'$ is independent from $\ell$ and $j$, it follows that $(\bigcap_{1\le j<i} e_{j,i}) \cap (\bigcap_{i<j\le k} e_{i,j})=\{w^i_{\ell'}\}$ for $i=k$ and $(\bigcap_{1\le j<i} e_{j,i}) \cap (\bigcap_{i<j\le k} e_{i,j})=\{w^i_{\ell'-3n}\}$ for $i=1$.
		
		Assume now that $1\neq i\neq k$. By \cref{claim:16} we know that the duration of the path segment $P_i$ from $v^i_{i-1}$ to $v^i_i$ is $26n+5$. Consider the path segment $P^\star$ from $v^i_{i-2}$ to $v^i_{i+1}$. By the arguments above we know that $P^\star$ visits vertices $u^\ell_{\ell'-1},u^\ell_{\ell'}$ with $\sigma_{i-1,i}(\ell)=1$, and $\ell'\in[4n]$ of the edge selection gadget for color combination $i-1,i$ and afterwards $P^\star$ visits vertices $u^{\ell''}_{\ell'''-1},u^{\ell''}_{\ell'''}$ with $\sigma_{i,i+1}(\ell'')=1$, and $\ell'''\in[4n]$ of the edge selection gadget for color combination $i,i+1$.
		By analogous arguments as above and the fact that the duration of $P_i$ is $26n+5$ we get that 
		\[
		\lambda(\{u^{\ell''}_{\ell'''-1},u^{\ell''}_{\ell'''}\})-\lambda(\{u^\ell_{\ell'-1},u^\ell_{\ell'}\}) = 46n+6.
		\]
		It follows that 
		\[
		(2i+1)\cdot (20n+6)+2\ell'''+2 -((2i-1)\cdot (20n+6)+2\ell'+2) = 46n+6,
		\]
		and hence $\ell'''-\ell'=3n$.
		By construction of $G$ we have that $v^i_{i-2}$ and $v^i_{i-1}$ are connected to $u^\ell_{\ell'-1}$ and $u^\ell_{\ell'}$ of the edge selection gadget of color combination $i-1,i$, respectively, via paths using vertices $\{a^{i,i-1,\ell}_{\ell''''} :  \ell''''\in[5n]\}$ and $\{b^{i,i-1,\ell}_{\ell''''} :  \ell''''\in[5n]\}$, respectively, if the vertex $w^i_{\ell'}\in W_i$ is incident with edge $e^{i-1,i}_\ell\in F_{i-1,i}$. 
		Furthermore, we have that $v^i_{i}$ and $v^i_{i+1}$ are connected to $u^{\ell''}_{3n+\ell'-1}$ and $u^{\ell''}_{3n+\ell'}$ of the edge selection gadget of color combination $i,i+1$, respectively, via paths using vertices $\{a^{i,i+1,\ell''}_{\ell''''} :  \ell''''\in[5n]\}$ and $\{b^{i,i+1,\ell''}_{\ell''''} :  \ell''''\in[5n]\}$, respectively, if the vertex $w^i_{\ell'}\in W_i$ is incident with edge $e^{i,i+1}_{\ell''}\in F_{i,i+1}$. 
		
		Note that since $\sigma_{i-1,i}(\ell)=\sigma_{i,i+1}(\ell'')=1$ we have that $e^{i-1,i}_\ell\in X$ and $e^{i,i+1}_{\ell''}\in X$.
		Since, again, $\ell'$ is independent from $\ell$ and $j$, it follows that $e^{i-1,i}_\ell\cap e^{i,i+1}_{\ell''} =\{w^i_{\ell'}\}$. By arguments analogous to the ones above we can also deduce that $\bigcap_{1\le j<i} e_{j,i}=\{w^i_{\ell'}\}$ and $\bigcap_{i<j\le k} e_{i,j}=\{w^i_{\ell'}\}$. It follows that $(\bigcap_{1\le j<i} e_{j,i}) \cap (\bigcap_{i<j\le k} e_{i,j})=\{w^i_{\ell'}\}$.
		
		We can conclude that indeed $\bigcup_{e\in X}e$ forms a multicolored clique in $H$.
		
		\subparagraph{$(\Leftarrow)$:} Assume $H$ is a \textsc{Yes}-instance of \textsc{Multicolored Clique} and let $X$ be a solution. We construct the following labeling for the underlying graph $G$, see also \cref{fig:hardness1} for an illustration.
		
		We start with the labels for edges from the alignment gadget. 
		\begin{itemize}
			\item For every $w\in\hat{W}$ we set $\lambda(\{w^\star,w\})=1$.
			\item Let $\hat{v}_0$ belong to some connector gadget and let $w\in\hat{W}$ be neighbor of $\hat{v}_0$. Then we set $\lambda(\{w,\hat{v}_0\})=n^9$.
			\item Let $y^i$ belong to the verification gadget of color $i$ and let $w\in\hat{W}$ be neighbor of $y^i$. Then we set $\lambda(\{w,y^i\})=n^8-1$. Furthermore, we set $\lambda(\{y_i,v^i_0\})=n^8$.
			\item Let $x_1$ belong to the edge selection gadget for color combination $i,j$ and let $w\in\hat{W}$ be neighbor of $x_1$. Then we set $\lambda(\{w,x_1\})=(i+j)(20n+6)$.
		\end{itemize}
		
		Next, consider a connector gadget with vertices $\hat{v}_0,\hat{v}_0',\hat{v}_1,\hat{v}_2,\hat{v}_3,\hat{v}_3'$ and set $A,B$.
		\begin{itemize}
			\item We set $\lambda(\{\hat{v}_0,\hat{v}_1\})=\lambda(\{\hat{v}_,\hat{v}_3\})=n^9$.
			\item We set $\lambda(\{\hat{v}_0',\hat{v}_1\})=\lambda(\{\hat{v}_,\hat{v}_3'\})=n^9+2$.
			\item We set $\lambda(\{\hat{v}_1,\hat{v}_2\})=n^9+1$.
			\item For all vertices $v\in A\setminus B$ we set $\lambda(\{\hat{v}_1,v\})=n^9$ and $\lambda(\{\hat{v}_2,v\})=n^9+2$.
			\item For all vertices $v\in B$ we set $\lambda(\{\hat{v}_1,v\})=\lambda(\{\hat{v}_2,v\})=n^9$.
			\item For all vertices $v\in V^\star\setminus A$ we set $\lambda(\{\hat{v}_1,v\})=\lambda(\{\hat{v}_2,v\})=n^9+2$. (Recall that $V^\star$ denotes the set of all vertices from all edge selection gadgets and all verification gadgets).
		\end{itemize}

		Recall that the following duration requirements were specified in the construction of the instance. It is straightforward to verify that durations requirements we recall in the following are all met, assuming no faster connections are introduced.
		\begin{itemize}
			\item We have set $d(\hat{v}_0,\hat{v}_2)=d(\hat{v}_3,\hat{v}_1)=d(\hat{v}_2,\hat{v}_0')=d(\hat{v}_1,\hat{v}_3')=2$, and $d(\hat{v}_0,\hat{v}_0')=d(\hat{v}_3,\hat{v}_3')=d(\hat{v}_0,\hat{v}_3')=d(\hat{v}_3,\hat{v}_0')=3$.
			\item Let $v\in A$, then we have set $d(v,\hat{v}_0')=3$ and $d(v,\hat{v}_3')=3$.
			\item Let $v\in V^\star\setminus B$, then we have set $d(\hat{v}_0,v)=3$ and $d(\hat{v}_3,v)=3$.
			\item Let $v\in A$ and $v'\in V^\star\setminus B$ such that $v$ and $v'$ are not neighbors, then we have set $d(v,v')=3$.
		\end{itemize}
		For two connector gadgets, one with vertices $\hat{v}_0,\hat{v}_0',\hat{v}_1,\hat{v}_2,\hat{v}_3,\hat{v}_3'$ and with sets $A$ and $B$, and one with vertices $\hat{v}_0',\hat{v}_0'',\hat{v}_1',\hat{v}_2',\hat{v}_3',\hat{v}_3''$ and with sets $A'$ and $B'$, we have set the following durations.
		\begin{itemize}
			\item If there is a vertex $v\in A$ with $v\notin A'$, then we have set $d(\hat{v}_1,\hat{v}_1')=3$.
			\item If there is a vertex $v\in A$ with $v\in A'\setminus B'$, then we have set $d(\hat{v}_1,\hat{v}_2')=3$.
			\item If there is a vertex $v\in V^\star\setminus (A\setminus B)$ with $v\notin A'$, then we have set $d(\hat{v}_2,\hat{v}_1')=3$.    
			\item If there is a vertex $v\in V^\star\setminus (A\setminus B)$ with $v\in A'\setminus B'$, then we have set $d(\hat{v}_2,\hat{v}_2')=3$.
		\end{itemize}
		
		For the alignment gadget the following requirements were specified. Let $x_1$ belong to the edge selection gadget of color combination $i,j$ and let $w\in \hat{W}$ denote the neighbor of $x_1$ in the alignment gadget. Let $\hat{v}_1$ and $\hat{v}_2$ belong to the first connector gadget of the edge selection gadget for color combination $i,j$. Let $\hat{V}$ contain all vertices $\hat{v}_1$ and $\hat{v}_2$ belonging to the other connector gadgets (different from the first one of the edge selection gadget for color combination $i,j$). 
		\begin{itemize}
			\item We have set $d(w^\star,x_1)=(20n+6)(i+j)$.
			\item We have set $d(w^\star,\hat{v}_1)=n^9$, $d(w,\hat{v}_2)=n^9$, $d(w,\hat{v}_1)=n^9-(20n+6)(i+j)+1$, and $d(w,\hat{v}_2)=n^9-(20n+6)(i+j)+1$.
			\item For each vertex $v\in (V^\star\cup \hat{V})\setminus (X_{i,j}\cup \{v_{i,j}^{\star\star}\})$ we have set $d(w^\star,v)=n^9+2$ and $d(w,v)=n^9-(20n+6)(i+j)+3$.
		\end{itemize}
		
		Let $y^i$ belong to the verification gadget of color $i$ and let $w'\in \hat{W}$ denote the neighbor of $y^i$ in the alignment gadget. Let $\hat{v}_1$ and $\hat{v}_2$ belong to the connector gadget of the verification gadget for color $i$. Let $\hat{V}$ contain all vertices $\hat{v}_1$ and $\hat{v}_2$ belonging to the other connector gadgets (different from the one of the verification gadget for color $i$). Let $V_i$ denote the set of all vertices of the verification gadget of color $i$.
		\begin{itemize}
			\item We have set $d(w^\star,y^i)=n^8-1$, $d(w',v^i_0)=2$, and $d(w^\star,v^i_0)=n^8$.
			\item We have set $d(w^\star,\hat{v}_1)=n^9$, $d(w^\star,\hat{v}_2)=n^9$, $d(w',\hat{v}_1)=n^9-n^8$, and $d(w',\hat{v}_2)=n^9-n^8$.
			\item For each vertex $v\in (V^\star\cup \hat{V})\setminus V_i$ we have set $d(w^\star,v)=n^9+1$, $d(w,v)=n^9-n^8+2$, and $d(y^i,v)=n^9-n^8+2$.
		\end{itemize}
		Let $\hat{v}_1$ belong to some connector gadget. We have set $d(w^\star,\hat{v}_1)=n^9$.
		
		We will make sure that no faster connections are introduced by only using even numbers as labels and labels that are strictly smaller than $n^8-1$. Furthermore, we can already see that no vertex except the ones in $\hat{W}$ can reach $w^\star$ and no two vertices $w,w'\in\hat{W}$ can reach each other, as required.

		Next, consider the edge selection gadget for color combination $i,j$ with $i<j$.
		To describe the labels, we define a permutation $\sigma_{i,j}:[m]\rightarrow [m]$ as follows. Let $\{w^i_{\ell'}\}=X\cap W_i$ and $\{w^j_{\ell''}\}=X\cap W_j$. Then, since $X$ is a clique in $H$, we have that $\{w^i_{\ell'},w^j_{\ell''}\}=e^{i,j}_\ell\in F_{i,j}$. We set $\sigma_{i,j}(\ell)=1$ and $\sigma_{i,j}(1)=\ell$. For all $\ell'''\in[m]$ with $1\neq\ell'''\neq\ell$ we set $\sigma_{i,j}(\ell''')=\ell'''$.
		
		Let $x_1, x_2, \ldots, x_m$ belong to the edge selection gadget for color combination $i,j$.
		\begin{itemize}
			\item For all $\ell'''\in[m]$ we set $\lambda(\{x_{\ell'''},v_{i,j}^{\star}\})=(i+j)\cdot (2n({\ell'''})^2 +18n+6)$.
		\end{itemize}
		Note that using these labels, we obey the following duration constraints.
		\begin{itemize}
			\item For all $1\le \ell'''<\ell''''\le m$ we have set $d(x_{\ell'''},x_{\ell''''})=2n\cdot (i+j)\cdot((\ell'''')^2-(\ell''')^2)+1$.
		\end{itemize}
		Furthermore, we set the following labels.
		\begin{itemize}
			\item  For all $\ell'''\in[m]$ we set $\lambda(\{u^{\ell'''}_{0},v_{i,j}^{\star}\})=(i+j)\cdot (2n\cdot (\sigma_{i,j}(\ell'''))^2 +18n+6)+2$, where $u^{\ell'''}_{0}$ belongs to the edge selection gadget for $i,j$.
			\item For all $\ell'''\in[m]$ and $\ell''''\in[4n]$ we set $\lambda(\{u^{\ell'''}_{\ell''''-1},u^{\ell'''}_{\ell''''}\})=(i+j)\cdot (2n\cdot (\sigma_{i,j}(\ell'''))^2 +18n+6)+2\ell''''+2$, where $u^{\ell'''}_{\ell''''-1}$ and $u^{\ell'''}_{\ell''''}$ belong to the edge selection gadget for $i,j$.
			\item  For all $\ell'''\in[m]$ we set $\lambda(\{u^{\ell'''}_{4n},v_{i,j}^{\star\star}\})=(i+j)\cdot (2n\cdot (\sigma_{i,j}(\ell'''))^2 +18n+6)+8n+4$, where $u^{\ell'''}_{4n}$ belongs to the edge selection gadget for $i,j$.
		\end{itemize}
		
		It is straightforward to verify that with these labels we get for all $\ell'''\in[m]$ that $d(x_{\ell'''},v_{i,j}^{\star\star})=8n+5$, as required.
		Furthermore, we get that for all $\ell'''\in[m]$ that $d(v_{i,j}^{\star\star},x_{\ell'''})=\infty$. To see this, consider the following. Vertex $v_{i,j}^{\star\star}$ is not temporally connected to vertices $x_{\ell'''}$ with $\ell'''\in[m]$ via any of the connector gadgets, since for all connector gadgets where $v_{i,j}^{\star\star}\in A$ we have that all vertices $x_{\ell'''}$ with $\ell'''\in[m]$ are either contained in $B$ or they are not contained in $A$. By the construction of the labels of the connector gadgets, it follows that $v_{i,j}^{\star\star}$ cannot reach any vertex $x_{\ell'''}$ with $\ell'''\in[m]$ via the connector gadgets.
		We can observe that in all other connections in the underlying graph from $v_{i,j}^{\star\star}$ to a vertex $x_{\ell'''}$ with $\ell'''\in[m]$ are paths which have non-increasing labels, hence they also do not provide a temporal connection.
		
		Furthermore, we get that for all $1\le \ell'''\le\ell''''\le m$ we get that $d(x_{\ell'''},x_{\ell''''})=2n\cdot (i+j)\cdot((\ell'''')^2-(\ell''')^2)+1$, through a temporal path via $v_{i,j}^\star$.
		By similar observations as in the previous paragraph, we also have that $d(x_{\ell''''},x_{\ell'''})=\infty$.
		
		Finally, consider the verification gadget for color $i$. 
		Let $1\le j<i$. 
		Let $\{w^i_{\ell'}\}=X\cap W_i$ and $\{w^j_{\ell''}\}=X\cap W_j$ and $\{w^i_{\ell'},w^j_{\ell''}\}=e^{j,i}_\ell\in F_{j,i}$. 
		Recall that we set $\sigma_{j,i}(\ell)=1$ and $\sigma_{j,i}(1)=\ell$. For all $\ell''\in[m]$ with $1\neq\ell''\neq\ell$ we set $\sigma_{j,i}(\ell'')=\ell''$.
		Recall that we set $\lambda(\{u^{\ell}_{\ell'-1},u^{\ell}_{\ell'}\})=(i+j)\cdot (20n+6)+2\ell'+2$, where $u^{\ell}_{\ell'-1}$ and $u^{\ell}_{\ell'}$ belong to the edge selection gadget for $j,i$.
		Now we set for all $\ell''\in[5n-1]$ and all $\ell'''\in[m]$ the following.
		\begin{itemize}
			\item $\lambda(\{a^{i,j,\ell'''}_{5n},u^{\ell'''}_{\ell''''}\})=(i+j)\cdot (20n+6)+2\ell'$ for all $\ell''''$ such that this edge exists.
			\item $\lambda(\{a^{i,j,\ell'''}_{1},v^i_{j-1}\})=(i+j)\cdot (20n+6)+2\ell'-10n-2$.
			\item $\lambda(\{a^{i,j,\ell'''}_{\ell''},a^{i,j,\ell'''}_{\ell''+1}\})=(i+j)\cdot (20n+6)+2\ell'-10n+2\ell''$.
			\item $\lambda(\{b^{i,j,\ell'''}_{5n},u^{\ell'''}_{\ell''''}\})=(i+j)\cdot (20n+6)+2\ell'+4$ for all $\ell''''$ such that this edge exists.
			\item $\lambda(\{b^{i,j,\ell'''}_{1},v^i_{j}\})=(i+j)\cdot (20n+6)+2\ell'+10n+6$.
			\item $\lambda(\{b^{i,j,\ell'''}_{\ell''},b^{i,j,\ell'''}_{\ell''+1}\})=(i+j)\cdot (20n+6)+2\ell'+10n-2\ell''+4$.
		\end{itemize}
		For all $\ell''\in[13n]$ we set the following.
		\begin{itemize}
			\item $\lambda(\{\hat{u}^i_{\ell''},\hat{u}^i_{\ell''+1}\})=2i\cdot (20n+6)+2\ell'-10n+2\ell''-2$.
			\item $\lambda(\{v^i_{i-1},\hat{u}^i_{1}\})=2i\cdot (20n+6)+2\ell'-10n-2$.
			\item $\lambda(\{v^i_i,\hat{u}^i_{13n+1}\})=2i\cdot (20n+6)+2\ell'+16n+4$.
		\end{itemize}
		Let $i<j\le k$. 
		Let $\{w^i_{\ell'}\}=X\cap W_i$ and $\{w^j_{\ell''}\}=X\cap W_j$ and $\{w^i_{\ell'},w^j_{\ell''}\}=e^{i,j}_\ell\in F_{i,j}$. 
		Recall that we set $\sigma_{i,j}(\ell)=1$ and $\sigma_{i,j}(1)=\ell$. For all $\ell''\in[m]$ with $1\neq\ell''\neq\ell$ we set $\sigma_{i,j}(\ell'')=\ell''$.
		Recall that we set $\lambda(\{u^{\ell}_{3n+\ell'-1},u^{\ell}_{3n+\ell'}\})=(i+j)\cdot (20n+6)+2\ell'+6n+2$, where $u^{\ell}_{3n+\ell'-1}$ and $u^{\ell}_{3n+\ell'}$ belong to the edge selection gadget for $i,j$.
		Now we set for all $\ell''\in[5n-1]$ and all $\ell'''\in[m]$ the following.
		\begin{itemize}
			\item $\lambda(\{a^{i,j,\ell'''}_{5n},u^{\ell'''}_{\ell''''}\})=(i+j)\cdot (20n+6)+2\ell'+6n$ for all $\ell''''$ such that this edge exists.
			\item $\lambda(\{a^{i,j,\ell'''}_{1},v^i_{j-1}\})=(i+j)\cdot (20n+6)+2\ell'-4n-2$.
			\item $\lambda(\{a^{i,j,\ell'''}_{\ell''},a^{i,j,\ell'''}_{\ell''+1}\})=(i+j)\cdot (20n+6)+2\ell'-4n+2\ell''$.
			\item $\lambda(\{b^{i,j,\ell'''}_{5n},u^{\ell'''}_{\ell''''}\})=(i+j)\cdot (20n+6)+2\ell'+6n+4$ for all $\ell''''$ such that this edge exists.
			\item $\lambda(\{b^{i,j,\ell'''}_{1},v^i_{j}\})=(i+j)\cdot (20n+6)+2\ell'+16n+6$.
			\item $\lambda(\{b^{i,j,\ell'''}_{\ell''},b^{i,j,\ell'''}_{\ell''+1}\})=(i+j)\cdot (20n+6)+2\ell'+16n-2\ell''+4$.
		\end{itemize}
		
		Now we verify that we meet the duration requirements.
		For all $0\le j<j'<i$ and all $i\le j<j'\le k$ we have set the following.
		\begin{itemize}
			\item We set $d(v^i_j,v^i_{j'})=(20n+6)(j'-j)-1$.
		\end{itemize}
		To see that this holds, we analyse the fastest paths from vertices $v^i_{j-1}$ to vertices $v^i_j$ for $j\in[k]\setminus\{i\}$. 
		Let $\{w^i_{\ell'}\}=X\cap W_i$ and $\{w^j_{\ell''}\}=X\cap W_j$ and $\{w^i_{\ell'},w^j_{\ell''}\}=e^{i,j}_\ell\in F_{i,j}$. Then, starting at $v^i_{j-1}$, we follow the vertices in $\{a^{i,j,\ell}_{\ell''} :  \ell''\in[5n]\}$ to arrive at $u^{\ell}_{\ell'-1}$. From there we move to $u^{\ell}_{\ell'}$ and from there we continue along the vertices in $\{b^{i,j,\ell}_{\ell''} :  \ell''\in[5n]\}$ to arrive at $v^i_j$. By construction this describes a fastest temporal path from $v^i_{j-1}$ to $v_j$ with duration $20n+5$.
		To get from $v^i_j$ to $v^i_{j'}$ for $0\le j<j'<i$ we move from $v^i_j$ to $v^i_{j+1}$ in the above described fashion and from there to $v^i_{j+1}$ and so on until we arrive at $v^i_{j'}$. By construction this yields a fastest temporal path from $v^i_j$ to $v^i_{j'}$ with duration $(20n+6)(j'-j)-1$, as required.
		The case where $i\le j<j'\le k$ is analogous.
		
		For all $0\le j<i$ and all $i\le j'\le k$ we have set the following.
		\begin{itemize}
			\item We set $d(v^i_j,v^i_{j'})=(20n+6)(j'-j)+6n-1$.
		\end{itemize}
		Here we move from $v^i_j$ to $v^i_{i-1}$ in the above described fashion. Then we move from $v^i_{i-1}$ to $v^i_i$ along vertices $\{\hat{u}^i_{\ell''} :  \ell''\in[13+1]\}$ and then we move from $v^i_i$ to $v^i_{j'}$ again in the above described fashion. By construction this yields a fastest temporal path from $v^i_j$ to $v^i_{j'}$ with duration $(20n+6)(j'-j)+6n-1$, as required.
		
		By similar observations as in the analysis for the edge selection gadgets, we also get that for all $1\le j<j'\le k$ that $d(v^i_{j'},v^i_j)=\infty$.
		
		This finishes the proof.

		\subparagraph{Infinity gadget.} Finally, we show how to get rid of the infinity entries in $D$ and how to allow a finite $\Delta$. To this end, we introduce the \emph{infinity gadget}. We add four vertices $z_1, z_2, z_3, z_4$ to the graph and we set $\Delta=n^{11}$. Let $V$ denote the set of all remaining vertices. We set the following durations.
		\begin{itemize}
			\item For all $v\in V$ we set $d(z_1,v)=2$, $d(z_2,v)=d(v,z_2)=1$, $d(z_3,v)=d(v,z_3)=1$, and $d(z_4,v)=2$.
			Furthermore, we set $d(v,z_1)=n^{11}$ and $d(v,z_4)=n^{10}-1$.
			\item We set $d(z_1,z_2)=d(z_2,z_1)=1$, $d(z_2,z_3)=d(z_3,z_2)=1$, and $d(z_3,z_4)=d(z_4,z_3)=1$.
			\item We set $d(z_1,z_3)=3$, $d(z_3,z_1)=n^{11}-1$, $d(z_2,z_4)=n^{10}-2$, and $d(z_4,z_2)=n^{11}-n^{10}+4$.
			\item We set $d(z_1,z_4)=n^{10}$ and $d(z_4,z_1)=2n^{11}-n^{10}+2$.
			\item For every pair of vertices $v,v'\in V$ where previously the duration of a fastest path from $v$ to $v'$ was specified to be infinite, we set $d(v,v')=n^{10}$.
		\end{itemize}
		Now we analyse which implications we get for the labels on the newly introduced edges.
		Assume $\lambda(\{z_1,z_2\})=t$, then we get the following.
		For all $v\in V$ we have that $d(z_1,v)=2$ and hence we get that $\lambda(\{z_2,v\})=t+1$. 
		Since $d(z_1,z_4)=n^{10}$, we have that $\lambda(z_3,z_4)=t+n^{10}-1$.
		From this follows that for all $v\in V$, since $d(z_4,v)=2$, that $\lambda(\{z_3,v\})=t+n^{10}$.
		Finally, since $d(z_1,z_3)=3$, we have that $\lambda(\{z_2,z_3\})=t+2$.
		For an illustration see \cref{fig:hardness2}. It is easy to check that all duration requirements between vertex pairs in $\{z_1,z_2,z_3,z_4\}$ are met and that all duration requirements between each vertex $v\in V$ and each vertex in  $\{z_1,z_2,z_3,z_4\}$ are met. Furthermore, it is easy to check that the gadget increases the feedback vertex set by two ($z_2$ and $z_3$ need to be added).
		
		\begin{figure}[t]
				\centering
				
				\includegraphics{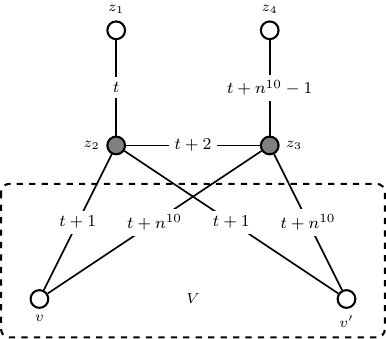}
				\caption{Illustration of the infinity gadget. Gray vertices need to be added to the feedback vertex set.}\label{fig:hardness2}
			\end{figure}
			
			Lastly, consider two vertices $v,v'\in V$. Note that before the addition of the infinity gadget, by construction of $G$ we have that $d(v,v')\le n^9+2$ or $d(v,v')=\infty$. Furthermore, if $D$ is a \textsc{Yes}-instance, we have shown in the correctness proof of the reduction that the difference between the smallest label and the largest label is at most $n^9+1$. This implies that for a vertex pair $v,v'\in V$ with $d(v,v')=\infty$ we have in the periodic case with $\Delta=n^{11}$, that $d(v,v')\ge n^{11}-n^9>n^{10}$. Which means, after adding the vertices and edges of the infinity gadget, we indeed have that $d(v,v')=n^{10}$. For all vertex pairs $v,v'$ where in the original construction we have $d(v,v')\neq\infty$, we can also see that adding the infinity gadget and setting $\Delta=n^{11}$ does not change the duration of a fastest path from $v$ to $v'$, since all newly added temporal paths have duration at least $n^{10}$. We can conclude that the originally constructed instance $D$ is a \textsc{Yes}-instance if and only if it remains a \textsc{Yes}-instance after adding the infinity gadget and setting $\Delta=n^{11}$.
		\end{proof}

		\section{Algorithms for \deltaExact}\label{sec:algos}
		In this section we provide several algorithms for \deltaExact. By \cref{thm:NPhardness} we have that \deltaExact\ is NP-hard in general, hence we start by identifying restricted cases where we can solve the problem in polynomial time.
		We first show in \cref{sec:treealgo} that if the underlying graph $G$ of an instance $(D, \Delta)$ of \deltaExact\ is a tree, then we can determine desired $\Delta$-periodic labeling $\lambda$ of $G$ in polynomial time. In \cref{sec:FPT} we generalize this result. We show that \deltaExact\ is fixed-parameter tractable when parameterized by the feedback edge number of the underlying graph. Note that our parameterized hardness result (\cref{thm:W1wrtFVS}) implies that we presumably cannot replace the feedback edge number with the smaller parameter feedback vertex number, or any other parameter that is smaller than the feedback vertex number, such as e.g.\ the treewidth.

		\subsection{Polynomial-time algorithm for trees}\label{sec:treealgo}
		We now provide a polynomial-time algorithm for \deltaExact\ when the underlying graph is a tree.
		Let $D$ be the input matrix and let the underlying graph $G$ of $D$ be a tree on $n$ vertices $\{v_1, v_2, \dots, v_n\}$.
		Let $v_i,v_j$ be two arbitrary vertices in $G$, then we know that there exists a unique (static) path $P_{i,j}$ from $v_i$ to $v_j$. We will heavily exploit this in our algorithm.

		\begin{theorem} \label{thm:deltaExact-PolyTimeTrees}
			\deltaExact\ can be solved in polynomial time on trees.
		\end{theorem}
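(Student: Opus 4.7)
The plan is to construct a candidate labeling edge-by-edge from the entries of $D$, exploiting the fact that in a tree every pair of vertices is joined by a unique static path and hence by a unique (and therefore fastest) temporal path, and then to verify at the end that this candidate actually realizes $D$.

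First, I would build the underlying graph $G$ by adding an edge between $v_i$ and $v_j$ whenever $D_{i,j}=1$, and immediately reject unless $G$ is a tree on $n$ vertices. I would then root $G$ at an arbitrary vertex $r$, pick an arbitrary edge $e_0$ incident to $r$, and set $\lambda(e_0)=1$; this choice is without loss of generality, because shifting every label by the same constant modulo $\Delta$ preserves all fastest-path durations. I then traverse $G$ in BFS order: whenever I reach a vertex $v$ via an already labeled edge $e_1=uv$ and must label an incident edge $e_2=vw$, the length-two subpath $(u,v,w)$ is the unique (hence fastest) temporal path from $u$ to $w$, so \cref{lemma:unique-fastest-path-allDelays} combined with \cref{obs:durationPwithWaitingTimes} forces $\tau_v^{u,w}=D_{u,w}-1$. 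I would reject if this value is not in $\{1,\ldots,\Delta\}$, and otherwise set $\lambda(e_2)\equiv\lambda(e_1)+\tau_v^{u,w}\pmod{\Delta}$, taking $\Delta$ as the representative of $0$. Every edge is processed exactly once, producing a fully determined candidate labeling.

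Finally, I would verify $\lambda$ by computing, for each ordered pair $(v_i,v_j)$, the duration of the unique temporal path from $v_i$ to $v_j$ via \cref{obs:durationPwithWaitingTimes} and checking it equals $D_{i,j}$. Output \textsc{Yes} with $\lambda$ iff every entry matches, and \textsc{No} otherwise. The $(\Leftarrow)$ direction is immediate since $\lambda$ is an explicit witness. For $(\Rightarrow)$, any realizing labeling, after shifting so that its value on $e_0$ equals $1$, is forced edge-by-edge to coincide with the one we build (by the same application of \cref{lemma:unique-fastest-path-allDelays}), so all the final duration checks will necessarily pass.

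The main obstacle, and the reason the global verification is needed, is that the propagation step only uses the length-two entries $D_{u,w}$ at each branching of the tree; consistency of $D$ against longer-path durations and across sibling edges sharing a common parent is not enforced locally and must be caught at the end. This matters precisely for \textsc{No}-instances where $D$ looks plausible at every individual branching but fails globally. The overall running time is clearly polynomial: $O(n^2)$ for constructing $G$ and checking it is a tree, $O(n)$ for the BFS propagation of labels, and $O(n^3)$ for verifying all $n^2$ path durations of length at most $n$.
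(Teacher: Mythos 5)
Your proposal is correct and shares the paper's core idea: in a tree, every pair of vertices is joined by a unique static path, which is therefore the underlying path of the unique (hence fastest) temporal path between them; this forces the travel delays at internal vertices via \cref{lemma:unique-fastest-path-allDelays} and \cref{obs:durationPwithWaitingTimes}, and fixing one edge label then propagates a unique labeling modulo a global shift. Where you differ from the paper is in how consistency is enforced. The paper's algorithm walks, from each vertex $v_i$, along the paths to vertices at maximum duration, extracts travel delays along these long paths, compares them against any previously stored values, rejects on conflict, and then relies on the telescoping $\sum_{\ell}(D_{i,\ell+1}-D_{i,\ell})=D_{i,j}-1$ to conclude that the resulting labeling realizes $D$ without a separate verification pass. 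You instead extract travel delays only from length-two subpaths $(u,v,w)$ via $\tau_v^{u,w}=D_{u,w}-1$ (which is exactly the value those entries force), propagate in one BFS sweep, and then run an explicit $O(n^3)$ verification of all $n^2$ duration entries against the constructed labeling. This is arguably the cleaner route to soundness: the final check subsumes every consistency requirement, including the ones the paper catches incrementally through conflicting travel-delay assignments, and your completeness argument (a realizing labeling, after shifting so $\lambda(e_0)=1$, is forced to coincide with the constructed one) is correct. One small imprecision: at the root $r$ only $e_0=rx$ is labeled initially, and the remaining edges $ry$ incident to $r$ are not covered by the stated rule ``when I reach $v$ via a labeled edge $e_1=uv$'', since $r$ is not reached via any edge; you should process $r$ using $e_0$ as the anchor, i.e.\ $\lambda(ry)\equiv\lambda(rx)+(D_{x,y}-1)\pmod{\Delta}$, or equivalently root the BFS at $x$. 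This is easy to patch and does not affect correctness or the claimed running time.
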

		
		\begin{proof}
			Let $D$ be an input matrix for problem \deltaExact\ of dimension $n \times n$.
			Let us fix the vertices of the corresponding graph $G$ of $D$ as $v_1, v_2, \dots, v_n$, where vertex $v_i$ corresponds to the row and column $i$ of matrix $D$.
			This can be done in polynomial time as we need to loop through the matrix $D$ once and connect vertices $v_i, v_j$ for which $D_{i,j} = 1$. At the same time we also check if $D_{i,i} = 0$, for all $i \in [n]$.
			When $G$ is constructed we run DFS algorithm on it and check that it has no cycles.
			If at any step we encounter a problem, our algorithm stops and returns a negative answer.
			
			Having computed $G$, our algorithm proceeds as follows. We pick an arbitrary edge $f$ and give it label one, that is, $\lambda(f)=1$. Now we push all edges incident with $f$ into a (initially empty) queue. Now we repeat the following as long as the queue is not empty:
			\begin{itemize}
				\item Pop edge $e=\{u,v\}$ from the queue. Since $e$ was pushed into the queue, there is an edge $e'$ incident with $e$ that already obtained a label. Let w.l.o.g.\ $e'=\{v,w\}$. Then we set $\lambda(e)=(\lambda(e')-D_{u,w}+1)\bmod \Delta$.
				\item Push all edges incident with $e$ that have not received a label yet into the queue.
			\end{itemize}
			When the queue is empty, all edges have received a label. Iterate over all vertex pairs $u,v$ and check whether the fastest path from $u$ to $v$ in $(G,\lambda)$ has duration $D_{u,v}$. If this check succeeds for all vertex pairs, output the labeling $\lambda$, otherwise abort.
			
			It is easy to see that the described algorithm runs in polynomial time. In the remainder, we prove that it is correct.
			
			$(\Rightarrow)$: Since the algorithm checks at the end whether all durations specified in $D$ are realized by the corresponding fastest paths, we clearly face a yes-instance whenever the algorithm outputs a labeling.
			
			$(\Leftarrow)$: Assume we face a yes-instance, then there exists a labeling $\lambda^\star$ that realizes all durations specified in $D$. Let $e^\star$ denote the edge initially picked by the algorithm. For all edges $e$ let $\lambda(e)=(\lambda^\star(e)-\lambda^\star(e^\star)+1)\bmod \Delta$. Clearly, the labeling $\lambda$ also realizes all durations specified in $D$ since $\lambda$ is obtained by adding the constant $(1-\lambda^\star(e^\star))$ modulo $\Delta$ to all labels of $\lambda^\star$ which does not change the duration of any temporal path, that is all durations in $(G,\lambda^\star)$ are the same as their counterparts in $(G,\lambda)$. We claim that our algorithm computes and outputs $\lambda$.
			
			We prove that our algorithm computes $\lambda$ by induction on the distance of the labeled edges to $e^\star$, where the distance of two edges $e,e'$ is defined as the length of a shortest path that uses $e$ as its first edge and $e'$ as its last edge.
			
			Initially, our algorithm labels $e^\star$ with one, which equals $\lambda(e^\star)$. Now let $e$ be an edge popped off the queue by the algorithm in some iteration, that is on the distance $i$ from $e^\star$. Let $e'$ be the edge incident with $e$ that is on the distance $i-1$ from $e^\star$. Since $G$ is a tree $e'$ has already been considered by the algorithm and thus already has a label. By induction we have that the algorithm labeled $e'$ with $\lambda(e')$. 
			Assume that $e=\{u,v\}$ and $e'=\{v,w\}$. Since $G$ is a tree there is only one path from $u$ to $w$ in $G$ and it uses edges $e$ and $e'$. It follows that $\lambda(e')-\lambda(e)+1=D_{u,w}$ if $\lambda(e')>\lambda(e)$, and $\lambda(e')-\lambda(e)+\Delta+1=D_{u,w}$ otherwise. Our algorithm labels $e$ with $(\lambda(e')-D_{u,w}+1)\bmod \Delta$. It is straightforward to verify that the label of $e$ computed by the algorithm equals $\lambda(e)$. It follows that the algorithm computes~$\lambda$.
		\end{proof}

		\subsection{FPT-algorithm for feedback edge number}\label{sec:FPT}
		
		Recall from \cref{sec:treealgo} that the main reason, for which \deltaExact\ is straightforward to solve on trees, is twofold: 
		\begin{itemize}
			\item between any pair of vertices $v_i$ and $v_j$ in the tree $T$, there is a \emph{unique} path $P$ in $T$ from $v_i$ to $v_j$, and 
			\item in any periodic temporal graph $(T,\lambda,\Delta)$ and any fastest temporal path $P=((e_1,t_1),\ldots,(e_i,t_i),\ldots,(e_j,t_j),\ldots,(e_{\ell-1},t_{\ell-1}))$ from $v_1$ to $v_{\ell}$ we have that the sub-path $P'=((e_i,t_i),\ldots,(e_{j-1},t_{j-1}))$ is also a fastest temporal path from $v_i$ to $v_j$.
		\end{itemize}
		However, these two nice properties do not hold when the underlying graph is not a tree. For example, in~\cref{fig:ftpExample}, the fastest temporal path from $u$ to $v$ is $P_{u,v}$ (depicted in blue) goes through~$w$, however the sub-path of $P_{u,v}$ that stops at $w$ is not the fastest temporal path from $u$ to $w$. The fastest temporal path from $u$ to $w$ consists only of the single edge $uw$ (with label $9$ and duration $1$, depicted in red).
		
		Nevertheless, we prove in this section that we can still solve \deltaExact\ efficiently if the underlying graph is similar to a tree; more specifically we show the following result, which turns out to be non-trivial.
		
		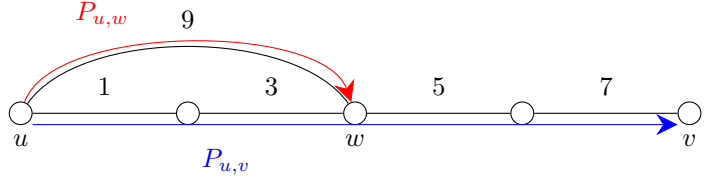
\begin{figure}[t]
			\centering
			\begin{tikzpicture}[xscale=2.2,yscale=0.5]
				\node[vert,label=below:$u$] (v1) at (1,0) {};
				\node[vert] (v2) at (2,0) {};
				\node[vert,label=below:$w$] (v3) at (3,0) {};
				\node[vert] (v4) at (4,0) {};
				\node[vert,label=below:$v$] (v5) at (5,0) {};
				
				\draw[transform canvas={yshift=-1.5mm}, blue]
				(v1) edge[diredge2]  node[pos=0.3,yshift=-2,label=below:$P_{u,v}$] {} (v5) ;
				
				\draw[red]
				(v1) edge[diredge2] [out=85,in=95,distance=2.1cm] node[pos=0.3,yshift=2,label=above:$P_{u,w}$] {} (v3) ;
				
				\draw (v1) -- node[label=above:$1$] {} (v2) -- node[label=above:$3$] {}  (v3) -- node[label=above:$5$] {} (v4) -- node[label=above:$7$] {} (v5);
				\draw (v1) to [out=80,in=100,distance=2cm] node[label=above:$9$] {} (v3);

			\end{tikzpicture}
			\caption{An example of a temporal graph (with $\Delta \geq 9$), where the fastest temporal path $P_{u,v}$ (in blue) from $u$ to $v$ is of duration $7$,
				while the fastest temporal path $P_{u,w}$ (in red) from $u$ to a vertex $w$, that is on a path $P_{u,v}$, is of duration $1$ and is not a subpath of $P_{u,v}$.		
				\label{fig:ftpExample}}
		\end{figure}
		
		\begin{theorem}\label{thm:FPTwrtFES}
			\deltaExact\ is in FPT when parameterized by the feedback edge number of the underlying graph.
		\end{theorem}
		
		From \cref{thm:W1wrtFVS} and \cref{thm:FPTwrtFES} we immediately get the following, which is the main result of the paper.
		
		\begin{corollary}
			\deltaExact\ is:
			\begin{itemize}
				\item in FPT when parameterized by the \emph{feedback edge number} or any larger parameter, such as the \emph{maximum leaf number}.
				\item W[1]-hard when parameterized by the \emph{feedback vertex number} or any smaller parameter, such as: 
				\emph{treewidth}, 
				\emph{degeneracy}, 
				\emph{cliquewidth}, 
				\emph{distance to chordal graphs}, and 
				\emph{distance to outerplanar graphs}.
			\end{itemize}
		\end{corollary}
		
		Before presenting the structure of our algorithm for \cref{thm:FPTwrtFES}, observe that, in a static graph, the number of paths between two vertices can be upper-bounded by a function $f(k)$ of the feedback edge number $k$ of the graph~\cite{casteigts2021finding}.
		This is true as any such path can traverse $0,1,2,\dots k$ feedback edges in different order.
		Therefore, for any fixed pair of vertices $u$ and $v$, we can ``guess'' the edges of the fastest temporal path from $u$ to $v$
		(by guess we mean enumerate and test all possibilities).
		However, 
		for an FPT algorithm with respect to $k$, we cannot afford to guess the edges of the fastest temporal path for each of the $O(n^2)$ pairs of vertices. 
		To overcome this difficulty, our algorithm follows this high-level strategy:
		\begin{itemize}
			\item We identify a small number $f(k)$ of ``important vertices''.
			\item For each pair $u,v$ of important vertices, we guess the edges of the fastest temporal path from $u$ to $v$ (and from $v$ to $u$).
			\item From these guesses we can still not deduce the edges of the fastest temporal paths between many pairs of non-important vertices. However, as we prove, it suffices to guess only a small number of specific auxiliary structures (to be defined later).
			\item From these guesses we deduce fixed relationships between the labels of most of the edges of the graph. 
			\item For all the edges, for which we have not deduced a label yet, we introduce a \emph{variable}. With all these variables, we build an Integer Linear Program (ILP). 
			Among the constraints in this ILP we have that, for each of the $O(n^2)$ pairs of vertices $u,v$ in the graph, the duration of one specific temporal path from $u$ to $v$ (according to our guesses) is \emph{equal} to the desired duration $D_{u,v}$, while the duration of each of the other temporal path from $u$ to $v$ is \emph{at least} $D_{u,v}$.
			\item 
			Each specific configuration of fastest temporal paths among all pairs of vertices corresponds to a specific ILP instance. By exhaustively trying all possible fastest temporal paths configurations it follows that our instance of \deltaExact\ has a solution if and only if at least one of these ILPs has a feasible solution.
			As each ILP can be solved in FPT time with respect to $k$ by Lenstra's Theorem~\cite{Lenstra1983Integer} (the number of variables is upper bounded by a function of $k$), we obtain our FPT algorithm for \deltaExact\ with respect to $k$.
		\end{itemize}

		For the remainder of this section, we fix the following notation. 
		Let $D$ be the input matrix of \deltaExact, \ie
		the matrix of the fastest temporal paths between all pairs of $n$ vertices, and let $G$ be its underlying graph, on $n$ vertices and $m$ edges.
		With $F$ we denote a minimum feedback edge set of $G$, and with $k$ the feedback edge number of $G$.
		We are now ready to present our FPT algorithm. For an easier readability we split the description and analysis of the algorithm in five subsections.
		We start with a preprocessing procedure for graph $G$, where we define a set of interesting vertices which then allows us to guess the desired structures.
		Next we introduce some extra properties of our problem, that we then use to create ILP instances and their constraints.
		At the end we present how to solve all instances and produce the desired labeling $\lambda$ of $G$, if possible.

		\subsubsection{Preprocessing of the input \label{sec:preprocessing-FPT}}
		From the underlying graph $G$ of $D$ we extract a (connected) graph $G'$ by
		iteratively removing vertices of degree one from $G$,
		and denote with 
		\[
		Z = V(G) \setminus V(G').
		\]
		Then we determine a minimum feedback edge set $F$ of $G'$.
		Note that $F$ is also a minimum feedback edge set of $G$.
		Lastly, we determine sets $U$, of \emph{vertices of interest}, and $U^*$ of the neighbors of vertices of interest, in the following way.
		Let $T$ be a spanning tree of $G'$, with $F$ being the corresponding feedback edge set of $G'$.
		Let $V_1 \subseteq V(G')$ be the set of leaves in the spanning tree $T$, $V_2 \subseteq V(G')$ be the set of vertices of degree two in $T$, that are incident to at least one edge in $F$, 
		and let $V_3 \subseteq V(G')$ be the set of vertices of degree at least $3$ in $T$. 
		Then $|V_1| + |V_2| \leq 2k$, since every leaf in $T$ and every vertex in $V_2$ is incident to at least one edge in $F$,
		and $|V_3| \leq |V_1|$ by the properties of trees.
		We denote with 
		\[U = V_1 \cup V_2 \cup V_3\]
		the set of vertices of interest. It follows that $|U| \leq 4k$.
		We set $U^*$ to be the set of vertices in $V(G') \setminus U$ that are neighbors of vertices in $U$, \ie 
		\[U^* = \{v \in V(G') \setminus U  :  u \in U, v \in N(u)\}.\]
		Again, using the tree structure, we get that for any $u \in U$ its neighborhood is of size $|N(u)| \in O(k)$, since every neighbor of $u$ is the first vertex of a (unique) path to another vertex in $U$.
		It follows that $|U^*| \in O(k^2)$.
		
		From the construction of~$Z$ (\ie by exhaustively removing vertices of degree one from~$G$), it follows that $G[Z]$ (the graph induced in $G$ by~$Z$) is a forest, \ie consists of disjoint trees. 
		Each of these trees has a unique neighbor $v$ in~$G'$. 
		Denote by $T_v$ the tree obtained by considering such a vertex $v$ and all the trees from $G[Z]$ that are incident to $v$ in $G$. 
		We then refer to $v$ as the \emph{clip vertex} of the tree $T_v$.
		In the case where $v$ is a vertex of interest we define also the set $Z^*_v$ of \emph{representative vertices} of~$T_v$, as follows.
		We first create an empty set~$C_w$ for every vertex $w$ that is a neighbor of $v$ in~$G'$.
		We then iterate through every vertex~$r$ that is in the first layer of the tree $T_v$ (\ie vertex that is a child of the root $v$ in the tree~$T_v$), check the matrix $D$ and find the vertex~$w \in N_{G'}(v)$ that is on the smallest duration from $r$.
		In other words, for an $r \in N_{T_v}(v)$ we find $w \in N_{G'}(v)$ such that $D_{r,w} \leq D_{r,w'}$ for all $w' \in  N_{G'}(v)$.
		We add vertex $r$ to $C_w$.
		In the case when there exists also another vertex $w' \in  N_{G'}(v) $ for which $D_{r,w'} = D_{r,w}$, we add $r$ also to the set $C_{w'}$. In fact, in this case $C_{w'} = C_w$.
		At the end we create $|N_{G'}(v)| \in O(k)$ sets $C_w$, whose union contains all children of $v$ in $T_v$. 
		For every two sets $C_w$ and $C_{w'}$, where $w,w'\in N_{G'}(v)$, we have that either $C_w = C_{w'}$, or $C_w \cap C_{w'} = \emptyset$. 
		We interpret each of these sets $\{C_w : w \in N_{G'}(v)\}$ as an \emph{equivalence class} of the neighbors of $v$ in the tree $T_v$. 
		Now, from each equivalence class~$C_w$ we choose an arbitrary vertex $r_w \in C_w$ and put it into the set $Z^*_v$.
		We repeat the above procedure for all trees $T_u$ with the clip vertex $u$ from $U$, and define $Z^*$ as  
		\begin{equation}
			Z^* = \bigcup\limits_{v \in U} Z^*_v.
		\end{equation}
		Since $|U| \in O(k)$ and for each $u \in U$ it holds $|N_{G'}(u)| \in O(k)$, we get that $|Z^*| \in O(k^2)$. 
		Finally, the set of \emph{important vertices} is defined as the set $U \cup U^{\ast} \cup Z^{\ast}$.
		For an illustration see \cref{fig:labelingVertices}. Note that determining sets $U, U^*$ and $Z^*$ takes linear time.
		
		\begin{figure}[t]
			\centering
			\includegraphics[width=0.8\columnwidth]{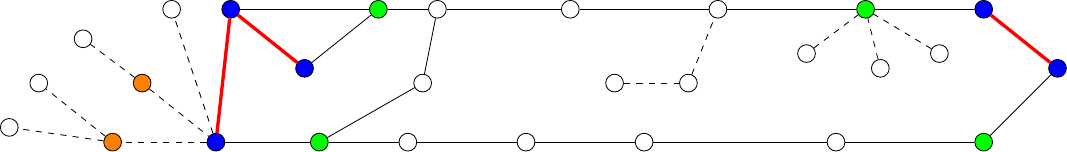}
			\caption{An example of a graph with its important vertices: $U$ (in blue), $U^*$ (in green) and $Z^*$ (in orange).
				Corresponding feedback edges are marked with a thick red line, while dashed edges represent the edges (and vertices) ``removed'' from $G'$ at the initial step.
				\label{fig:labelingVertices}}
		\end{figure}

		Recall that a labeling $\lambda$ of $G$ satisfies $D$ if the duration of a fastest temporal path from each vertex $v_i$ to each other vertex $v_j$ equals $D_{v_i, v_j}$.
		In order to find a labeling that satisfies this property we split our analysis in nine cases.
		We consider the fastest temporal paths 
		where the starting vertex is in one of the sets $U,V(G') \setminus U, Z$,
		and similarly the destination vertex is in one of the sets $U,V(G') \setminus U, Z$.
		In each of these cases, we guess the underlying path $P$ that at least one fastest temporal path from the vertex $v_i$ to $v_j$ follows, 
		which results in one equality constraint for the labels on the path $P$. 
		For all other temporal paths from $v_i$ to $v_j$ we know that they cannot be faster, so we introduce inequality constraints for them.
		This results in producing $f(k) \cdot |D| ^ {O(1)}$ constraints. 
		Note that we have to do this while keeping the total number of variables upper-bounded by some function in $k$.
		
		For an easier understanding and analysis of the algorithm, we give the following definition.
		\begin{definition}\label{def:segments}
			Let $U \subseteq V(G')$ be a set of vertices of interest and let $u,v \in U$.
			A path $P = (u=v_1,v_2, \dots, v_p = v)$  with at least two edges in graph $G'$, where all inner vertices are not in $U$, \ie $v_i \notin U$ for all $i \in \{ 2, 3, \dots, p-1\}$,
			is called a \emph{segment} from $u$ to $v$, which we denote as $S_{u,v}$.
		\end{definition}
		Note from \cref{def:segments} that $S_{u,v} \neq S_{v,u}$ since we consider paths to be directed. It is also worth emphasizing that $S_{v,u}$ is essentially the reverse path of $S_{u,v}$.
		Furthermore, it's important to observe that a temporal path in $G'$ between two vertices of interest is either a segment or consists of a sequence of segments.
		Moreover, any inner vertex $v_i$ in the segment $S_{u,v}$ ($v_i \in S_{u,v} \setminus \{u,v\}$) is part of precisely two segments: $S_{u,v}$ and $S_{v,u}$.
		Given that we have at most $4k$ interesting vertices in $G'$, we can deduce the following crucial result.
		\begin{corollary}\label{obs:FPT-k2segments}
			There are $O(k^2)$ segments in $G'$.
		\end{corollary}
		
		\subsubsection{Guessing necessary structures \label{sec:FPT-guessing}}
		Once the sets $U, U^*$ and $Z^*$ are determined, we are ready to start guessing  the necessary structures.
		Note that whenever we say that we guess the fastest temporal path between two vertices, we mean that we guess the underlying path of a representative fastest temporal path between those two vertices.
		To describe the guesses, we introduce the following notation. Let $u,v,x$ be three vertices in $G'$. We write $u \leadsto x \rightarrow v$ to denote a temporal path from $u$ to $v$ that passes through $x$, and then goes directly to $v$ (via one edge or a unique path in $G'$). 
		In other words, if the fastest path between two vertices is not uniquely determined we denote it by $\leadsto$, while if it is unique we denote it by $\rightarrow$.
		%
		We guess the following paths.
		\begin{enumerate}[G-1.]
			\setcounter{enumi}{\value{guesscounter}}
			\item \label{FPT-guessFTPamongU}
			The fastest temporal paths between all pairs of vertices of $U$.
			For a pair $u,v$ of vertices in $U$, there are $k^{O(k)}$ possible paths in $G'$ between them. 
			Therefore, we have to try all $k^{O(k)}$ possible paths, where at least one of them will be a fastest temporal path from $u$ to $v$, respecting the values from $D$.
			Repeating this procedure for all pairs of vertices $u,v \in U$ we get $k^{O(k^3)}$ different variations of the fastest temporal paths between all pairs of vertices in $U$.
			\item \label{FPT-guessFTPamongZstar}
			The fastest temporal paths between all pairs of vertices in $Z^*$, 
			which by similar arguing as for vertices in $U$, gives us $k^{O(k^5)}$ guesses.
			\item \label{FPT-guessFTPamongUstar}
			The fastest temporal paths between all pairs of vertices in $U^*$.
			This gives us $k^{O(k^5)}$ guesses.
			\item \label{FPT-guessFTPamongUandUstar}
			The fastest temporal paths from vertices of $U$ to vertices in $U^*$,
			and vice versa, the fastest temporal paths from vertices in $U^*$ to vertices in $U$.
			This gives us $k^{O(k^4)}$ guesses.
			\item \label{FPT-guessFTPamongUandZstar}
			The fastest temporal paths from vertices of $U$ to vertices in $Z^*$,
			and vice versa.
			This gives us $k^{O(k^4)}$ guesses.
			\item \label{FPT-guessFTPamongUstarandZstar}
			The fastest temporal paths from vertices of $U^*$ to vertices in $Z^*$,
			and vice versa.
			This gives us $k^{O(k^5)}$ guesses.
			\setcounter{guesscounter}{\value{enumi}}
		\end{enumerate}
		
		\begin{figure}[t]
			\centering
			\includegraphics[width=0.75\columnwidth]{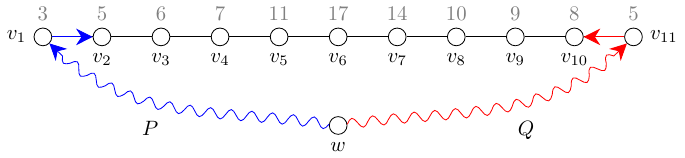}
			\caption{In the above graph vertices $v_1, v_{11}, w$ are in $U$, while $v_2, v_{10}$ are in $U^*$. 
				Numbers above all $v_i$ represent the values of the fastest temporal paths from $w$ to each of them (\ie the entries in the $w$-th row of matrix $D$).
				From the basic guesses we know the fastest temporal path $P$ from $w$ to $v_2$ (depicted in blue) and the fastest temporal path $Q$ from $w$ to $v_{10}$.
				From the values of durations from $w$ to each $v_i$ we cannot 
				determine the fastest paths from $w$ to all $v_i$.
				More precisely, we know that $w$ reaches $v_2, v_3, v_4, v_5$ (resp.~$v_{10}, v_{9}, v_{9}, v_{7}$) 
				by first using the path $P$ (resp.~$Q$) and then proceeding through the vertices,
				but we do not know how $w$ reaches $v_6$ the fastest.
				Therefore we have to introduce some more guesses.
				\label{fig:guesses-advanced}}
		\end{figure}
		
		With the information provided by the described guesses we are still not able to determine all fastest paths. For example consider the case depicted in \cref{fig:guesses-advanced}. 
		Therefore, we introduce additional guesses that provide us with sufficient information to determine all fastest paths.
		We guess the following structures.
		\begin{enumerate}[G-1.]
			\setcounter{enumi}{\value{guesscounter}}
			\item \label{FPT-guessFTPamongv2z2}
			\textbf{Inner segment guess I}.
			Let $S_{u,v} = (u=v_1,v_2, \dots, v_p = v)$ and $S_{w,z} = (w=z_1,z_2, \dots, z_r = z)$ be two segments in $G'$. 
			We want to guess the fastest temporal path
			$v_2 \rightarrow u \leadsto w \rightarrow z_2$. 
			We repeat this procedure for all pairs of segments.
			Since there are $O(k^2)$ segments in $G'$,
			there are $k^{O(k^5)}$ possible paths of this form. \\
			Recall that $S_{u,v}\neq S_{v,u}$ for every $u,v\in U$. Furthermore note that we did not assume that $\{u,v\} \cap \{w,z\} = \emptyset$. Therefore, by repeatedly making the above guesses, we also guess the following fastest temporal paths: 
			${v_2 \rightarrow u \leadsto z \rightarrow z_{r-1}}$,\ \ \ 
			${v_2 \rightarrow u \leadsto v \rightarrow v_{p-1}}$,\ \ \  
			${v_{p-1} \rightarrow v \leadsto w \rightarrow z_{2}}$,\ \ \  
			${v_{p-1} \rightarrow v \leadsto z \rightarrow z_{r-1}}$, and  
			${v_{p-1} \rightarrow v \leadsto u \rightarrow v_{2}}$.
			For an example see~\cref{fig:FPT-guessG4}.
			\item \label{FPT:guess-uToSegmentz2}
			\textbf{Inner segment guess II}.
			Let $S_{u,v} = (u=v_1,v_2, \dots, v_p = v)$ be a segment in $G'$,
			and let $w \in U \cup Z^*$.
			We want to
			guess the following fastest temporal paths
			$w \leadsto u \rightarrow v_2$, $w \leadsto v \rightarrow v_{p-1} \rightarrow \cdots \rightarrow v_2$,
			and
			$v_2 \rightarrow u \leadsto w$, $v_2 \rightarrow v_3 \rightarrow \cdots v \leadsto w$.
			\\
			For fixed $S_{u,v}$ and $w \in U \cup Z^*$ we have $k^{O(k)}$ different possible such paths, therefore we make $k^{O(k^5)}$ guesses for these paths.
			For an example see~\cref{fig:FPT-guessG5}.
			\item \label{FPT:guess-splitFromAnotherSegmentAndPaths}
			\textbf{Split vertex guess I}.
			Let $S_{u,v} = (u=v_1,v_2, \dots, v_p = v)$ be a segment in $G'$, and let us
			fix a vertex $v_i \in S_{u,v} \setminus \{u,v\}$.
			In the case when $S_{u,v}$ is of length $4$, the fixed vertex $v_i$ is the middle vertex, else we fix an arbitrary vertex $v_i \in S_{u,v} \setminus \{u,v\}$.
			Let 
			$S_{w,z} = (w=z_1,z_2, \dots, z_r = z)$ be another segment in $G'$.
			We want to determine the fastest paths from $v_i$ to all inner vertices of $S_{w,z}$. We do this by inspecting the values in matrix $D$ from $v_i$ to inner vertices of $S_{w,z}$.
			We split the analysis into two cases.
			\begin{enumerate}
				\item 
				There is a single vertex $z_j \in S_{w,z}$ for which the duration from $v_i$ is the biggest.
				More specifically, $z_j \in S_{w,z} \setminus \{w,z\}$ is the vertex with the biggest value  $D_{v_i,z_j}$.
				We call this vertex a \emph{split vertex of $v_i$ in the segment $S_{wz}$}.
				Then it holds that $D_{v_i,z_2} < D_{v_i,z_3} < \dots < D_{v_i,z_j}$ and 
				$D_{v_i,z_{r-1}} < D_{v_i,z_{r-2}} < \dots < D_{v_i,z_j}$.
				From this it follows that the fastest temporal paths from $v_i$ to $z_2, z_3, \dots, z_{j-1}$ go through $w$,
				and 
				the fastest temporal paths from $v_i$ to $z_{r-1}, z_{r-2}, \dots, z_{j+1}$ go through $z$.
				We now want to guess which vertex $w$ or $z$ is on a fastest temporal path from $v_i$ to $z_j$.
				Similarly,
				all fastest temporal paths starting at $v_i$ have to go either through $u$ or through $v$,
				which also gives us two extra guesses for the fastest temporal path from $v_i$ to $z_j$.
				Therefore, all together we have $4$ possibilities on how the fastest temporal path from $v_i$ to $z_j$ starts and ends.
				Besides that we want to guess also how the fastest temporal paths from $v_i$ to $z_{j-1}, z_{j+1}$ start and end.
				Note that one of these is the subpath of the fastest temporal path from $v_i$ to $z_j$, and the ending part is uniquely determined for both of them,
				\ie to reach $z_{j-1}$ the fastest temporal path travels through $w$, and to reach $z_{j+1}$ the fastest temporal path travels through $z$.
				Therefore we have to determine only how the path starts, namely if it travels through $u$ or $v$.
				This introduces two extra guesses.
				For a fixed $S_{u,v}, v_i$ and $S_{w,z}$ we find the vertex $z_j$ in polynomial time, 
				or determine that $z_j$ does not exist.
				We then make four guesses where we determine how the fastest temporal path from $v_i$ to $z_j$ passes through vertices $u,v$ and $w,z$ and 
				for each of them two extra guesses to determine the fastest temporal path from $v_i$ to $z_{j-1}$ and from $v_i$ to $z_{j+1}$.
				We repeat this procedure for all pairs of segments,
				which results in producing $k^{O(k^5)}$ new guesses.
				Note, $v_i \in S_{u,v}$ is fixed when calculating the split vertex for all other segments $S_{w,z}$.
				\item 
				There are two vertices $z_j, z_{j+1} \in S_{w,z}$ for which the duration from $v_i$ is the biggest.
				More specifically, $z_j, z_{j+1} \in S_{w,z} \setminus \{w,z\}$ are the vertices with the biggest value  $D_{v_i,z_j} = D_{v_i,z_{j+1}}$.
				Then it holds that $D_{v_i,z_2} < D_{v_i,z_3} < \dots < D_{v_i,z_j} = D_{v_i,z_{j+1}} > D_{v_i,z_{j + 2}} > \cdots > D_{v_i,z_{r-1}}$.
				From this it follows that the fastest temporal paths from $v_i$ to $z_2, z_3, \dots, z_{j}$ go through $w$,
				and 
				the fastest temporal paths from $v_i$ to $z_{r-1}, z_{r-2}, \dots, z_{j+1}$ go through $z$.
				In this case we only need to guess the following two fastest temporal paths 
				$v_i \leadsto w \rightarrow z_2$
				and $v_i \leadsto z \rightarrow z_{r-1}$.
				Each of these paths we then uniquely extend along the segment $S_{w,z}$ up to the vertex $z_j$, resp.~$z_{j+1}$,
				which give us fastest temporal paths from $v_i$ to $z_j$ and from $v_i$ to $z_{j+1}$.
				In this case we introduce only two more guesses.
				We repeat this procedure for all pairs of segments. which results in creating $k^{O(k^5)}$ new guesses.
			\end{enumerate}
			For an example see~\cref{fig:FPT-guessG6}.
			\item \label{FPT:guess-splitFromUtoAnotherSegment}
			\textbf{Split vertex guess II}.
			Let $w \in U \cup Z^*$,
			and let $S_{u,v} = (u=v_1,v_2, \dots, v_p = v)$ be a segment in $G'$.
			We want to guess a split vertex of $w$ in $S_{u,v}$, and the fastest temporal path that reaches it.
			We again have two cases,
			first one where $v_i$ is a unique vertex in $S_{u,v}$ that is furthest away from $w$,
			and the
			second one where $v_i, v_{i+1}$ are two incident vertices in $S_{u,v}$, that are furthest away from $w$.
			All together we make two guesses for each pair $w, S_{u,v}$.
			We repeat this for all vertices in $U \cup Z^*$, and all segments,
			which produces $k^{O(k^5)}$ new guesses.
			For an example see~\cref{fig:FPT-guessG7}.
			\setcounter{guesscounter}{\value{enumi}}
		\end{enumerate}
		\begin{figure}[t]
			\centering
			\begin{subfigure}[b]{0.48\textwidth}
				\centering
				\resizebox{0.88\linewidth}{!}{
					\includegraphics{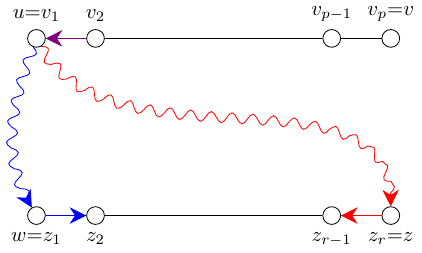}}
				\caption{Example of an Inner segment guess I (\textcolor{lipicsGray}{\textsf{\textbf{G-\ref{FPT-guessFTPamongv2z2}}}}), where we guessed the fastest temporal paths of the form $v_2 \rightarrow u \leadsto w \rightarrow z_2$ (in blue)
					and $v_2 \rightarrow u \leadsto z \rightarrow z_{r-1}$ (in red).
					\label{fig:FPT-guessG4}}
			\end{subfigure}
			\quad
			\begin{subfigure}[b]{0.48\textwidth}
				\centering
				\resizebox{0.88\linewidth}{!}{
					
					\includegraphics{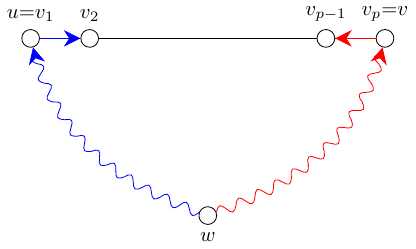}}
				\caption{Example of an Inner segment guess II (\textcolor{lipicsGray}{\textsf{\textbf{G-\ref{FPT:guess-uToSegmentz2}}}}), where we guessed the fastest temporal paths of the form $w \leadsto u \rightarrow v_2$ (in blue) and $w \leadsto v \rightarrow v_{p-1}$ (in red). 
					\label{fig:FPT-guessG5}}
			\end{subfigure}
			
			\begin{subfigure}[b]{0.48\textwidth}
				\vspace{-0.4cm}
				\centering
				\resizebox{0.98\linewidth}{!}{
					\includegraphics{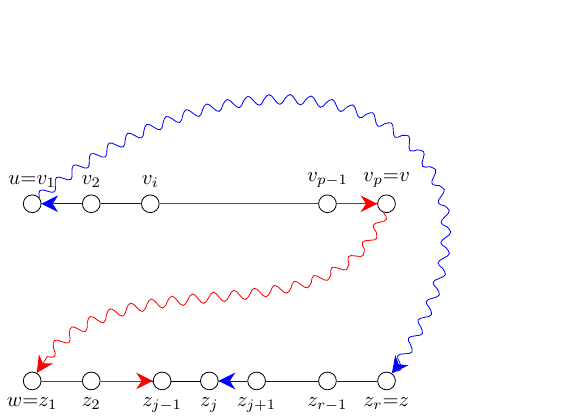}}
				\caption{Example of a Split vertex guess I (\textcolor{lipicsGray}{\textsf{\textbf{G-\ref{FPT:guess-splitFromAnotherSegmentAndPaths}}}}), where, for a fixed vertex $v_i \in S_{u,v}$,
					we calculated its corresponding split vertex $z_j \in S_{w,z}$,
					and guessed the fastest paths of the form
					$v_i \rightarrow v_{i-1} \rightarrow \cdots \rightarrow u \leadsto z \rightarrow z_{r-1} \cdots \rightarrow z_j$ (in blue) 
					and $v_i \rightarrow v_{i+1} \rightarrow \cdots \rightarrow v \leadsto w \rightarrow z_2 \rightarrow \cdots \rightarrow z_{j-1}$ (in red). 
					\label{fig:FPT-guessG6}}
			\end{subfigure}
			\quad
			\begin{subfigure}[b]{0.48\textwidth}
				\centering
				\resizebox{0.88\linewidth}{!}{
					\includegraphics{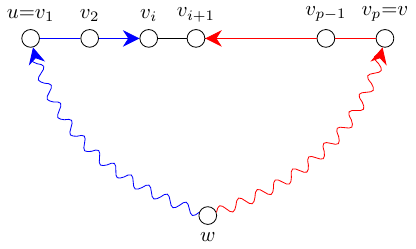}}
				\caption{Example of a Split vertex guess II (\textcolor{lipicsGray}{\textsf{\textbf{G-\ref{FPT:guess-splitFromUtoAnotherSegment}}}}), where, for a vertex of interest $w$, 
					we
					calculated its corresponding split vertex $v_i \in S_{u,v}$,
					and guessed the fastest paths of the form
					$w \leadsto u \rightarrow v_2 \rightarrow \cdots \rightarrow v_i$  (in blue) 
					and $w \leadsto v \rightarrow v_{p-1} \rightarrow \cdots \rightarrow v_{i+1}$ (in red). 
					\label{fig:FPT-guessG7}}
			\end{subfigure}
			\caption{Illustration of the guesses \textcolor{lipicsGray}{\textsf{\textbf{G-\ref{FPT-guessFTPamongv2z2}}}}, \textcolor{lipicsGray}{\textsf{\textbf{G-\ref{FPT:guess-uToSegmentz2}}}}, \textcolor{lipicsGray}{\textsf{\textbf{G-\ref{FPT:guess-splitFromAnotherSegmentAndPaths}}}}, and \textcolor{lipicsGray}{\textsf{\textbf{G-\ref{FPT:guess-splitFromUtoAnotherSegment}}}}.}
		\end{figure}
		%
		
		There are two more guesses \textcolor{lipicsGray}{\textsf{\textbf{G-11}}} and \textcolor{lipicsGray}{\textsf{\textbf{G-12}}} that we make during the creation of the ILP instances, we explain these guesses in detail in \cref{sec:constraints}. 
		We will prove that, for all guesses \textcolor{lipicsGray}{\textsf{\textbf{G-1}}} to \textcolor{lipicsGray}{\textsf{\textbf{G-12}}}, there are in total at most $f(k)$ possible choices, and for each one of them we create an ILP with at most $f(k)$ variables and at most $f(k)\cdot |D|^{O(1)}$ constraints. Each of these ILPs can be solved in FPT time by Lenstra's Theorem~\cite{Lenstra1983Integer}.
		
		\subsubsection{Properties of \deltaExact}
		
		In this section we study the properties of our problem, that then help us creating constraints of our ILP instances.
		Recall that with $G$ we denote our underlying graph of $D$. We want to determine labeling $\lambda$ of each edge of $G$.
		We start with an empty labeling of edges and try to specify each one of them.
		Note, that this does not necessarily mean that we assign numbers to the labels, but we might specify labels as variables or functions of other labels.
		We say that the label of an edge $f$ is \emph{determined with respect to} the label of the edge edge $e$, if we have determined $\lambda(f)$ as a function of $\lambda(e)$. 
		
		We first start with defining certain notions, that will be of use when solving the problem.
		
		\begin{definition}[Travel delays] \label{def:travel-delay}
			Let $(G, \lambda)$ be a temporal graph satisfying conditions of \deltaExact.
			Let $e_1=uv$ and $e_2=vz$ be two incident edges in $G$ with $e_1 \cap e_2 = v$.
			We define the \emph{travel delay} from $u$ to $z$ at vertex $v$, denoted with $\tau_v^{uz}$,
			as the difference of the labels of $e_2$ and $e_1$, where we subtract the value of the label of $e_1$ from the label of $e_2$,  modulo~$\Delta$.
			More specifically:
			\begin{equation}\label{eq:def-VertexWaitingTime}
				\tau_v^{uz} \equiv \lambda (e_2) - \lambda(e_1) \pmod \Delta.
			\end{equation}
			Similarly, $\tau_v^{zu} \equiv \lambda (e_1) - \lambda(e_2) \pmod \Delta$.
		\end{definition}
		Intuitively, the value of $\tau_v^{uz}$ represents how long a temporal path waits at vertex $v$ when first taking edge $e_1=uv$ and then edge $e_2 = vz$.
		
		From the above definition and the definition of the duration of the temporal path $P$ we get the following two observations.
		\begin{observation}\label{obs:durationPwithWaitingTimes}
			Let $P = (v_0, v_1, \dots, v_p)$ be the underlying path of the temporal path
			$(P, \lambda)$ from $v_0$ to $v_p$.
			Then $d(P,\lambda) = \sum_{i = 1}^{p-1} \tau_{v_i}^{v_{i-1}v_i} + 1 $.
		\end{observation}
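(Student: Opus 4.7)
The statement is essentially a telescoping identity, so the plan is short. First, I would fix the temporal path $(P,\lambda)$ with $P = (v_0, v_1, \dots, v_p)$, write $e_i = v_{i-1}v_i$, and let $t_i$ denote the actual time at which edge $e_i$ is used by this temporal path. By Definition~\ref{def:temporalPath+Duration} we have $t_1 < t_2 < \cdots < t_p$ and $d(P,\lambda) = t_p - t_1 + 1$; by $\Delta$-periodicity, $t_i \equiv \lambda(e_i) \pmod{\Delta}$ for every $i$.

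Next I would telescope:
$$t_p - t_1 \;=\; \sum_{i=1}^{p-1} (t_{i+1} - t_i).$$
For every intermediate vertex $v_i$, the positive integer $t_{i+1} - t_i$ satisfies
$$t_{i+1} - t_i \;\equiv\; \lambda(e_{i+1}) - \lambda(e_i) \;\equiv\; \tau_{v_i}^{v_{i-1}v_{i+1}} \pmod{\Delta}$$
by \eqref{eq:def-VertexWaitingTime} (reading the superscript in the statement as the natural pair $v_{i-1}, v_{i+1}$ of neighbors of $v_i$ along $P$, the only interpretation consistent with the definition of travel delay). For the temporal path that uses each edge at the earliest feasible time---which is the relevant one, since we seek the duration and any larger $t_{i+1}-t_i$ only inflates the total---the quantity $t_{i+1} - t_i$ equals exactly the canonical representative of this residue class in $\{1,\dots,\Delta\}$, namely $\tau_{v_i}^{v_{i-1}v_{i+1}}$. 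Summing over $i$ and adding the final $+1$ from the definition of duration yields the claimed identity $d(P,\lambda) = \sum_{i=1}^{p-1} \tau_{v_i}^{v_{i-1}v_{i+1}} + 1$.

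The only subtlety worth flagging is the modular-arithmetic convention: $\tau_{v_i}^{v_{i-1}v_{i+1}}$ must be taken to equal $\Delta$ (rather than $0$) in the case where $\lambda(e_{i+1}) = \lambda(e_i)$, because the strict inequality $t_{i+1} > t_i$ in the definition of a temporal path forces waiting a full period when two consecutive edge labels coincide. Beyond this bookkeeping, the argument is a direct unfolding of the definitions and requires no further ingredients; the ``hard'' part is really just pinning down this convention so that the sum correctly recovers $t_p - t_1$.
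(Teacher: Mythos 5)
Your proposal is correct and takes essentially the same telescoping approach as the paper's own proof: write the duration as $t_p - t_1 + 1$, telescope the difference as a sum of consecutive gaps $t_{i+1}-t_i$, and identify each gap with the travel delay $\tau_{v_i}^{v_{i-1}v_{i+1}}$. If anything, you are slightly more careful than the paper, which abbreviates the modular bookkeeping by splitting into the cases $t_i \le t_{i+1}$ and $t_i > t_{i+1}$; your explicit separation of actual traversal times from edge labels, your remark that the identity holds for the wasteless traversal of $P$, and your note that the convention must force $\tau = \Delta$ rather than $0$ when consecutive labels coincide all make the same argument a bit tighter. You also correctly read the superscript in the statement as $v_{i-1},v_{i+1}$, matching the definition of travel delay, rather than the literal (typographically slipped) $v_{i-1},v_i$ that appears in the observation.
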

		\begin{proof}
			For the simplicity of the proof denote $t_i = \lambda(v_{i-1}v_i)$, and suppose that $t_i \leq t_{i+1}$, for all $i \in \{1,2,3,\dots,p\}$.
			Then
			\begin{align*}
				\sum_{i = 1}^{p-1} \tau_{v_i}^{v_{i-1}v_i} + 1  
				&= \sum _{i = 1}^{p-1} (t_{i+1} - t_i) + 1 \\
				& = (t_2 - t_1) + (t_3 - t_2) + \cdots + (t_p - t_{p-1}) + 1  \\
				& = t_{p-1} - t_1 + 1\\
				& = d(P, \lambda)
			\end{align*}
			Now in the case when $t_i > t_{i+1}$ we get that $\tau_{v_i}^{v_{i-1}v_{i+1}} = \Delta + t_{i+1} - t_i$.
			At the end this still results in the correct duration as the last time we traverse the path $P$ is not exactly $t_p$ but $k \lambda + t_p$, for some $k$.
		\end{proof}
		We also get the following.
		\begin{observation}\label{obs:travel-delays-both-directions}
			Let $(G, \lambda)$ be a temporal graph satisfying conditions of the \deltaExact\ problem.
			For any two incident edges $e_1 = uv$ and $e_2 = vz$ on vertices $u,v,z \in V$, with $e_1 \cap e_2 = v$, we have $\tau_v^{zu} = \Delta - \tau_v^{uz} \pmod \Delta$.
		\end{observation}
		
		\begin{proof}
			Let $e_1 = uv$ and $e_2 = vz$ be two edges in $G$ for which $e_1 \cap e_2 = v$. 
			By the definition $\tau_v^{uz} \equiv \lambda (e_2) - \lambda(e_1) \pmod \Delta$ and $\tau_v^{zu} \equiv \lambda (e_1) - \lambda(e_2) \pmod \Delta$.
			Summing now both equations we get $\tau_v^{uz} + \tau_v^{zu} \equiv \lambda(e_2) - \lambda(e_1) + \lambda (e_1) - \lambda(e_2) \pmod \Delta$, and therefore $\tau_v^{uz} + \tau_v^{zu} \equiv 0 \pmod \Delta$, which is equivalent as saying $\tau_v^{uz} \equiv - \tau_v^{zu} \pmod \Delta$ or $\tau_v^{zu} = \Delta - \tau_v^{uz} \pmod \Delta$.
		\end{proof}

		In our analysis we exploit the following greatly, that is why we state is as an observation.
		
		\begin{observation}\label{obs:FirstLabelAndDuration}
			Let $P$ be the underlying path of a fastest temporal path from $u$ to $v$, where $e_1, e_p \in P$ are its first and last edge, respectively.
			Then, knowing the label $\lambda (e_1)$ of the first edge and the duration $d(P,\lambda)$ of the temporal path $(P,\lambda)$, we can uniquely determine the label $\lambda (e_p)$ of the last edge of $P$.
			Symmetrically, knowing $\lambda(e_p)$ and $d(P,\lambda)$, we can uniquely determine $\lambda(e_1)$.
		\end{observation}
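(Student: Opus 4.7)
The plan is to derive the claim directly from the definition of the duration of a temporal path, combined with the periodicity of the labeling. Concretely, I would start by fixing an arbitrary fastest temporal realization of $P$ and writing its time stamps as $(e_1,t_1),(e_2,t_2),\ldots,(e_p,t_p)$ with strictly increasing $t_i$. In the $\Delta$-periodic setting, each $t_i$ has the form $\lambda(e_i)+j_i\Delta$ for some integer $j_i\ge 0$, because the set of available times on $e_i$ is exactly $\{\lambda(e_i)+j\Delta : j\ge 0\}$. By \cref{def:temporalPath+Duration}, $d(P,\lambda)=t_p-t_1+1$.

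Next I would reduce both sides modulo $\Delta$. Substituting the periodic expressions for $t_1$ and $t_p$ yields
\[
d(P,\lambda)-1 \;=\; t_p-t_1 \;=\; \lambda(e_p)-\lambda(e_1) + (j_p-j_1)\Delta,
\]
and hence
\[
\lambda(e_p)\;\equiv\;\lambda(e_1)+d(P,\lambda)-1 \pmod{\Delta}.
\]
Since the labeling function takes values in $\{1,2,\ldots,\Delta\}$, this congruence pins down $\lambda(e_p)$ as the unique element of $[\Delta]$ satisfying the equation, giving the forward direction. The symmetric statement follows from the same congruence by solving for $\lambda(e_1)$ in terms of $\lambda(e_p)$ instead.

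There is essentially no obstacle here; the only subtlety to be aware of is that one should not confuse a \emph{label} $\lambda(e_i)\in[\Delta]$ with the actual \emph{time of traversal} $t_i$, which may differ by a multiple of $\Delta$. Once this distinction is made explicit, the argument is a one-line computation modulo $\Delta$, and the uniqueness is forced by the restriction of $\lambda$ to the canonical range $\{1,\ldots,\Delta\}$.
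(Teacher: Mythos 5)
Your argument is correct and matches the paper's justification almost verbatim: both derive $d(P,\lambda) = \lambda(e_p) - \lambda(e_1) + 1 + c\Delta$ for some nonnegative integer $c$, reduce modulo $\Delta$, and conclude that fixing one label (and the duration) determines the other uniquely within $[\Delta]$. The distinction you flag between a label and its actual traversal time is exactly what the paper relies on implicitly, so there is nothing missing.
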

		The correctness of the above statement follows directly from \cref{def:temporalPath+Duration}. This is because the duration of $(P,\lambda)$ is calculated as the difference of labels of last and first edge plus $1$,
		where the label of last edge is considered with some delta periods,
		\ie $d(P,\lambda) = p_i \Delta + \lambda(e_p) - \lambda (e_1) + 1$, for some $p_i \geq 0$.
		Therefore $d(P,\lambda) \pmod \Delta \equiv  (\lambda(e_p) - \lambda (e_1) + 1) \pmod \Delta$.
		Note that if $\lambda(e_1)$ and $\lambda(e_p)$ are both unknown, then we can determine one with respect to the other.
		
		In the following we prove that knowing the structure (the underlying path) of a fastest temporal path $P$ from a vertex of interest $u$ to a vertex of interest $v$,
		results in determining the labeling of each edge in the fastest temporal path from $u$ to $v$ 
		(with the exception of some constant number of edges), with respect to the label of the first edge.
		More precisely, if path $P$ from $u$ to $v$ is a segment, then we can determine labels of all edges as a function of the label of the first edge.
		If $P$ consists of $\ell$ segments, then we can determine the labels of all but $\ell -1$ edges as a function of the label of the first edge.
		For the exact formulation and proofs see \cref{lemma:FPT-uv-Labelalledges,lemma:FPT-uv-LabelAlmostalledges}.
		
		\begin{lemma}\label{lemma:FPT-uv-Labelalledges}
			Let $u, v \in U$ be two arbitrary vertices of interest and suppose that $P = (u=v_1,v_2, \dots, v_p = v)$, where $p \geq 2$, 
			is a path in $G'$, which is also the underlying path of a fastest temporal path from $u$ to $v$.
			Moreover suppose also that $P$ is a segment.
			We can determine the labeling $\lambda$ of every edge in $P$ with respect to the label $\lambda(uv_2)$ of the first edge.
		\end{lemma}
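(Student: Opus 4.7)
The plan is to reduce to the tree case by establishing the following subpath property: for every inner vertex $v_i$ of the segment $P$, the subpath $P_{u,v_i}$ is itself a fastest temporal path from $u$ to $v_i$ in $(G,\lambda)$. Once this is shown, \cref{lemma:unique-fastest-path-allDelays} immediately yields $\tau_{v_i}^{v_{i-1},v_{i+1}} = D_{u,v_{i+1}} - D_{u,v_i}$ for every inner vertex $v_i$, and then from the known $\lambda(uv_2)=\lambda(v_1v_2)$ the recurrence
\[
\lambda(v_iv_{i+1}) \;\equiv\; \lambda(v_{i-1}v_i) + \tau_{v_i}^{v_{i-1},v_{i+1}} \pmod{\Delta}
\]
determines every remaining label on $P$ one edge at a time.

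I would prove the subpath claim by contradiction. Suppose some temporal path $Q$ from $u$ to $v_i$ has duration strictly less than $d(P_{u,v_i},\lambda)$. The first step is to confine the underlying static path of $Q$ to $G'$: every vertex $z \in Z$ lies inside a tree $T_w \subseteq Z$ that is attached to $G'$ only through its clip vertex $w \in V(G')$, so any simple path between two $G'$-vertices that entered such a tree would have to revisit $w$ on the way out, which is impossible. Since both endpoints of $Q$ lie in $V(G')$, the underlying path of $Q$ is therefore a path in $G'$.

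Once $Q$ is confined to $G'$, the segment structure takes over. Each of $v_i,v_{i+1},\dots,v_{p-1}$ has degree exactly two in $G'$ (as inner vertices of a segment they do not belong to $U$), so tracing $Q$ backwards from $v_i$ forces its underlying path into one of two shapes: either it coincides with $P_{u,v_i}$, or it reaches $v_i$ from $v_{i+1}$ and, by iterating the degree-two argument, consists of some $u$-to-$v$ path in $G'$ followed by the reverse segment $v,v_{p-1},\dots,v_{i+1},v_i$. In the first case, $Q$ and $P_{u,v_i}$ share the same underlying path and therefore, by periodicity, the same duration, contradicting that $Q$ is strictly faster. In the second case the prefix of $Q$ from $u$ to $v$ is itself a temporal path from $u$ to $v$, so $d(Q,\lambda) \ge D_{u,v} = d(P,\lambda)$; combined with the trivial inequality $d(P_{u,v_i},\lambda) \le d(P,\lambda)$ obtained by truncating $P$ at $v_i$, this yields $d(Q,\lambda) \ge d(P_{u,v_i},\lambda)$, again contradicting the assumption.

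The main obstacle is precisely this second case. The analogous subpath property is known to fail for fastest temporal paths in general, as illustrated in~\cref{fig:ftpExample}, and both hypotheses of the lemma are used essentially: the segment assumption reduces the possible underlying paths of $Q$ to just two shapes via the degree-two condition in $G'$, while the optimality of $(P,\lambda)$ between $u$ and $v$ is what rules out the ``wrap-around via $v$'' shape.
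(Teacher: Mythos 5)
Your proof is correct and follows essentially the same route as the paper's: reduce to the subpath property (every prefix $P_{u,v_i}$ of the segment is itself fastest), establish it by contradiction using the degree-two structure of a segment together with $d(P)=D_{u,v}$ to rule out the wrap-around via $v$, and then propagate labels from $\lambda(uv_2)$ via \cref{lemma:unique-fastest-path-allDelays} (the paper uses \cref{obs:FirstLabelAndDuration} for the last step, but it is the same bookkeeping). Your explicit argument that a competing temporal path cannot detour into $Z$ is a small refinement the paper leaves implicit, but it does not change the overall structure of the argument.
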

		
		\begin{proof}
			We claim that $u$ reaches all of the vertices in $P$ the fastest, when traveling along $P$ (\ie by using a subpath of $P$).
			To prove this suppose for the contradiction that there is a vertex $v_i \in P \setminus \{u,v\}$, that is reached from $v$ on a path different than $P_i = (u, v_2, v_3, \dots, v_i)$ faster than through $P_i$.
			Since the only vertices of interest of $P$ are $u$ and $v$, it follows that all other vertices on $P$ are of degree $2$. 
			Then the only way to reach $v_i$ from $u$, that differs from $P$, would be to go from $u$ to $v$ using a different path $P_2$,
			and then go from $v$ to $v_{p-1}, v_{p-2}, \dots, v_i$.
			But since $P$ is the fastest temporal path from $u$ to $v$, we get that $d(P_2) \geq d(P)$ and $d(P_2 \cup (v,v_{p-1}, \dots, v_i)) > d(P) > d(P_i)$.
			
			Now, to determine the labeling $\lambda$ of the path $P$ we use the property that the fastest temporal path from $u$ to any $v_i \in P$ is a subpath of $P$. 
			We set the label of the first edge of $P$ to be a constant $c\in [\Delta]$
			and
			use \cref{obs:FirstLabelAndDuration} to label all remaining edges,
			where the duration from $u$ to $v_i$ equals to $D_{u,v_i}$.
			This gives us a unique labeling $\lambda$ of $P$ where the label of each edge of $P$ is a function of $c$.
		\end{proof}
		
		\begin{lemma}\label{lemma:FPT-uv-LabelAlmostalledges}
			Let $u, v \in U$ be two arbitrary vertices of interest and suppose that $P = (u=v_1, v_2, \dots, v_p = v)$, where $p \geq 2$, 
			is a path in $G'$, which is also the underlying path of a fastest temporal path from $u$ to $v$.
			Let $\ell_{u,v} \geq 1$ be the number of vertices of interest in $P$ different to $u,v$, namely $\ell_{u,v} = |\{P \setminus \{u,v \} \} \cap U |$.
			We can determine the labeling $\lambda$ of all but $\ell_{u,v}$ edges of $P$, with respect to the label $\lambda(uv_2)$ of the first edge,
			such that the labeling $\lambda$ respects the values from $D$.
		\end{lemma}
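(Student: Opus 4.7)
The plan is to reduce the lemma to $\ell_{u,v}+1$ applications of the argument used for \cref{lemma:FPT-uv-Labelalledges}, one for each maximal subpath of $P$ that contains no interior vertex of interest, and then to account carefully for which edges remain free.

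First, I would split $P$ into $\ell_{u,v}+1$ pieces $S_0,S_1,\ldots,S_{\ell_{u,v}}$ by cutting it at its interior interesting vertices. If $w_1,\ldots,w_{\ell_{u,v}}$ are these interior interesting vertices, listed in the order in which they appear on $P$, and we set $w_0:=u$ and $w_{\ell_{u,v}+1}:=v$, then each $S_i$ runs from $w_i$ to $w_{i+1}$, has both endpoints in $U$, and has all of its internal vertices of degree $2$ in $G'$. So each $S_i$ is a segment in the sense of \cref{sec:FPT-guessing}.

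Next, I would prove the following per-segment labelling claim: on each $S_i=(w_i=x_0,x_1,\ldots,x_{k_i}=w_{i+1})$, once the label $\lambda(w_i x_1)$ of the first edge of $S_i$ is fixed, the labels of all remaining edges of $S_i$ are uniquely determined as functions of $\lambda(w_i x_1)$ through the entries of $D$. The argument mirrors that of \cref{lemma:FPT-uv-Labelalledges}: for any pair $x_j,x_{j'}$ of internal vertices of $S_i$ (both of degree $2$ in $G'$), every temporal path from $x_j$ to $x_{j'}$ that does not use the direct sub-segment of $S_i$ between them must leave $S_i$ through $w_i$, detour through some other segments of $G'$, and re-enter $S_i$ through $w_{i+1}$ (or vice versa); an edge-counting argument combined with the strict label-increase property of temporal paths yields a duration lower bound on every such detour that matches or exceeds the duration of the direct traversal. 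Hence between any two internal vertices of $S_i$ the fastest temporal path lies on $S_i$ itself, and \cref{obs:FirstLabelAndDuration} applied with the corresponding entries of $D$ then propagates $\lambda(w_i x_1)$ through all of $S_i$.

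Finally, I would glue the segments together. Since $S_0$ starts with the edge $uv_2$, the per-segment claim determines the entire labelling of $S_0$ from $\lambda(uv_2)$. For each $i\in\{1,\ldots,\ell_{u,v}\}$ I would keep the label of the first edge of $S_i$ as a free parameter, and again use the per-segment claim to determine every other edge label of $S_i$ from it. The undetermined edges of $P$ are therefore exactly the $\ell_{u,v}$ first edges of $S_1,\ldots,S_{\ell_{u,v}}$, while every other label of $P$ is expressed as a function of $\lambda(uv_2)$ and those $\ell_{u,v}$ free labels, and the resulting labelling respects the relevant entries of $D$ by construction.

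The main obstacle is the per-segment claim. Unlike in \cref{lemma:FPT-uv-Labelalledges}, we cannot appeal to $S_i$ being itself a fastest $w_i$-to-$w_{i+1}$ temporal path, since a sub-path of a fastest temporal path need not be fastest (cf.\ \cref{fig:ftpExample}). The argument therefore has to be carried out at the level of pairs of internal degree-$2$ vertices of $S_i$, where the combinatorial lower bound on the number of edges used by any detour, together with the strict label-increase of temporal paths, is what actually forces fastest internal-to-internal temporal paths to lie on $S_i$.
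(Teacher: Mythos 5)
Your per-segment claim --- that between any two internal vertices $x_j,x_{j'}$ of a segment $S_i$ of $P$ the fastest temporal path must follow the direct sub-segment --- is the crux of your plan, and it is not established by the edge-counting argument you sketch. In a $\Delta$-periodic temporal graph more edges do not imply larger duration: the direct sub-segment can incur waits approaching $\Delta$ at several internal vertices while a detour through $w_i$ and $w_{i+1}$, even if much longer in edges, can be tightly aligned. The hypothesis that $P$ is a fastest $u$-to-$v$ path constrains the labels of $S_i$ only with respect to the single departure time at $w_i$ realized by $P$; it says nothing about the departure times relevant to temporal paths starting at an internal vertex $x_j$, which is exactly the situation your claim is about. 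So the edge count does not settle which $x_j$-to-$x_{j'}$ path is fastest, and nothing in your plan supplies the missing temporal argument.

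The paper proves something different and sidesteps this issue entirely. Via \cref{lem-2} (a genuine duration analysis split into the cases $t_{u_2}=t_{v_2}$ and $t_{u_2}\neq t_{v_2}$, using \cref{cor-1}) it shows that for every non-interesting vertex $v_j$ on $P$ the fastest temporal path \emph{from $u$} to $v_j$ is the sub-path of $P$, even though the sub-path of $P$ up to the nearest interior interesting vertex $v_i$ need not be a fastest $u$-to-$v_i$ path. \cref{obs:FirstLabelAndDuration} applied to $D_{u,v_j}$ then expresses $\lambda(v_{j-1}v_j)$ as a function of $\lambda(uv_2)$ for every non-interesting $v_j$, leaving undetermined exactly the $\ell_{u,v}$ edges $v_{i-1}v_i$ entering the interior interesting vertices. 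Two consequences differ from your plan: the undetermined edges are the \emph{incoming} edges of the interesting vertices, not their outgoing first-segment edges; and, more importantly, all determined labels are functions of $\lambda(uv_2)$ itself, whereas in your scheme the labels inside $S_1,\dots,S_{\ell_{u,v}}$ would depend only on the free first-edge labels of their own segments, which is weaker than what the lemma promises and what the downstream ILP construction relies on.
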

		
		For the proof of the above lemma, we first prove a weaker statement,
		for which we need to introduce some extra definitions and fix some notations.
		In the following
		we only consider \emph{wasteless} temporal paths. 
		We call 
		a temporal path $P=((e_{1},t_{1}),\ldots ,(e_{k},t_{k}))$ a \emph{wasteless} temporal path, 
		if for every $i=1,2,\ldots ,k-1$, we have that $t_{i+1}$ is the first time after $t_{i}$ that the edge $e_{i+1}$ appears. 
		
		Let $u,v\in V$, and let $t\in \mathbb{N}$. 
		Given that a temporal path starts within the period $[t,t+\Delta -1]$,
		we denote with $A_{t}(u,v)$
		the \emph{arrival} of the fastest path in $(G,\lambda )$ from $u$ to $v$, and
		with $A_{t}(u,v,P)$,
		the \emph{arrival} along path $P$ in $(G,\lambda )$ from $u$ to $v$.
		Whenever $t=1$, we may omit the index $t$, \ie we may write 
		$A(u,v,P)=A_{1}(u,v,P)$ and $A(u,v)=A_{1}(u,v)$. 
		
		Suppose now that 
		we know the underlying path $P_{u,v} = (u=v_1, v_2, \dots, v_p = v)$ of the fastest temporal path between vertices of interest $u$ and $v$ in $G'$.
		Let $v_i\in U$ with $u\neq v_i \neq v$ be a vertex of interest on the path $P_{u,v}$.
		Suppose that $v_i$ is reached the fastest from $u$ by a path $P = (u = u_1, u_2, \dots, u_{j-1}, v_i)$.
		We split the path with $P_{u,v}$ into a path $ Q = (u=v_1, v_2, \dots, v_i)$ and $R = (v_i, v_{i+1}, \dots, v_p=v)$
		(for details see~\cref{fig:FPT-uv-Labelalledges}).
		
		From the above we get the following assumptions:
		\begin{enumerate}
			\item \label{assumption1-FPT} $d(u,v_{i})=d(u,v_{i},P)\leq d(u,v_{i},Q)$, and
			\item \label{assumption2-FPT} $d(u,v_p)=d(u,v_p,Q\cup R)\leq d(u,v_p,P\cup R)$.
		\end{enumerate}
		In the remainder, we denote with $\delta_{0}$ the difference $d(u,v_{i},Q) - d(u,v_{i})\geq 0$.
		%
		Let $t_{v_2}\in [\Delta]$ be the label of the edge $uv_2$,
		and denote by $t_{u_{2}}$ the appearance of the edge $uu_{2}$ within the
		period $[t_{v_2},t_{v_2}+\Delta -1]$. Note that $1\leq t_{v_2}\leq
		\Delta $ and that 
		\begin{equation}
			t_{v_{2}} \leq t_{u_{2}} \leq t_{v_{2}}+\Delta -1 \leq 2\Delta -1.
			\label{basic-ineq-0}
		\end{equation}
		From Assumption~\ref{assumption1-FPT} we get
		\begin{equation*}
			\delta
			_{0}=d(u,v_{i},Q)-d(u,v_{i})=A_{t_{v_2}}(u,v_{i},Q)-A_{t_{v_2}}(u,v_{i},P)+\left( t_{u_{2}}-t_{v_2}\right)
		\end{equation*}%
		and thus%
		\begin{equation}
			A_{t_{v_2}}(u,v_{i},P)-A_{t_{v_2}}(u,v_{i},Q)=t_{u_{2}}-(t_{v_2}+\delta _{0}).
			\label{basic-eq-1}
		\end{equation}
		
		\begin{figure}[t]
			\centering
			\includegraphics{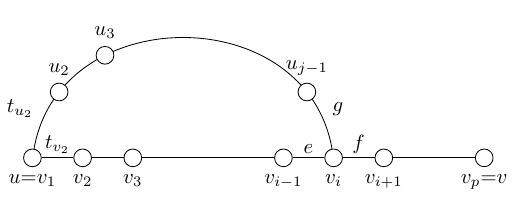}
			\caption{An example of the situation in \cref{lemma:FPT-uv-LabelAlmostalledges},
				where we assume that the fastest temporal path from $u$ to $v$ is $P_{u,v} = (u=v_1, v_2, \dots v_p)$,
				and the fastest temporal path from $u$ to $v_i$ in $P_{u,v}$ is $P = (u, u_2, u_3, \dots, v_i)$.
				We denote with $Q = (u=v_1, v_2, \dots, v_i)$ and with $R = (v_i, v_{i+1}, \dots, v_p=v)$.
				\label{fig:FPT-uv-Labelalledges}}
		\end{figure}
		
		We use all of the above discussion, to prove the following lemma.
		\begin{lemma}
			\label{lem-1}If $t_{u_{2}}\neq t_{v_2}$, then $\delta _{0}\leq \Delta -2$
			and $t_{u_{2}}\geq t_{v_2}+\delta _{0}+1$.
		\end{lemma}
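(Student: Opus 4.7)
The plan is to derive both inequalities from (basic-eq-1) by first establishing that the path $P$ must arrive at $v_i$ \emph{strictly later} than the path $Q$ (when each is traversed wastelessly starting in the window $[t_{v_2},t_{v_2}+\Delta-1]$). Once this strict inequality on arrivals is in hand, (basic-eq-1) rewrites as
\begin{equation*}
t_{u_2} - t_{v_2} - \delta_0 = A_{t_{v_2}}(u,v_i,P) - A_{t_{v_2}}(u,v_i,Q) \geq 1,
\end{equation*}
which is exactly statement (b). Combining (b) with the bound $t_{u_2}\leq t_{v_2}+\Delta-1$, which holds because $t_{u_2}$ is the unique appearance of the edge $uu_2$ in a window of length $\Delta$, then yields $\delta_0+1\leq \Delta-1$, i.e.\ statement (a).

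To establish the strict arrival inequality I would argue by contradiction, assuming $A_{t_{v_2}}(u,v_i,P)\leq A_{t_{v_2}}(u,v_i,Q)$. Extending both paths along $R$ by wasteless traversal produces two temporal paths from $u$ to $v_p$: namely $P\cup R$ starting at time $t_{u_2}$ and $Q\cup R=P_{u,v}$ starting at time $t_{v_2}$. The key auxiliary fact is a monotonicity lemma for wasteless traversal in a $\Delta$-periodic graph: for any edge $e$ with $\lambda(e)=t$, the first available time $\geq a$ is a nondecreasing function of $a$, and iterating this along the edges of $R$ shows that if $P$ reaches $v_i$ no later than $Q$, then $P\cup R$ reaches $v_p$ no later than $Q\cup R$.

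Using this monotonicity together with $t_{u_2}>t_{v_2}$ (which we get because $t_{u_2}\neq t_{v_2}$ and $t_{u_2}\geq t_{v_2}$), the duration of $P\cup R$ satisfies
\begin{equation*}
d(u,v_p,P\cup R) = A(u,v_p,P\cup R) - t_{u_2} + 1 \;<\; A(u,v_p,Q\cup R) - t_{v_2} + 1 = d(u,v_p,Q\cup R),
\end{equation*}
contradicting Assumption~\ref{assumption2-FPT} which states that $P_{u,v}=Q\cup R$ realizes the fastest duration from $u$ to $v_p$. Hence $A_{t_{v_2}}(u,v_i,P)>A_{t_{v_2}}(u,v_i,Q)$, completing the argument.

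The only step requiring care is the monotonicity claim for wasteless traversal in the periodic setting; it is essentially folklore but I would include one sentence stating it explicitly, since the whole argument hinges on it. Everything else is bookkeeping via (basic-eq-1) and the boundary constraints on $t_{u_2}$.
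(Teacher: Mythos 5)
Your proposal is correct and rests on exactly the same core contradiction as the paper's proof: if $A_{t_{v_2}}(u,v_i,P)\le A_{t_{v_2}}(u,v_i,Q)$, then since $t_{u_2}>t_{v_2}$ the temporal path along $P\cup R$ would be strictly faster than the one along $Q\cup R$, contradicting Assumption~\ref{assumption2-FPT}. The difference is purely organizational. The paper proves the two inequalities independently: for $\delta_0\le\Delta-2$ it assumes $\delta_0\ge\Delta-1$ and plugs $t_{u_2}-t_{v_2}\le\Delta-1$ into \eqref{basic-eq-1} to get $A_P\le A_Q$; for $t_{u_2}\ge t_{v_2}+\delta_0+1$ it assumes $t_{v_2}+1\le t_{u_2}\le t_{v_2}+\delta_0$ and again reads off $A_P\le A_Q$ from \eqref{basic-eq-1}. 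You instead invoke the contradiction once to obtain $A_P-A_Q\ge1$, rewrite this via \eqref{basic-eq-1} as $t_{u_2}\ge t_{v_2}+\delta_0+1$, and then chain with $t_{u_2}\le t_{v_2}+\Delta-1$ to get the bound on $\delta_0$. Your version is a modest streamlining that makes the dependence between the two claims explicit, and your remark that the whole thing hinges on monotonicity of wasteless arrival under a delayed start is a worthwhile observation that the paper leaves implicit (it only says \emph{``arriving no later than traversing path $Q$''}). Both are valid; yours avoids the paper's double use of essentially the same contradiction.
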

		
		\begin{proof}
			First assume that $\delta _{0}\geq \Delta -1$. Then, it follows by \cref%
			{basic-eq-1} that $%
			A_{t_{v_2}}(u,v_{i},P)-A_{t_{v_2}}(u,v_{i},Q)\leq
			t_{u_{2}}-t_{v_2}-\Delta +1\leq 0$, and thus $A_{t_{v_2}}(u,v_{i},P)%
			\leq A_{t_{v_2}}(u,v_{i},Q)$. Therefore, since we can traverse path $P$
			from $u$ to $v_{i}$ by departing at time $t_{u_{2}}\geq t_{v_2}+1$ and
			by arriving no later than traversing path $Q$, we have that $d(u,v_p,P\cup Q)<d(u,v_p,Q\cup R)$, which is a contradiction to the
			second initial assumption. Therefore $\delta _{0}\leq \Delta -2$.
			
			Now assume that $t_{v_2}+1\leq t_{u_{2}}\leq t_{v_2}+\delta _{0}$. Then,
			it follows by \cref{basic-eq-1} that $A_{t_{v_2}}(u,v_{i},P)\leq
			A_{t_{v_2}}(u,v_{i},Q)$ which is, similarly to the previous case, a
			contradiction. Therefore $t_{u_{2}}\geq t_{v_2}+\delta _{0}+1$.
		\end{proof}
		
		The next corollary follows immediately from \cref{lem-1}.

		\begin{corollary}
			\label{cor-1}If $t_{u_{2}}\neq t_{v_2}$, then $1\leq
			A_{t_{v_2}}(u,v_{i},P)-A_{t_{v_2}}(u,v_{i},Q)\leq \Delta
			-1-\delta _{0}$.
		\end{corollary}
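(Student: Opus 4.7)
The plan is to derive both inequalities directly by substituting the two conclusions of \cref{lem-1} into the identity~\eqref{basic-eq-1}, which expresses the arrival-time difference as $A_{t_{v_2}}(u,v_{i},P)-A_{t_{v_2}}(u,v_{i},Q)=t_{u_{2}}-(t_{v_2}+\delta _{0})$. So the entire corollary reduces to a pair of bookkeeping steps; no new graph-theoretic argument is needed.

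For the lower bound, I would invoke the second conclusion of \cref{lem-1}, namely $t_{u_{2}}\geq t_{v_2}+\delta _{0}+1$. Subtracting $t_{v_2}+\delta_0$ from both sides and plugging into~\eqref{basic-eq-1} immediately yields $A_{t_{v_2}}(u,v_{i},P)-A_{t_{v_2}}(u,v_{i},Q)\geq 1$, which gives the desired inequality.

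For the upper bound, I would use the remark made just before the statement of \cref{lem-1}, namely that $t_{u_{2}}$ is defined as the \emph{appearance} of the edge $uu_{2}$ within the period $[t_{v_2},t_{v_2}+\Delta -1]$, so that $t_{u_{2}}\leq t_{v_2}+\Delta -1$ always holds. Substituting this into~\eqref{basic-eq-1} gives $A_{t_{v_2}}(u,v_{i},P)-A_{t_{v_2}}(u,v_{i},Q)\leq \Delta -1-\delta _{0}$. Combined with the previous paragraph, this yields the corollary. Note that the first conclusion of \cref{lem-1}, i.e.\ $\delta_0\leq \Delta-2$, is implicitly used to ensure that the resulting interval $[1,\Delta-1-\delta_0]$ is nonempty (otherwise the statement would be vacuous or inconsistent).

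The main obstacle, if there is one, is merely making sure the definition of $t_{u_{2}}$ is applied correctly: since $t_{v_2}$ itself lies in $[1,\Delta]$ and $t_{u_{2}}$ is chosen to be the first appearance of $uu_{2}$ at or after time $t_{v_2}$, the bound $t_{u_{2}}\leq t_{v_2}+\Delta -1$ is automatic from $\Delta$-periodicity of $\lambda$. Once this is in hand, the proof is essentially a one-line substitution into~\eqref{basic-eq-1}.
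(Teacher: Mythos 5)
Your proposal is correct and reconstructs exactly the intended argument: the paper declares the corollary "follows immediately from \cref{lem-1}," and the immediate derivation is precisely the two substitutions you carry out — the lower bound from $t_{u_2}\ge t_{v_2}+\delta_0+1$ of \cref{lem-1}, and the upper bound from the definitional fact $t_{u_2}\le t_{v_2}+\Delta-1$ — both applied to the identity~\eqref{basic-eq-1}. Your side remark that $\delta_0\le\Delta-2$ guarantees the interval is nonempty is a fair consistency check, though not logically needed for the inequalities themselves.
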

		
		We are now ready to prove the following result.
		
		\begin{lemma}\label{lemma:FPT-twoPathsOneUnlabeledEdge}
			\label{lem-2}$d(u,v_{i-1},P\cup \{v_{i}v_{i-1}\})>d(u,v_{i-1},Q\setminus
			\{v_{i}v_{i-1}\})$.
		\end{lemma}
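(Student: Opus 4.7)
The plan is to compute both durations explicitly, reduce the claim to a short arithmetic inequality in the waits, and finish with a case split on whether $t_{u_{2}}=t_{v_2}$.

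Write $\alpha _P:=A_{t_{v_2}}(u,v_i,P)$, $\alpha _Q:=A_{t_{v_2}}(u,v_i,Q)$, and let $\beta _Q$ denote the arrival at $v_{i-1}$ along the wasteless walk $Q\setminus \{v_iv_{i-1}\}$ starting at time $t_{v_2}$. Then $d(u,v_{i-1},Q\setminus \{v_iv_{i-1}\})=\beta _Q-t_{v_2}+1$. After arriving at $v_i$ at time $\alpha _P$ along $P$, the earliest time $>\alpha _P$ at which the edge $v_iv_{i-1}$ is active is some $\gamma $ satisfying $\gamma \equiv \lambda (v_{i-1}v_i)\equiv \alpha _Q\pmod{\Delta }$ (since $\alpha _Q$ is itself a traversal time of this edge along $Q$). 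Hence the wait $w:=\gamma -\alpha _P$ is the unique value in $\{1,\dots ,\Delta \}$ with $w\equiv \alpha _Q-\alpha _P\pmod{\Delta }$, and $d(u,v_{i-1},P\cup \{v_iv_{i-1}\})=\alpha _P+w-t_{u_2}+1$.

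Setting $w_Q:=\alpha _Q-\beta _Q\in \{1,\dots ,\Delta \}$ (the wait for the edge $v_{i-1}v_i$ at $v_{i-1}$ along $Q$) and substituting $\alpha _P-\alpha _Q=t_{u_2}-t_{v_2}-\delta _0$ from~\cref{basic-eq-1}, a direct calculation yields
\begin{equation*}
d(u,v_{i-1},P\cup \{v_iv_{i-1}\})-d(u,v_{i-1},Q\setminus \{v_iv_{i-1}\})\;=\;w+w_Q-\delta _0,
\end{equation*}
so the claim reduces to proving $w+w_Q>\delta _0$.

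If $t_{u_{2}}\neq t_{v_2}$, then~\cref{lem-1} gives $\delta _0\leq \Delta -2$ and $t_{u_{2}}-t_{v_2}\geq \delta _0+1$, so the unique representative $w\in \{1,\dots ,\Delta \}$ of $\delta _0-(t_{u_{2}}-t_{v_2})\pmod{\Delta }$ is $w=\delta _0-(t_{u_{2}}-t_{v_2})+\Delta $; together with $w_Q\geq 1$ and $t_{u_{2}}-t_{v_2}\leq \Delta -1$ this yields $w+w_Q-\delta _0=w_Q+\Delta -(t_{u_{2}}-t_{v_2})\geq 2$. If instead $t_{u_{2}}=t_{v_2}$, I first need the auxiliary bound $\delta _0\leq \Delta -1$: assumption~\ref{assumption2-FPT} rearranges to $T(R,\alpha _P)-T(R,\alpha _Q)\geq \delta _0$, where $T(R,t)$ denotes the wasteless onward duration from $v_i$ along $R$ when departing at time $t$; but in a $\Delta $-periodic graph only the very first wait at $v_i$ depends on $t\bmod \Delta $ (every subsequent arrival is $\equiv $ to a fixed edge label modulo $\Delta $), so $T(R,\cdot)$ varies by at most $\Delta -1$. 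With $\delta _0\leq \Delta -1$ the congruence $w\equiv \delta _0\pmod{\Delta }$ forces $w=\delta _0$ for $\delta _0\geq 1$ and $w=\Delta $ for $\delta _0=0$, and in either subcase $w+w_Q-\delta _0\geq w_Q\geq 1$.

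The main obstacle is the case $t_{u_{2}}=t_{v_2}$, in which~\cref{lem-1} and~\cref{cor-1} do not apply and $\delta _0$ is not a priori bounded by $\Delta $; the key insight is the structural observation above, that the tail $R$ can absorb at most $\Delta -1$ units of discrepancy between starting times at $v_i$, which via assumption~\ref{assumption2-FPT} caps $\delta _0$ at $\Delta -1$ and makes the congruence argument for $w$ close.
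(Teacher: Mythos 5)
Your proof is correct, and it takes a cleaner, more algebraic route than the paper's. You reduce both cases to the single inequality $w+w_Q>\delta_0$ via the identity
\[
d(u,v_{i-1},P\cup\{v_iv_{i-1}\})-d(u,v_{i-1},Q\setminus\{v_iv_{i-1}\})=w+w_Q-\delta_0,
\]
whereas the paper works directly with explicit arrival times of the form $x\Delta+e$, $x\Delta+g$, $x\Delta+e-h$ and computes the difference for each case separately (in fact, the paper's Case~1 also derives the bound $g\ge f$, which turns out not to be needed for the final inequality, while your version skips it). Your identity is the same content (your $w_Q$ is the paper's $h$, and $w$ captures the "wrap-around" to $(x+1)\Delta+e$), but isolating it up front makes the two case analyses shorter and more transparent.

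The more substantive difference is in Case~2 ($t_{u_2}=t_{v_2}$). The paper derives $g\le e$ from $\alpha_P\le\alpha_Q$ and then asserts that the next traversal of $v_iv_{i-1}$ happens at $x\Delta+e$, which implicitly presumes $\delta_0\le\Delta$ (if $\delta_0>\Delta$ the arrival at $v_i$ would fall more than one full period before $x\Delta+e$ and the edge would be available earlier). You make this explicit: you derive $\delta_0\le\Delta-1$ from Assumption~\ref{assumption2-FPT} via the observation that in a $\Delta$-periodic graph the wasteless onward duration $T(R,\cdot)$ from $v_i$ along $R$ is a function of the arrival time modulo $\Delta$ plus a constant, so it varies by at most $\Delta-1$. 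This is exactly the missing justification, and it also re-proves the $\delta_0\le\Delta-2$ bound's weaker cousin without invoking \cref{lem-1} (which the paper cannot do in Case~2 anyway, since \cref{lem-1} requires $t_{u_2}\neq t_{v_2}$). Net effect: your argument is somewhat more self-contained and more careful about the edge cases; the paper's is shorter because it leans on the concrete arrival-time bookkeeping already set up in \cref{cor-1}.
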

		\begin{proof}
			Let $e\in [\Delta]$ be the label of the edge $v_{i-1}v_{i}$, and
			let $f\in \lbrack e+1,e+\Delta ]$ be the time of the first appearance of the
			edge $v_{i}v_{i+1}$ after time $e$. Let $A_{t_{v_{2}}}(u,v_{i},Q)=x%
			\Delta +e$. Then $A_{t_{v_{2}}}(u,v_{i+1},Q\cup\{v_{i}v_{i+1}\})=x\Delta +f$. Furthermore let $g$ be such
			that $A_{t_{v_{2}}}(u,v_{i},P)=x\Delta +g$. 
			
			\emph{Case 1: }$t_{u_{2}}\neq t_{v_2}$. Then \cref{cor-1}
			implies that $e+1\leq g\leq e+(\Delta -1-\delta _{0})$. Assume that $g<f$.
			Then, we can traverse path $P$ from $u$ to $v_{i}$ by departing at time $%
			t_{u_{2}}\geq t_{v_2}+1$ and by arriving at most at time $x\Delta +f-1$,
			and thus $d(u,v_p,P\cup R)<d(u,v_p,Q\cup R)$, which is a
			contradiction to the second initial assumption. Therefore $g\geq f$. That is,%
			\begin{equation*}
				e+1\leq f\leq g\leq e+(\Delta -1-\delta _{0}).  
			\end{equation*}
			
			Consider the path $P^{\ast }=P\cup \{v_{i}v_{i-1}\}$. Assume that we start
			traversing $P^{\ast }$ at time $t_{u_{2}}$. Then we arrive at $v_{i}$ at
			time $x\Delta +g$, and we continue by traversing edge $v_{i}v_{i-1}$ at time 
			$(x+1)\Delta +e$. That is, $d(u,v_{i-1},P^{\ast })=(x+1)\Delta
			+e-t_{u_{2}}+1$. 
			
			Now consider the path $Q^{\ast }=Q\setminus \{v_{i}v_{i-1}\}$. Let $h\in
			\lbrack 1,\Delta ]$ be such that $A_{t_{v_{2}}}(u,v_{i-1},Q^{\ast
			})=x\Delta +e-h$. That is, if we start traversing $Q^{\ast }$
			at time $t_{v_2}$, we arrive at $v_{i-1}$ at time $x\Delta +e-h$, \ie $%
			d(u,v_{i-1},Q^{\ast })=x\Delta +e-h-t_{v_2}+1$. Summarizing, we have:%
			\begin{eqnarray*}
				d(u,v_{i-1},P^{\ast })-d(u,v_{i-1},Q^{\ast }) &=&\Delta +h-(t_{u_{2}}-t_{v_2}) \\
				&\geq & \Delta +1-(\Delta-1)>0,
			\end{eqnarray*}%
			where the last inequality follows by (\ref{basic-ineq-0}). This proves the statement of the lemma.
			
			\emph{Case 2: }$t_{u_{2}}=t_{v_2}$. Then, it follows by Equation~(\ref%
			{basic-eq-1}) that $%
			A_{t_{v_2}}(u,v_{i},P)=A_{t_{v_2}}(u,v_{i},Q)-\delta _{0}\leq
			A_{t_{v_2}}(u,v_{i},Q)$. Therefore $g\leq e$. Similarly to Case 1
			above, consider the paths $P^{\ast }=P\cup \{v_{i}v_{i-1}\}$ and $Q^{\ast
			}=Q\setminus \{v_{i}v_{i-1}\}$. Assume that we start traversing $P^{\ast }$
			at time $t_{u_{2}}=t_{v_2}$. Then we arrive at $v_{i}$ at time $x\Delta +g$%
			, and we continue by traversing edge $v_{i}v_{i-1}$, either at time $%
			(x+1)\Delta +e$ (in the case where $g=e$) or at time $x\Delta +e$ (in the
			case where $g\neq e$). That is, $d(u,v_{i-1},P^{\ast })\geq x\Delta
			+e-t_{v_2}+1$.
			
			Similarly to Case 1, let $h\in \lbrack 1,\Delta ]$ be such that $%
			A_{t_{v_{2}}}(u,v_{i-1},Q^{\ast })=x\Delta +e-h$. That is, if we start
			traversing $Q^{\ast }$ at time $t_{v_{2}}$, we arrive at $v_{i-1}$ at time $%
			x\Delta +e-h$, \ie $d(u,v_{i-1},Q^{\ast })=x\Delta +e-h-t_{v_{1}}+1$.
			Summarizing, we have:%
			\begin{equation*}
				d(u,v_{i-1},P^{\ast })-d(u,v_{i-1},Q^{\ast })\geq h\geq 1,
			\end{equation*}%
			which proves the statement of the lemma.
		\end{proof}
		
		From the above it follows that if $P$ is a fastest path from $u$ to $v$,
		then all vertices of $P$, with the exception of vertices of interest $v_i \in P\setminus \{u,v\}$,
		are reached using the same path $P$.
		We use this fact in the following proof.
		
		\begin{proof}[Proof of \cref{lemma:FPT-uv-LabelAlmostalledges}]
			For every vertex of interest $v_i \in U \cap (P \setminus \{u,v\})$
			we have two options.
			First, when the fastest temporal path $P'$ from $u$ to $v_i$ is
			a subpath of $P$.
			In this case we determine the labeling of $P'$ using \cref{lemma:FPT-uv-Labelalledges}.
			Second, when the fastest temporal path $P'$ from $u$ to $v_i$ is 
			not a subpath of $P$. 
			In this case we know exactly how to label all of the edges of $P$,
			with the exception of edges of from $v_{i-1}v_i$, that are incident to $v_i$ in $P$.
		\end{proof}

		\begin{lemma}\label{lemma:FPT-noUndeterminedEdgeInSegment}
			Let $S_{u,v}= (u=v_1,v_2, \dots, v_p = v)$ be a segment in $G$.
			If $S_{u,v}$ is of length at least $5$ ($p > 5$) 
			then it is impossible for an inner edge $f = v_i v_{i+1}$ from $S_{u,v} \setminus \{u,v\}$ (where $f$ is an edge that is not incident to a vertex from $U$)
			to not be a part of any fastest temporal path, of length at least $2$ between vertices in $S_{u,v}$.
			In other words, there must exist a pair $v_j, v_{j'} \in S_{u,v}$ \st the fastest temporal path from $v_j$ to $v_{j'}$ passes through $f$.
			If $S_{u,v}$ is of length $4$ then all temporal paths of length $2$ avoid the inner edge $f$ if and only if $f$ has the same label as both of the edges incident to it, while the label of the last remaining edge is determined with respect to $\lambda(f)$.
		\end{lemma}
		
		\begin{proof}
			For an easier understanding and better readability, we present the proof for $S_{u,v}$ of length $5$.
			The case where $S_{u,v}$ is longer easily follows from the presented results.
			
			Let $S_{u,v} = (u=v_1,v_2, v_3, v_4, v_5, v_6=v)$.
			We distinguish two cases, first when $f = v_2v_3$ (note that the case with $f = v_4v_5$ is symmetrical),
			and the second when  $f = v_3v_4$.
			Throughout the proof we denote with $t_i$ the label of edge $v_i v_{i+1}$.
			Suppose for the contradiction, that none of the fastest temporal paths between vertices of $S_{u,v}$ traverses the edge $f$.
			
			\emph{Case 1: }$f = v_2v_3$.
			Let us observe the case of the fastest temporal paths between $v_{1}$ and $v_{3}$.
			Denote with $Q = (v_{1}, v_2, v_3)$ and with $P' = (v_3,v_4,v_5,v_6)$.
			From our proposition, it follows that
			\begin{itemize}
				\item the fastest temporal path $P ^ +$ from $v_1$ to $v_3$ 
				is of the following form 
				$P^+ = v_{1}  \leadsto v_6 \rightarrow v_5 \rightarrow v_4 \rightarrow v_3$,
				and
				\item the fastest temporal path $P ^ -$ from $v_{3}$ to $v_{1}$ 
				is of the following form 
				$P^- = v_{3} \rightarrow v_{4} \rightarrow v_5 \rightarrow v_6 \leadsto v_1$.
			\end{itemize}
			It follows that 
			$
			d(v_{1}, v_{3}, P^+) \leq d(v_{1}, v_{3}, Q),
			$
			and
			$
			d(v_{1}, v_{3}, P^-) \leq d(v_{1}, v_{3}, Q)
			$. 
			Note that $d(v_{1}, v_{3}, P^+) \geq 1 + d(v_6,v_3,P')$,
			and by the definition $d(v_6,v_3,P') = 1 + (t_4 - t_5)_\Delta + (t_3 - t_4)_\Delta$,
			where $(t_i - t_j)_\Delta$ denotes the difference of two consecutive labels $t_i, t_j$ modulo $\Delta$.
			Similarly holds for $d(v_{1}, v_{3}, P^+)$.
			Summing now both of the above equations we get
			\begin{equation}
				\begin{split} \label{eq:FPT-prf-fUnlabeled}
					d(v_{1}, v_{3}, P^+) + d(v_{3}, v_{1}, P^-) &\leq 
					d(v_{1}, v_{3}, Q) + d(v_{3}, v_{1}, Q) \\
					1 + d(v_6,v_3,P') + 1 + d(v_3,v_6,P') &\leq d(v_{1}, v_{3}, Q) + d(v_{3}, v_{1}, Q) \\
					3 + (t_4 - t_5)_\Delta + (t_3 - t_4)_\Delta +
					1 + (t_4 - t_3)_\Delta + (t_5 - t_4)_\Delta 
					&\leq 
					1 + (t_2 - t_1)_\Delta + 
					1 + (t_1 - t_2)_\Delta\\
					(t_4 - t_5)_\Delta + (t_5 - t_4)_\Delta +
					(t_4 - t_3)_\Delta + (t_3 - t_4)_\Delta  + 2 
					&\leq 
					(t_2 - t_1)_\Delta + (t_1 - t_2)_\Delta.
				\end{split}
			\end{equation}
			Note that if $t_i \neq t_j$ we get that 
			the sum
			$(t_i - t_j)_\Delta + (t_j - t_i)_\Delta$ equals exactly $\Delta$,
			and if $t_i = t_j$ the sum equals $2\Delta$.
			This follows from the definition of travel delays at vertices (see \cref{obs:travel-delays-both-directions}).
			Therefore we get from \cref{eq:FPT-prf-fUnlabeled}, 
			that the right part is at most $2 \Delta$, while the left part is at least $2 \Delta + 1$,
			for any relation of labels $t_1,t_2, \dots, t_5$,
			which is a contradiction.
			
			\emph{Case 2: }$f = v_3v_4$.
			Here we consider the fastest paths between vertices $v_{2}$ and $v_{4}$.
			By similar arguments as above we get
			\begin{align*}
				(t_5 - t_1)_\Delta + (t_4 - t_5)_\Delta + (t_5 - t_4)_\Delta + (t_1 - t_5)_\Delta + 2 \leq (t_3 - t_2)_\Delta + (t_2 - t_3)_\Delta,
			\end{align*}
			which is impossible.
			
			In the case when $S_{u,v}$ is longer, we would get even bigger number on the left hand side of \cref{eq:FPT-prf-fUnlabeled}, 
			so we conclude that in all of the above cases, it cannot happen that all fastest paths of length $2$, between vertices in $S_{u,v}$, avoid edge $f$.
			
			Let us observe now the case when $S_{u,v} = (u=v_1,v_2, v_3, v_4, v_5=v)$ is of length $4$.
			Let $f = v_2 v_3$ (the case with $f = v_3 v_4$ is symmetrical).
			Suppose that the fastest temporal paths between $v_1$ and $v_3$ do not use the edge $f$.
			We denote with $R^+$ the fastest path from $v_1$ to $v_3$, 
			which is of the form $u \leadsto v \rightarrow v_4 \rightarrow v_3$,
			and similarly 
			with $R^-$ the fastest path from $v_3$ to $v_1$, which is
			of the form $v_3 \rightarrow v_4 \rightarrow v \leadsto u$.
			We denote with $R' = (v_3, v_4, v_5)$ and with $S = (v_1,v_2,v_3)$.
			Again we get the following.
			\begin{equation*}
				\begin{split}
					d(v_{1}, v_{3}, R^+) + d(v_{3}, v_{1}, R^-) &\leq 
					d(v_{1}, v_{3}, S) + d(v_{3}, v_{1}, S) \\
					1 + d(v_5,v_3,R') + 1 + d(v_3,v_5,R') &\leq d(v_{1}, v_{3}, S) + d(v_{3}, v_{1}, S) \\
					(t_3 - t_4)_\Delta + (t_4 - t_3)_\Delta + 2 
					&\leq 
					(t_2 - t_1)_\Delta + (t_1 - t_2)_\Delta.        
				\end{split}
			\end{equation*}
			The only case when the equation has a valid solution is when $t_1 = t_2$ and $t_3 \neq t_4$,
			as in this case the left hand side evaluates to $\Delta + 2$, while the right side evaluates to $2 \Delta$.
			Repeating the analysis for the fastest paths between $v_2$ and $v_4$,
			we conclude that the only valid solution is when $t_2 = t_3$ and $t_1 \neq t_4$.
			Altogether, we get that $f$ is not a part of any fastest path of length $2$ in $S_{u,v}$ if and only if the label of edge $f$ is the same as the labels on the edges incident to it, while the last remaining edge has a different label.
			Note now that the fastest temporal path from $v_2$ to $v_4$ must first use the edge $u v_2$ and finish with the edge $v_4 v_5$, and it has to be of duration $D_{v_2, v_4}$. Using \cref{lemma:FPT-uv-Labelalledges} we determine the label of the edge $v_4v_5$ with respect to $\lambda(f)$.
		\end{proof}
		
		We now present some properties involving the vertices from $Z$, that form the trees in~$G[Z]$.
		\begin{lemma}\label{lemma:ftp-tree-neighbor}
			Let $v \in V(G')$ be a clip vertex of the tree $T_v$ in $G[Z \cup \{v\}]$, and let $z \in N_{T_v}(v)$ be an arbitrary child of $v$ in $T_v$. 
			Among all neighbors of $v$ in $G'$, let $w$ be the one that is on the smallest duration away from $z$ with respect to the values of $D$. 
			In other words, $w \in N_{G'}(v)$ such that $D_{z,w} \leq D_{z,w'}$ for all $w' \in N_{G'}(v)$.
			Then, the path $P^*=(z,v,w)$ represents the unique fastest temporal path from $z$ to $w$.
			Moreover, we can determine all labels of the tree $T_v$ with respect to the label $\lambda(vw)$.
		\end{lemma}
		\begin{proof}
			Suppose for contradiction that there exists a path $P^{**} \neq P^*$ from $z$ to $w$ such that $d(P^{**},\lambda) \leq d(P^*,\lambda)$. 
			By the structure of $G$, it follows that $P^{**}$ passes through the clip vertex $v$ of $T_v$ (as this is the only neighbor of $z$ in $G'$), continues through a vertex $w' \in N_{G'}(v) \setminus \{w \}$, 
			and through some other vertices $u_1, u_2, \dots, u_j$ in $G$ ($j \geq 0$) before finishing in $w$.
			Therefore, $P^{**}= (z,v,w',u_1,u_2,\dots,u_j,w)$.
			Now, since $D_{z,w} \leq D_{z,w'}$ by assumption, the first part of $P^{**}$ from $z$ to $w'$ takes at least $D_{z,w'}$ time, and thus it takes at least $D_{z,w}$ time.
			Since $w \neq w'$, we need at least one more time-step (one more edge) to traverse from $w'$ to reach $w$. 
			Therefore, $d(P^{**}, \lambda) \geq D_{z,w} + 1$ 
			which implies that $P^{**}$ is not the fastest temporal path from $z$ to $w$. Therefore, the only fastest temporal path from $z$ to $w$ is $P^*=(z,v,w)$.
			
			For the second part, knowing that the duration of $P^*$ is $D_{z,w}$, we can determine the label of the edge $z v$ with respect to the label $\lambda(vw)$ (see \cref{obs:FirstLabelAndDuration}).
			Furthermore, using the algorithm for trees (see \cref{thm:deltaExact-PolyTimeTrees}), we can now determine all the labels on the edges of $T_v$ with respect to the same label $\lambda(v w)$.
		\end{proof}
		
		\begin{lemma} \label{lemma:treeLabels-notInteresting}
			Let $x \in V(G')$ be a clip vertex of the tree $T_x$ in $G[Z \cup \{x\}]$, where $x \notin U$. 
			Let $v_1$ and $v_2$ be the two neighbors of $x$ in $G'$.
			Then the labels of the tree $T_x$ can be determined with respect to $\lambda(v_1 x)$ and $\lambda(x v_2)$.
		\end{lemma}
		\begin{proof}
			First observe that since $x$ is not a vertex of interest it must be a part of some segment $S_{u,w}$, where $u,v \in U$ and $x \neq u \neq v$. Therefore, $x$ is of degree $2$ in $G'$.
			Let $z \in V(T_x)$ be a child of $x$ in $T_x$, \ie a vertex in the first layer of the tree $T_x$.
			We observe the values $D_{z,v_1}, D_{z,v_2}$ and distinguish the following cases.
			
			First, $D_{z,v_1} = D_{z,v_2}$
			Then, using \cref{lemma:ftp-tree-neighbor} we conclude that the fastest temporal paths from $z$ to $v_1$ and from $z$ to $v_2$ are of length two.
			We know that these two paths consist of the edge $zx$ and $xv_1,xv_2$, respectively.
			This allows us to determine the label of the edge $zx$ (and consequently all other edges of $T_x$) with respect to $\lambda(xv_1)$ and $\lambda(xv_2)$.
			
			Second, $D_{z,v_1} \neq D_{z,v_2}$. 
			Let us denote with $t_1 = \lambda(x v_1), t_2 = \lambda(x v_2)$ and $t_3 = \lambda(zx)$.
			W.l.o.g. suppose that $\min \{t_1,t_2,t_3\} = t_3$, and that $D_{z,v_1} > D_{z,v_2}$ (the other case is analogous). It follows that $t_1 > t_2$.
			We want to now prove that the inequality $D_{v_1,z} < D_{v_2,z}$ holds.
			Suppose for the contradiction that the inequality is false. Then $D_{v_2,z} < D_{v_1,z} \leq (\Delta + t_3 - t_1)$. This implies that the fastest temporal path from $v_2$ to $z$ cannot use the path $(v_1,x,z)$, and is therefore of form $(v_2,x,z)$.
			By the definition, the duration of this path is $D_{v_2,z} = \Delta + t_3 - t_2 + 1$.
			But since $t_1 > t_2$ it follows that $(\Delta + t_3 - t_2) + 1 > (\Delta + t_3 - t_1) + 1$.  We also know that $D_{v_1,z} \leq  (\Delta + t_3 - t_1) + 1$. 
			This implies that $D_{v_2,z} > D_{v_1,z} $, a contradiction.\\
			Knowing $D_{z,v_1} > D_{z,v_2}$ we can determine the label of edge $zx$ (and consequently all other edges of $T_x$) with respect to $\lambda(x v_2)$,
			and similarly knowing $D_{v_1,z} < D_{v_2,z}$ we determine the label of edge $zx$ (and all other edges of $T_x$) with respect to $\lambda(x v_1)$.
		\end{proof}
		
		Remember, in the case where the clip vertex $u$ of the tree $T_u$ in $G[Z \cup \{u\}]$ is a vertex of interest, we split the vertices in the first layer of $T_u$ into at most $|N_{G'}(u)|$ equivalence classes (as explained in \cref{sec:preprocessing-FPT}).
		Let us now show the following important property of these equivalence classes.
		\begin{lemma}\label{lemma:treeEquivClass-intVertices}
			Let $u \in V(G')$ be a clip vertex of the tree $T_u$ in $G[Z \cup \{u\}]$, where $u \in U$,
			and let $z_1,z_2 \in V(T_u)$ be in the same equivalence class of the tree $T_u$.
			Then, the fastest temporal paths from $z_1$ and from $z_2$ to any other vertex in $G'$ coincide on the edges in $G'$.
			Similarly, the fastest temporal paths from any other vertex in $G'$ to $z_1$ and to $z_2$ coincide on the edges in $G'$.
		\end{lemma}
		\begin{proof}
			Let $y \neq u$ be a vertex in $V(G')$.
			Denote with $P_1$ the underlying path of the fastest temporal path from $z_1$ to $y$, which consists of the edge $z_1 u$ and the path $P$ from $u$ to $y$.
			Similarly, let $Q_2$ be the underlying path of the fastest temporal path from $z_2$ to $y$, consisting of the edge $z_2u$ together with the path $Q$ from $u$ to $y$.
			Define $P_2$ as the second path from $z_2$ to $y$ that first uses the edge $z_1u$ and then the path $P$.
			Similarly, $Q_1$ represents the second path from $z_1$ to $y$ that first uses the edge $z_2u$ and then the path $Q$.
			Our objective is to demonstrate that either $P = Q$ or that $d(P_1, \lambda) = d(Q_1, \lambda)$ (and $d(P_2, \lambda) = d(Q_2, \lambda)$).
			This implies that $z_1$ and $z_2$ use temporal paths that coincide on the vertices of $V(G) \setminus V(T_u)$ to reach $y$. For an illustration see \cref{fig:FPT-equivClass}.
			
			\begin{figure}[t]
				\centering
				\includegraphics[width=0.6\columnwidth]{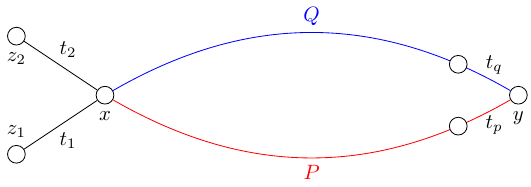}
				\caption{An example of the situation in \cref{lemma:treeEquivClass-intVertices}.
					We have two paths $Q$ (in blue) and $P$ (in red) from $x$ to $y$.
					We assume that the fastest temporal path from $z_1$ to $y$ is $P_1 = \{z_1 u\} \cup P$,
					and the fastest temporal path from $z_2$ to $y$ is $Q_2 = \{z_2 x\} \cup Q$.
					We denote with $Q_1 = \{z_1 u\} \cup Q$ another temporal path from $z_1$ to $y$, 
					with $P_2 = \{z_2 u\} \cup P$ another temporal path from $z_2$ to $y$.
					The labels on the edges $z_1 x, z_2 x$ are $t_1,t_2$, respectively. Similar, the labels of the last edges of paths $P$ and $Q$ are  $t_p, t_q$, respectively.
					\label{fig:FPT-equivClass}}
			\end{figure}
			
			Let us set the label of the edge $z_1u$ to $t_1$, the label of $z_2u$ to $t_2$, the label of the last edge of the path $P$ as $t_p$ and the label of the last edge of the path $Q$ as $t_q$.
			By the definition, since $P_1$ represents the fastest temporal path form $z_1$ to $y$ we get that
			$D_{z_1,y} = t_p - t_1 + c_p \Delta$, where $c_p \in \mathbb{N}$.
			Similarly, for the path $Q_2$ it holds that $D_{z_2,y} = t_q - t_2 + c_q \Delta$ with $c_q \in \mathbb{N}$.
			Note that the difference between the first label of $P$ (resp.~$Q$) with $t_1$ and $t_2$ is smaller than $\Delta$,
			or the difference (with at least one $t_1,t_2$) is $\Delta$ if and only if the first label of $P$ and the first label of $Q$ are the same. 
			This observation is crucial in our arguing below.
			
			We want to first show that $c_p = c_q$.
			Let us assume, for the sake of contradiction, that this is not the case, and suppose that $c_p > c_q$ (the case with $c_q > c_p$ is analogous).
			Then $c_p \leq c_q + 1$.
			Now, since $z_1$ and $z_2$ are in the same equivalence class and by the definition of the duration of a temporal path we get that $d(P_2, \lambda) = t_p - t_2 + c_p \Delta \leq t_p - t_2 + (c_q - 1)\Delta$. 
			Because $Q_2$ is the fastest path from $z_2$ to $y$ we have also that $d(P_2, \lambda) \geq D_{z_2,y}$, which gives us $t_p - t_2 + (c_q - 1) \Delta \geq t_q - t_2 + c_q \Delta$.
			This is equivalent to $t_p \geq t_q + \Delta$, but since $t_p, t_q \in \Delta$ this cannot happen.
			Therefore, we conclude that $c_p = c_q$.
			
			Now, we want to show also, that $t_p = t_q$.
			Let us assume, for the sake of contradiction, that this is not the case, and suppose that $t_p > t_q$ (the case with $t_q > t_p$ is analogous).
			Then the duration of the path $Q_1$ is $d(Q_1,\lambda) = t_q - t_1 + c_q \Delta$ since $c_q = c_p$.
			Above we proved that $c_p = c_q$. We also know that $d(Q_1,\lambda) \geq d(P_1,\lambda)$ as $P_1$ is the fastest path from $z_1$ to $y$. All of this results in $t_q - t_1 + c_p \Delta \geq t_p - t_1 + c_p \Delta$ implying $t_q \geq t_p$, a contradiction.
			Therefore, $t_p = t_q$.
			
			We proved that either $P$ and $Q$ are the same, or if they are different then $P_1$ and $Q_1$ are of the same duration and are both fastest paths from $z_1$ to $y$ (the same holds for $z_2$). 
			
			Proof for the fastest temporal paths in the other direction, namely starting at $y$ and reaching $z_1$ and $z_2$, is done analogously.
		\end{proof}
		
		\begin{observation}\label{obs:tree-only-neighbors}
			Let $v \in V(G')$ be a clip vertex of the tree $T_v$ in $G[Z \cup \{v\}]$,
			$z \in N_{T_v}(v)$ be a child of $v$ in $T_v$, and let $z'$ be a descendant of $z$ in $T_v$.
			Let $x \in V(G)\setminus V(T_v)$ be an arbitrary vertex. 
			Denote by $P_z$ and $P_{z'}$ the underlying paths of the fastest temporal paths from $z$ and from $z'$ to $x$, respectively,
			and denote by $Q$ the (unique) path between $z$ and $z'$ in $T_v$.
			Then $P_z$ and $P_{z'}$ differ only in the edges of $Q$.
		\end{observation}
		The correctness of the above observation is a consequence of \cref{lemma:treeEquivClass-intVertices} and of the fact that $P_z$ and $P_{z'}$ leave the tree $T_v$ using the same edge $zv$.
		
		
		\subsubsection{Adding constraints and variables to the ILP}\label{sec:constraints}
		We start by analyzing the case where we want to determine the labels on fastest temporal paths between vertices of interest.
		We proceed in the following way.
		Let $u,v \in U$ be two vertices of interest and let $P_{u,v}$ be the fastest temporal path from $u$ to $v$.
		If $P_{u,v}$ is a segment we determine all the labels of edges of $P_{u,v}$, with respect to the label of the first edge (see \cref{lemma:FPT-uv-Labelalledges}).
		In the case when $P_{u,v}$ is a sequence of $\ell$ segments, we determine all but $\ell - 1$ labels of edges of $P_{u,v}$, with respect to the label of the first edge (see \cref{lemma:FPT-uv-LabelAlmostalledges}).
		We call these $\ell - 1$ edges, \emph{partially determined} edges.
		After repeating this step for all pairs of vertices in $U$,
		the edges of fastest temporal paths from $u$ to $v$, where $u,v \in U$, are determined with respect to the label of the first edge of each path,
		or are partially determined.
		If the fastest temporal path between two vertices $u,v \in U$ is just an edge $e$, then we treat it as being determined, since it gets assigned a label $\lambda(e)$ with respect to itself.
		All other edges in $G'$ 
		are called the \emph{not yet determined} edges.
		Note that the not yet determined edges are exactly the ones that are not a part of any fastest temporal path between any two vertices in $U$.
		
		Now we want to relate the not yet determined segments with the determined ones.
		Let $S_{u,v}$ and $S_{w,z}$ be two segments.
		At the beginning we have guessed the fastest path from $v_i$ to all vertices in $S_{w,z}$ (see guess G-\ref{FPT:guess-splitFromAnotherSegmentAndPaths}).
		We did this by determining which vertices $z_j, z_{j+1}$ in $S_{w,z}$ are furthest away from $v_i$
		(remember we can have the case when $z_j = z_{j+1}$),
		and then we guessed how the path from $v_i$ leaves the segment $S_{u,v}$ (\ie either through the vertex $u$ or $v$),
		and then how it reaches $z_j$ (in the case when $z_j \neq z_{j+1}$ there is a unique way,
		when $z_j = z_{j+1}$ we determined which of the vertices $w$ or $z$ is on the fastest path).
		W.l.o.g.\ assume that we have guessed that the fastest path from $v_i$ to $z_j$
		passes through $w$ and $z_{j-1}$.
		Then the fastest temporal path from $v_i$ to $z_{j+1}$ passes through $z$.
		And all fastest temporal paths from $v_i$ to any $z_{j'} \in S_{w,z}$
		use all of the edges in $S_{w,z}$ with the exception of the edge $z_j z_{j+1}$.
		Using this information and \cref{obs:FirstLabelAndDuration}, we can determine the labels on all edges, with respect to the first or last label from the segment $S_{u,v}$,
		with the exception of the edge $z_j z_{j+1}$.
		Therefore, all edges of $S_{w,z}$ but $z_j z_{j+1}$ become determined.
		Since we repeat that procedure for all pairs of segments,
		we get that for a fixed segment $S_{w,z}$ we end up with a not yet determined edge $z_j z_{j+1}$
		if and only if this is a not yet determined edge in relation to every other segment $S_{u,v}$ and its fixed vertex $v_i$.
		We repeat this procedure for all pairs of segments.
		Each specific calculation takes linear time. Since there are $O(k^2)$ segments, the whole calculation takes $O(k^4)$ time.
		
		From the above procedure (where we were determining labels of edges of segments with each other) 
		we conclude that all of the edges $e_i=v_i v_{i+1}$ of a segment $S_{u,v}=(u=v_1, v_2, \dots, v_p=v)$ are in one of the following relations.
		First, where all of the edges are determined with respect to each other.
		Second, where
		there are some edges $e_1, e_2, \dots e_{i-1}$, whose label is determined with respect to the label $\lambda(e_1)$, there is an edge $f = e_i = v_i v_{i+1}$ which is not yet determined,
		and then there follow the edges
		$e_{i+1}, e_{i+2}, \dots, e_{p-1}$, whose labels are determined with respect to $\lambda(e_{p-1})$.
		Third, where the first $e_1, \dots, e_{i-1}$ edges are determined with respect to the $\lambda(e_1)$ and all of the remaining edges $e_i, e_{i+1} \dots, e_{p-1}$ are determined with respect to the $\lambda(e_{p-1})$.
		We want to now determine all of the edges in such segment $S_{u,v}$ with respect to just one edge (either the first or the last one).
		In the second case, we use the fact that at least one of the temporal paths between $v_{i-1}$ and $v_{i+1}$ has to pass through $f$, to determine $\lambda(f)$ with respect to $\lambda(e_{i-1})$ (and consequently $\lambda(e_1)$),
		and similarly, one of the temporal paths between $v_{i}$ and $v_{i+2}$ has to pass through $f$, which determines $\lambda(f)$ with respect to $\lambda(e_{i+1})$ (and consequently $\lambda(e_{p-1})$).
		In the third case, knowing the temporal paths between $v_{i-1}$ to $v_{i+1}$ results in determining the label of $\lambda(e_{i-1}$ with $\lambda(e_i)$, which consequently relates labels of all of the edges of the segment against each other.
		To determine the desired paths we proceed as follows.
		\begin{enumerate}[G-1.]
			\setcounter{enumi}{\value{guesscounter}}
			\item \label{FPT-guessFTPinSegmentTgroughEdge}
			Let $S_{u,v} = (u=v_1, v_2, \dots, v_p=v)$  be a segment of length at least $4$.
			If there is a not yet determined edge $v_i v_{i+1} = f$ in $S_{u,v}$ then
			we guess which of the fastest temporal paths: from $v_{i-1}$ to $v_{i+1}$, from $v_{i+1}$ to $v_{i-1}$, from $v_{i}$ to $v_{i+2}$, from $v_{i+2}$ to $v_{i}$ pass through the edge $f$. 
			If there are two incident edges $e = v_{i-1} v_{i}$ and $f = v_i v_{i+1} $ in $S_{u,v}$, that are determined with respect to $\lambda(v_1 v_2)$ and $\lambda (v_{p-1} v_p)$, respectively then we guess which of the fastest temporal paths: from $v_{i-1}$ to $v_{i+1}$, from $v_{i+1}$ to $v_{i-1}$ pass through the edges $e,f$. \\
			We create $O(1)$ guesses for every such segment $S_{u,v}$, and $O(k^2)$ new guesses in total, as there are at most $O(k^2)$ segments.
			\setcounter{guesscounter}{\value{enumi}}
		\end{enumerate}
		Note that the condition for segment length at least four comes from~\cref{lemma:FPT-noUndeterminedEdgeInSegment}.
		We now conclude the following.
		\begin{corollary}\label{cor:FPT-4determined}
			Let $S_{u,v}$ be an arbitrary segment in $G'$.
			If $S_{u,v}$ is of length $3$ or $2$ then it has at most $3$ or $2$ not yet determined edges, respectively.
			If $S_{u,v}$ is of length at least $4$ then the labels of all its edges are determined with respect to the first edge. 
		\end{corollary}
		
		At this point $G'$ is a graph, where each edge $e$ has a value for its label $\lambda(e)$
		that depends on (\ie is a function of) some other label $\lambda(f)$ of edge $f$,
		or it depends on no other label.
		We now describe how we create variables and start building our ILP instances. 
		For every edge $e$ in $G'$ that is incident to a vertex of interest, we create a variable $x_e$ that can have values from $ \{1, 2, \dots, \Delta\}$.
		Besides that, we create one variable for each edge that is still not yet determined on a segment.
		Since each vertex of interest is incident to at most $k$ edges in $G'$, and each segment has at most one extra not yet determined edge, we create $O(k^2)$ variables.
		At the end we create our final guess.
		\begin{enumerate}[G-1.]
			\setcounter{enumi}{\value{guesscounter}}
			\item \label{FPT:guessallPermutations}
			We guess the permutation of all $O(k^2)$ variables.
			So, for any two variables $x_e$ and $x_f$, we know if $x_e < x_f$ or $x_e = x_f$, or $x_e > x_f$.
			This results in $O(k^2)! = k^{O(k^2)}$ guesses
			and consequently
			each of the ILP instances we created up to now is further split into $k^{O(k^2)}$ new ones.
		\end{enumerate}
		
		We have now finished creating all ILP instances.
		From \cref{sec:FPT-guessing} we know the structure of all guessed paths, to which we have just added also the knowledge of permutation of all variables.
		We proceed with adding constraints to each of our ILP instances.
		First we add all constraints for the labels of edges that we have determined up to now.
		We then continue to iterate through all pairs of vertices and start adding equality (resp.~inequality) constraints for the fastest (resp.~not necessarily fastest) temporal paths between them.
		
		We now describe how we add constraints to a path. Whenever we say that a duration of a path gives an equality or inequality constraint, we mean the following.
		Let $P=(u=v_1,v_2, \dots, v_p = v)$ be the underlying path of a fastest temporal path from $u$ to $v$,
		and let $Q = (u=z_1,z_2, \dots, z_r = v)$ be the underlying path of another temporal path from $u$ to $v$.
		Then we know that $d(P,\lambda) = D_{u,v}$ and $d(Q, \lambda) \geq D_{u,v}$.
		Using \cref{obs:durationPwithWaitingTimes}
		we create an \emph{equality constraint} for $P$
		of the form 
		\begin{equation}\label{eq:FPT-equalityConstraint}
			D_{u,v} = \sum _ {i=2} ^ {p-1} (\lambda (v_{i}v_{i+1}) - \lambda (v_{i-1}v_i))_\Delta + 1,
		\end{equation}
		and an \emph{inequality constraint} for $Q$ 
		\begin{equation}\label{eq:FPT-inequalityConstraint}
			D_{u,v} \leq \sum _ {i=2} ^ {r-1} (\lambda (z_{i}z_{i+1}) - \lambda (z_{i-1}z_i))_\Delta + 1.
		\end{equation}
		In both cases we implicitly assume that if the difference of $(\lambda (z_{i}z_{i+1}) - \lambda (z_{i-1}z_i))$ is negative, for some $i$, we add the value $\Delta$ to it (\ie we consider the difference modulo $\Delta$), therefore we have the sign $\Delta$ around the brackets.
		Note that we can determine if the difference of two consecutive labels is positive or negative. 
		In the case when two consecutive labels are determined with respect to the same label $\lambda(e)$ the difference between them is easy to determine.
		If consecutive labels are not determined with respect to the same label, both labels are considered undetermined and are assigned a variable for which we know in what kind of relation they are (see guess \textcolor{lipicsGray}{\textsf{\textbf{G-\ref{FPT:guessallPermutations}}}}).
		Therefore, we know when $\Delta$ has to be added, which implies that \cref{eq:FPT-equalityConstraint,eq:FPT-inequalityConstraint} are calculated correctly for all paths.
		
		We iterate through all pairs of vertices $x,y$ and make sure that the fastest temporal path from $x$ to $y$ produces the equality constrain \cref{eq:FPT-equalityConstraint},
		and all other temporal paths from $x$ to $y$ produce the inequality constraint \cref{eq:FPT-inequalityConstraint}.
		For each pair we argue how we determine these paths.
		
		\subparagraph{\boldmath Fastest paths between $u,v \in U$.}
		Let $u,v\in U$, \ie both $u,v$ are vertices of interest.
		For the path from $u$ to $v$ (resp.~from $v$ to $u$) in $G'$, which we guessed that it coincides with the fastest in \textcolor{lipicsGray}{\textsf{\textbf{G-\ref{FPT-guessFTPamongU}}}}, we introduce an equality constraint. 
		We then iterate over all other paths from $u$ to $v$ (resp.~from $v$ to $u$) in $G'$, and for each one we introduce an inequality constraint.
		There are $k^{O(k)}$ possible paths from $u$ to $v$ in $G'$. Therefore we add $k^{O(k)}$ inequality constraints for the pair $u,v$.
		
		\subparagraph{\boldmath Fastest paths from $u \in U$ to $x \in V(G') \setminus U$.}
		From the guesses \textcolor{lipicsGray}{\textsf{\textbf{G-\ref{FPT:guess-uToSegmentz2}}}} and \textcolor{lipicsGray}{\textsf{\textbf{G-\ref{FPT:guess-splitFromUtoAnotherSegment}}}} we know the fastest temporal paths from $u$ to all vertices in a segment $S_{w,v}$.
		In this case we create an equality constraint for the fastest path and we 
		iterate through all other paths, for which we introduce the inequality constraints.
		There are $k^{O(k)}$ possible paths of the form $u \leadsto w$ (resp.~$u \leadsto v$),
		and a unique way how to extend these paths from $w$ (resp.~$v$) to reach $x$ in $S_{w,v}$.
		Therefore we add $k^{O(k)}$ inequality constraints for the pair $u,x$.

		\subparagraph{\boldmath Fastest paths from $x \in V(G') \setminus U$ to $u \in U$.}
		Let $x$ be a vertex in the segment $S_{w,z} = (w=z_1,z_2, \dots, z_r = z)$, and let $u\in U$.
		If $S_{w,z}$ is of length $3$ or less, then we already know the fastest temporal path from every vertex in the segment to $u$ 
		(since $S_{w,z}$ has at most $2$ inner vertices, we determined the fastest temporal paths from them to $u$ in guess \textcolor{lipicsGray}{\textsf{\textbf{G-\ref{FPT-guessFTPamongUandUstar}}}}).
		
		Assume that $S_{w,z}$ is of length at least $4$. From \cref{cor:FPT-4determined} we know that the labels of all the edges in $S_{w,z}$ are determined with respect to the label of the first edge.
		Moreover, this gives us the knowledge of the exact differences among two consecutive edge labels, which is enough to uniquely determine travel delays at all of the inner vertices $z_i \in S_{w,z}$ (see~\cref{def:travel-delay}).
		
		From the matrix $D$ we can easily determine the two vertices $z_i, z_{i+1} \in S_{w,z}\setminus\{w,z\}$ for which the fastest temporal path from $z_i$ to $u$ has the biggest duration.
		Let us denote with $P^+$ the fastest temporal path of the form $z_2 \rightarrow z \leadsto u$,
		and with $P^-$ the fastest temporal path of the form $z_{r-1} \rightarrow w \leadsto u$ 
		(we know these paths from guess~\textcolor{lipicsGray}{\textsf{\textbf{G-\ref{FPT:guess-uToSegmentz2}}}}).
		It follows that all vertices $z_j$ in $S_{w,z} \setminus\{z_i,z_{i+1}\}$ that are closer to $w$ than $z_i,z_{i+1}$ reach $u$ the fastest using the path $(z_j \rightarrow z_{j-1} \rightarrow \cdots \rightarrow z_2) \cup P^+$ 
		and
		all the vertices $z_j$ in $S_{w,z}\setminus\{z_i,z_{i+1}\}$ that are closer to $z$ than $z_i, z_{i+1}$ reach $u$ the fastest using the path
		$(z_j \rightarrow z_{j+1} \rightarrow \cdots \rightarrow z_{r-1}) \cup P^-$.
		Since the first part of the above path is unique, and we know that the second part is the fastest, it follows that these paths indeed represent the fastest temporal paths to $u$.
		What remains to determine is the fastest temporal paths from $z_i,z_{i+1}$ to $u$.
		We distinguish the following two options.
		\begin{enumerate}[(i)]
			\item \label{FPT:equality-fromUtoX-twosplit}
			$z_i \neq z_{i+1}$.
			Then the fastest temporal path from $z_i$ to $u$ is 
			$(z_i \rightarrow z_{i-1} \rightarrow \cdots \rightarrow z_2) \cup P^+$,
			and 
			the fastest temporal path from $z_{i+1}$ to $u$ is 
			$(z_{i+1} \rightarrow z_{i+2} \rightarrow \cdots \rightarrow z_{r-1}) \cup P^-$.
			\item \label{FPT:equality-fromUtoX-onesplit}
			$z_i = z_{i+1}$, \ie let $z_i$ be the unique vertex, that is furthest away from $u$ in $S_{w,z}$.
			In this case we have to determine if the fastest temporal path from $z_i$ to $u$, travels first through vertex $z_{i-1}$ (and then through $w$),
			or it travels first through $z_{i+1}$ (and then through $z$).
			Since we know the values $D_{z_{i-1},u}, D_{z_{i+1},u}$,
			and we know the value of the waiting time $\tau_{v_{i-1}} ^{v_{i}, v_{i-2}}$ at vertex $v_{i-1}$ when traveling from $v_i$ to $v_{i-2}$, 
			we can uniquely determine the desired path.
			We set $c = D_{z_{i-1},u} +  \tau_{v_{i-1}} ^{v_{i}, v_{i-2}}$
			and compare $c$ to the value $D_{z_{i},u}$.
			If $c < D_{z_{i},u}$ we conclude that our ILP has no solution and we stop with calculations,
			if $c = D_{z_{i},u}$ then the fastest temporal path from $z_i$ to $u$ is of the form $(z_i \rightarrow z_{i-1} \rightarrow \cdots \rightarrow z_2) \cup P^+$,
			if $c > D_{z_{i},u}$ then the fastest temporal path from $z_i$ to $u$ is of the form $(z_i \rightarrow z_{i+1} \rightarrow \cdots \rightarrow z_{r-1}) \cup P^-$.
		\end{enumerate}
		Once the fastest temporal path from $c$ to $u$ is determined, we introduce an equality constraint for it.
		For each of the other $k^{O(k)}$ paths from $x$ to $u$ (which correspond to all paths of the form $w \leadsto u$ and $z \leadsto u$, together with the unique subpath on $S_{w,z}$),
		we introduce an inequality constraint. Therefore we add $k^{O(k)}$ inequality constraints for the pair $x,u$.
		
		\subparagraph{\boldmath Fastest paths between $x,y\in V(G') \setminus U$.}
		Let $x,y \in V(G') \setminus U$.
		There are two options.
		\begin{enumerate}[(i)]
			\item Vertices $x,y$ are in the same segment $S_{u,v} = (u,v_1,v_2, \dots, v_p, v)$. 
			If the length of $S_{u,v}$ is less than $4$ then we know what is the fastest path between vertices, as $x,y \in U^*$.
			Suppose now that $S_{u,v}$ is of length at least $4$ and assume that $x$ is closer to $u$ in $S_{u,v}$ than $y$.
			Then we have two options; either the path from $x$ to $y$ travels only through the edges of $S_{u,v}$, denote such path as $P_{x,y}$,
			or it is of the form $x \rightarrow v_1 \rightarrow u \leadsto v \rightarrow v_p \rightarrow y$, denote is as $P_{x,y}^*$.
			Note that we can determined $P_{x,y}^*$ as it is a concatenation of a unique path from $x$ to $v_2$, together with the fastest path from $v_2$ to $v_{p}$, that travels through $u$ and $v$ (we know this path from \textcolor{lipicsGray}{\textsf{\textbf{G-\ref{FPT-guessFTPamongv2z2}}}}),
			and the unique path from $v_p$ to $y$.
			Because of \cref{cor:FPT-4determined} we can determine $c = d(P_{x,y}, \lambda)$.
			If $c > D_{x,y}$ we set the fastest path to be $P_{x,y}^*$,
			if $c = D_{x,y}$ then the fastest path is $P_{x,y}$, and if 
			$c < D_{x,y}$ we conclude that our ILP has no solution and we stop with calculations.
			\item Vertices $x$ and $y$ are in different segments. 
			Let $x$ be a vertex in the segment $S_{u,v} = (u=v_1,v_2, \dots, v_p = v)$ and let $y$ be a vertex in the segment $S_{w,z} = (w=z_1, z_2,z_3, \dots, z_r = z)$.
			By checking the durations of the fastest paths from $x$ to every vertex in $S_{w,z} \setminus \{w,z\}$
			we can determine the vertex $z_i \in S_{w,z} $, for which the duration from $x$ is the biggest.
			Note that if there are two such vertices $z_i$ and $z_{i+1}$, we know exactly how all fastest temporal paths enter $S_{w,z}$ 
			(we use similar arguing as in case~(\ref{FPT:equality-fromUtoX-twosplit}) from above, where we were determining the fastest path from $x \in V(G')$ to $u \in U$).
			This implies that the fastest temporal paths from $x$ to all vertices $z_2, z_3, \dots, z_{i-1}$ (resp.~$z_{i+1}, z_{i+2}, \dots, z_{r-1}$)  pass through $w$ (resp.~$z$).
			Now we determine the vertex $v_j \in S_{u,v} \setminus \{u,v\}$,
			for which the value of the durations of the fastest paths from it to the vertex $y$ is the biggest.
			Again, if there are two such vertices $v_{j}$ and $v_{j+1}$ we know exactly how the fastest temporal paths, starting in these two vertices,
			leave the segment $S_{u,v}$. 
			We use similar arguing as in case~(\ref{FPT:equality-fromUtoX-twosplit}) from above, when we were determining the fastest path from $x \in V(G')$ to $u \in U$.
			Knowing the vertex $v_j$
			implies that the fastest temporal paths from the vertices $v_2, v_2, \dots, v_{j-1}$ (resp.~$v_{j+1}, v_{j+2}, \dots, v_{p-1}$) to the vertex $y$ passes through $u$ (resp.~$v$).
			Since we know the following fastest temporal paths (see guess~\textcolor{lipicsGray}{\textsf{\textbf{G-\ref{FPT-guessFTPamongv2z2}}}}) 
			$z_2 \rightarrow w \leadsto u \rightarrow v_2$,
			$z_2 \rightarrow w \leadsto v \rightarrow v_{p-1}$,
			$z_{r-1} \rightarrow z \leadsto v \rightarrow v_{p-1}$ and
			$z_{r-1} \rightarrow z \leadsto v \rightarrow v_{p-1}$,
			we can uniquely determine all fastest temporal paths from $x \neq v_j$ to any $y \in S_{u,v} \setminus \{z_i\}$.
			
			In case when $x = v_j$ and $y = z_i$ and the segments are of length at least $4$, we can uniquely determine the fastest path from~$v_j$ to $z_i$,
			using 
			similar arguing as in case~(\ref{FPT:equality-fromUtoX-onesplit}) from above, when we were determining the fastest path from $x \in V(G')$ to $u \in U$.
			If at least one of the segments is of length $3$ or less,
			we can again uniquely determine the fastest path from~$v_j$ to~$z_i$, using the same approach, 
			and the knowledge of fastest paths to (or from) all vertices of the segment of length~$3$
			(as we guessed them in guess~\textcolor{lipicsGray}{\textsf{\textbf{G-\ref{FPT-guessFTPamongv2z2}}}}).
		\end{enumerate}
		Once the fastest path is determined we introduce the equality constraint for it
		and iterate through all other paths, for which we introduce inequality constraints.
		To enumerate all these non-fastest temporal paths, we just consider all possible paths $u \leadsto w$, where 
		$u$ and $w$ are the vertices of interest that are the endpoints of segments to which $x$ and $y$ belong;
		once the correct segment is reached, there is a unique path to the desired vertex $x$ (resp.~$y$). 
		Therefore we introduce $k^{O(k)}$ inequality constraints for each pair of vertices $x,y$.
		
		\subparagraph{\boldmath Fastest paths for vertices in $Z$.}
		All of the above is enough to determine the labeling $\lambda$ of $G'$. We have to extend the labeling to consider also the vertices of~$Z= V(G) \setminus V(G')$ that we initially removed from $G$.
		
		Recall that $G[Z]$ consists of disjoint trees and that each of these trees has a unique neighbor (clip vertex) $v$ in $G'$. We then define the tree $T_v$ in $G[Z \cup \{v\}]$ as the collection of trees from $G[Z]$ with a clip vertex $v$, together with the root $v$. 
		Determining the fastest temporal paths between any two vertices in the same tree is a straightforward process (see~\cref{thm:deltaExact-PolyTimeTrees}), therefore we exclude this case from our upcoming analysis.
		From \cref{obs:tree-only-neighbors} it follows that knowing temporal paths between any $y  \in V(G')$ and all vertices in the first layer of the tree $T_v$ (\ie children of the root $v$),
		it is enough to determine the fastest temporal paths between $y$ and all other vertices in $T_v$.
		Therefore,
		in the upcoming analysis, we focus only on the vertices in the first layer of each tree $T_v$.
		During the process of determining the fastest paths from and to the vertices in $Z$, we use the fact that we have already identified the fastest paths among all vertices in $G'$.
		
		We split our analysis into two cases.
		First, when the clip vertex $v$ of tree $T_v$ is not a vertex of interest,
		and second when the clip vertex is also a vertex of interest in $G'$.
		In the first case we use the fact that $v$ has only two edges $e,f$ incident to it in $G'$, and that we can determine all the labels of the tree edges with respect to $\lambda(e),\lambda(f)$ (see~\cref{lemma:treeLabels-notInteresting}).
		This results to be enough for us to determine the fastest temporal paths among any vertex $r$ in the first layer of the tree $T_v$ and an arbitrary vertex in $V(G)\setminus V(T_v)$.
		In the second case we cannot determine the labeling of the tree with respect to the labels of all edges incident to the clip vertex.
		Therefore, we split the vertices in the first layer of $T_v$ into equivalence classes, and use the fact that the fastest temporal paths between two vertices in the same equivalence class coincide on the edges outside of $T_v$.
		
		\subparagraph{\boldmath Fastest paths from $r \in Z$ to $y \in U \cup U^*$.}
		Let $x \in V(G')$ be a clip vertex of the tree $T_x$ in $G[Z \cup \{x\}]$ with $r \in V(T_i)$ be a vertex in the first layer of $T_x$.
		We distinguish the following two cases.
		\begin{enumerate}[(i)]
			\item \label{FPT:clipInU-1}
			The clip vertex $x =u \in U$ is a vertex of interest.
			In this case we can w.l.o.g. assume that $r$ is a representative vertex in its equivalence class among the first layer vertices of $T_u$.
			From the guesses~\textcolor{lipicsGray}{\textsf{\textbf{G-\ref{FPT-guessFTPamongUandZstar}}}} and \textcolor{lipicsGray}{\textsf{\textbf{G-\ref{FPT-guessFTPamongUstarandZstar}}}}
			we know the fastest temporal path from $r$ to $y$.
			\item \label{FPT:clipNotInU-1}
			The clip vertex $x \in U$ is not a vertex of interest. Then $x = v_j$ is a part of some segment $S_{u,v} = \{u=v_1, v_2, \dots, v_p = v)$, where $j \neq 1 \neq p$.
			Using \cref{lemma:treeLabels-notInteresting} we can determine all the edge labels of $T_x$ with respect to the label $\lambda(v_{j-1}v_j)$ and with respect to the label $\lambda(v_{j}v_{j+1})$.
			Using the calculations of fastest temporal paths among vertices in $G'$ and the performed guesses we know the exact structure (\ie the sequence of vertices and edges) of the following paths:
			\begin{itemize}
				\item path $P_{xy}^*$ which is the fastest temporal path from the vertex $x$ to the vertex $y$,
				\item path $P_{xy}^u$ which is the fastest temporal path from the vertex $x$ to the vertex $y$, that passes through the vertex $u$,
				\item path $P_{xy}^v$ which is the fastest temporal path from the vertex $x$ to the vertex $y$, that passes through the vertex $v$.
			\end{itemize}
			Note that $P_{xy}^*$ is either equal to the path $P_{xy}^u$ or to the path $P_{xy}^v$.
			More precisely, from the guesses performed we know the structure of the fastest path from $v_2$ through $u$, that then continues to any other vertex of interest, and any other neighbor of the vertex of interest (see~guesses \textcolor{lipicsGray}{\textsf{\textbf{G-\ref{FPT-guessFTPamongv2z2}}}} and
			\textcolor{lipicsGray}{\textsf{\textbf{G-\ref{FPT:guess-uToSegmentz2}}}}).
			This path can then be easily (uniquely) extended to start from $x=v_i$, as there is a unique (temporal) path starting at $x$ and finishing in $u$ or $v$. 
			
			Suppose now that $P_{xy}^* = P_{xy}^u$.
			Since the labels of $T_x$ are determined with respect to $\lambda(v_{i-1} x)$ we can calculate the value $c = D_{x,y} + |\lambda(v_{i-1} x) - \lambda (r x)|$.
			We then compare $c$ to $D_{r,y}$ and get one of the following three options.
			First $c = D_{r,y}$, in this case the fastest temporal path from $r$ to $y$ uses first the edge $rx$ and then continues to $y$ using the edges and vertices of $P_{xy}^*$.
			Second $c > D_{r,y}$, in this case the fastest temporal path from $r$ to $y$ uses first the edge $rx$ and then continues to $y$ using the vertices and edges of $P_{xy}^v$.
			Third $c < D_{r,y}$, in this case we stop the calculation and return false, as it cannot happen that a temporal path has a smaller duration than the corresponding value in the matrix $D$.
		\end{enumerate}
		In both cases, we introduce an equality constraint for the determined fastest temporal path and inequality constraints for all the other $k^{O(k)}$ paths.
		
		\subparagraph{\boldmath Fastest paths from $r \in Z$ to $y \in V(G) \setminus (U \cup U^*)$.}
		The proof in this case is similar to the one above.  We still split the analysis into two parts, one where the clip vertex $x$ of a tree $T_x$ that includes $r$ is in $U$ and one where it is not in $U$.
		The difference is that in some cases we need to also extend the ending part of the path (which can be done uniquely, using the same arguments as in the above analysis).
		
		Once we determine the fastest temporal path from $r$ to $y$ we introduce an equality constraint for it, and for all other $k^{O(k)}$ paths we introduce inequality constraints.
		
		The procedure produces one equality constraint (for the fastest path) and $k^{O(k)}$ inequality constraints.
		
		\subparagraph{\boldmath Fastest paths from $y \in V(G)$ to $r \in Z$.}
		The process of determining fastest temporal paths from any vertex in the graph $G$ to a vertex $r$ that is a vertex in the first layer of a tree $T_x \in G[Z \cup \{x\}]$, where $x \in V(G')$,
		is similar to the one above, but performed in the opposite direction.
		
		\subsubsection{Solving ILP instances}
		All of the above finishes our construction of ILP instances.
		We have created $f(k)$ instances (where $f$ is a double exponential function), 
		each with $O(k^2)$ variables and $O(n^2) g(k)$ constraints (again, $g$ is a double exponential function).
		We now solve each ILP instance $I$, using results from Lenstra \cite{Lenstra1983Integer},
		in the FPT time, with respect to $k$.
		If none of the ILP instances gives a positive solution, then there exists no labeling $\lambda$ of $G$ that would realize the matrix $D$ (\ie for any pair of vertices $u,v \in V(G)$ the duration of a fastest temporal path from $u$ to $v$ has to be $D_{u,v}$).
		If there is at least one $I$ that has a valid solution, 
		we use this solution and produce our labeling $\lambda$, for which $(G,\lambda)$ realizes the matrix $D$.
		We have proven in the previous subsections that this is true since each ILP instance corresponds to a specific configuration of fastest temporal paths in the graph (\ie considering all ILP instances is equivalent to exhaustively searching through all possible temporal paths between vertices).
		Besides that, in each ILP instance we add also the constraints for durations of all temporal paths between each pair of vertices.
		This results in setting the duration of a fastest path from a vertex $u \in V(G)$ to a vertex $v \in V(G)$ as $D_{u,v}$,
		and the duration of all other temporal paths from $u$ to $v$, to be greater or equal to $D_{u,v}$,
		for all pairs of vertices $u,v$.
		Therefore, if there is an instance with a positive solution, then this instance gives rise to the desired labeling, as it satisfies all of the constraints.
		For the other direction, we can observe that if there is a labeling $\lambda$ meeting all duration requirements specified by $D$, then this labeling produces a specific configuration of fastest temporal paths. Since we consider all configurations, one of the produced ILP instances will correspond to the configuration implicitly defined by $\lambda$, and hence our algorithm finds a solution.
		
		To create the labeling $\lambda$ from a solution $X$, of a positive ILP instance,
		we use the following procedure.
		First we
		label each edge $e$, that corresponds to the variable $x_e$
		by assigning the value $\lambda(e) = x_e$.
		We then continue to set the labels of all other edges. 
		We know that the labels of all of the remaining edges depend on the label of (at least one) of the edges that were determined in previous step. 
		Therefore, we easily calculate the desired labels for all remaining edges.

		\section{Conclusion}\label{sec:conclusion}
		We have introduced a natural and canonical temporal version of the graph realization problem with respect to distance requirements, called \deltaExactLong. 
		We have shown that the problem is NP-hard in general and polynomial-time solvable if the underlying graph is a tree.
		Building upon those results, we have investigated its parameterized computational complexity with respect to structural parameters of the underlying graph that measure ``tree-likeness''. For those parameters, we essentially gave a tight classification between parameters that allow for tractability (in the FPT sense) and parameters that presumably do not.
		We showed that our problem is W[1]-hard when parameterized by the feedback vertex number of the underlying graph, and that it is in FPT when parameterized by the feedback edge number of the underlying graph. Note that most other common parameters that measure tree-likeness (such as the treewidth) are smaller than the vertex cover number.
		
		We believe that our work spawns several interesting future research directions and builds a base upon which further temporal graph realization problems can be investigated.
		
		\subparagraph{Further parameterizations.} There are several structural parameters which can be considered to obtain tractability which are either larger than or incomparable to the feedback vertex number.
		\begin{itemize}
			\item The \emph{vertex cover number} measures the distance to an independent set, on which we trivially only have no-instances of our problem. We believe this is a promising parameter to obtain tractability.
			\item The \emph{tree-depth} measures ``star-likeness'' of a graph and is incomparable to both the feedback vertex number and the feedback edge number. We leave the parameterized complexity of our problem with respect to this parameter open.
			\item Parameters that measure ``path-likeness'' such as the \emph{pathwidth} or the \emph{vertex deletion distance to disjoint paths} are also natural candidates to investigate.
		\end{itemize}
		Furthermore, we can consider combining a structural parameter with $\Delta$. Our NP-hardness reduction (\cref{thm:NPhardness}) produces instances with constant $\Delta$, so as a single parameter $\Delta$ cannot yield fixed-parameter tractability. However, in our parameterized hardness reduction (\cref{thm:W1wrtFVS}) the value for $\Delta$ in the produced instance is large. This implies that our result does not rule out e.g.\ fixed-parameter tractability for the combination of the treewidth and $\Delta$ as a parameter. We believe that investigating such parameter combinations is a promising future research direction.
		
		\subparagraph{Further problem variants.}
		There are many natural variants of our problem that are well-motivated and warrant consideration. In the following, we give two specific examples. We believe that one of the most natural generalizations of our problem is to allow more than one label per edge in every $\Delta$-period. A well-motivated variant (especially from the network design perspective) of our problem would be to consider the entries of the duration matrix $D$ as upper-bounds on the duration of fastest paths rather than exact durations. This problem variant has very recently been studied by Mertzios et al.~\cite{MMS24}.  

		

	\end{document}